\documentclass[12pt, reqno]{amsart}


\makeatletter
\g@addto@macro{\endabstract}{\@setabstract}
\makeatother

\usepackage{graphics, stackrel}
\usepackage{amsmath, amssymb, amsthm}
\usepackage{graphicx}
\usepackage{verbatim}
\usepackage{amsfonts}
\usepackage{filecontents}
\usepackage{natbib}
\usepackage{fancyvrb}
\usepackage{color}

\usepackage[citecolor=blue, colorlinks=true, linkcolor=blue]{hyperref}

\usepackage{mathrsfs}
\usepackage{bbm}

\usepackage[left=1.25in, right=1.25in, top=1.0in, bottom=1.15in, includehead, includefoot]{geometry}



\usepackage{multirow}



\setlength{\parskip}{1.5ex plus0.5ex minus0.5ex}
\setlength{\parindent}{0pt}



\newcommand{\iidsim}{\stackrel {\textrm{ {\sc iid }}} {\sim} }
\newcommand{\1}{\mathbbm 1}

\newcommand*\diff{\mathop{}\!\mathrm{d}}
\renewcommand{\epsilon}{\varepsilon}

\newcommand{\cC}{\mathscr C}

\newcommand{\bB}{\mathscr B}

\newcommand{\hH}{\mathscr H}

\newcommand{\lL}{\mathscr L}

\newcommand{\fF}{\mathscr F}

\newcommand{\ZZ}{\mathsf Z}
\renewcommand{\SS}{\mathsf S}
\newcommand{\CC}{\mathsf C}
\newcommand{\DD}{\mathsf D}

\newcommand{\pP}{\mathscr P}

\newcommand{\RR}{\mathbbm R}

\newcommand{\NN}{\mathbbm N}
\newcommand{\PP}{\mathbbm P}

\newcommand{\EE}{\mathbbm E \,}

\theoremstyle{plain}

\newtheorem{theorem}{Theorem}[section]

\newtheorem{lemma}{Lemma}[section]
\newtheorem{proposition}{Proposition}[section]

\theoremstyle{definition}

\newtheorem{example}{Example}[section]

\newtheorem{assumption}{Assumption}[section]




\usepackage{caption}

\usepackage[flushleft]{threeparttable}

\newcommand{\me}{\mathrm{e}}

\usepackage{ dsfont }


\newcommand{\vertiii}[1]{{\left\vert\kern-0.25ex\left\vert\kern-0.25ex\left\vert #1 
    \right\vert\kern-0.25ex\right\vert\kern-0.25ex\right\vert}}
    
\usepackage{subcaption}
\usepackage{graphicx}  

\makeatletter
\newcommand*{\rom}[1]{\expandafter\@slowromancap\romannumeral #1@}
\makeatother

\begin{document}

\title{}

\date{\today}

\begin{center}
  \LARGE 
  The Income Fluctuation Problem with Capital Income Risk: 
  Optimality and Stability\footnote{We thank Jess Benhabib, Christopher Carroll, Fedor Iskhakov, Larry Liu, Ronald Stauber and Chung Tran for valuable feedback and suggestions, as well as audience members at the RSE  seminar at the Australian National University in 2017. \\
  \emph{Email addresses:} \texttt{qingyin.ma@anu.edu.au}, \, \texttt{john.stachurski@anu.edu.au}, \,
  \texttt{atoda@ucsd.edu}.
  }

  \bigskip
  \normalsize
  Qingyin Ma\textsuperscript{a}, \, John Stachurski\textsuperscript{b} \, and
  \, Alexis Akira Toda\textsuperscript{c} \par \bigskip

  \textsuperscript{a, b}Research School of Economics, The Australian National
  University \\
  \textsuperscript{c}Department of Economics, University of California, San Diego

  November 15, 2018
\end{center}

\begin{abstract} 
   This paper studies the income fluctuation problem with capital income risk (i.e., dispersion in the rate of return to wealth). Wealth returns and labor earnings are allowed to be serially correlated and mutually dependent. Rewards can be bounded or unbounded. Under rather general conditions, we develop a set of new results on the existence and uniqueness of solutions, stochastic stability of the model economy, as well as efficient computation of the ergodic wealth distribution. A variety of applications are discussed. Quantitative analysis shows that both stochastic volatility and mean persistence in wealth returns have  nontrivial impact on wealth inequality.
    \vspace{1em}
    
    \noindent
    \textit{Keywords:} Income fluctuation,  optimality,  stochastic stability, wealth distribution.
\end{abstract}


\section{Introduction}

The income fluctuation problem refers to the broad class of decision problems that characterize the optimal consumption-saving behavior for agents facing stochastic income streams. In most cases, agents are subject to idiosyncratic shocks and borrowing constraints. Markets are incomplete so idiosyncratic risks cannot be fully diversified or hedged. The model represents one of the fundamental workhorses of modern macroeconomics, and has been adopted to study a large variety of important topics, ranging from asset pricing, life-cycle choice, fiscal policy, social security, to income and wealth inequality, among many others. See, for example, \cite{schechtman1976income}, 
\cite{deaton1992behaviour}, 
\cite{huggett1993risk}, \cite{aiyagari1994uninsured},
\cite{carroll1997buffer}, \cite{chamberlain2000optimal}, \cite{cagetti2008wealth},  \cite{de2010elderly}, \cite{guner2011taxation}, \cite{guvenen2011macroeconomics}, \cite{meghir2011earnings}, \cite{meyer2013consumption}, \cite{guvenen2014inferring} and \cite{heathcote2014consumption}.


In recent years, researchers have come to investigate an important mechanism in the income fluctuation framework---the dispersion in rates of return to wealth, referred to below as the capital income risk. Early studies are provided by \cite{angeletos2005incomplete} and \cite{angeletos2007uninsured}. These works highlight that the macroeconomic effects of idiosyncratic capital income risk can be both qualitatively distinct from those of idiosyncratic labor income risk and quantitatively significant. 

An especially important set of applications concerns wealth inequality. As is well known in the literature, the classic income fluctuation frameworks of \cite{huggett1993risk} and \cite{aiyagari1994uninsured}, in which returns to wealth are homogeneous across agents, fail to reproduce the high inequality and the fat upper tail of wealth distributions in many economies. Such empirical failure has prompted researchers to investigate models with uninsured capital income risk. Entrepreneurial risk, a representative example of capital income risk, is studied by \cite{quadrini2000entrepreneurship} and \cite{cagetti2006entrepreneurship}. By introducing heterogeneity across agents in their work and entrepreneurial ability, these studies successfully generate skewed wealth distributions that are more similar to those observed in the U.S. data.

Moreover, in an OLG economy with intergenerational transmission of wealth, \cite{benhabib2011distribution} show that capital income risk is the driving force of the heavy-tail properties of the stationary wealth distribution. In a Blanchard-Yaari style economy, \cite{benhabib2016distribution} show that idiosyncratic investment risk has a big impact on generating a double Pareto stationary wealth distribution. 
In another important contribution, \cite{gabaix2016dynamics} point out that a positive correlation of returns with wealth (``scale dependence") in addition to persistent heterogeneity in returns (``type dependence") can well explain the speed of changes in the tail inequality observed in the data.
An important work that is highly pertinent to the present paper is \cite{benhabib2015wealth}.
In a stylized infinite horizon income fluctuation problem with capital income risk, the authors prove that there exists a unique stationary wealth distribution that displays fat tail. 
 
On the empirical side, using twelve years of population data from Norway's administrative tax records, \cite{fagereng2016heterogeneity, fagereng2016heterogeneityb} document that individuals earn markedly different average returns to both their financial assets (a standard deviation of $14\%$) and net worth (a standard deviation of $8\%$). Wealth returns are heterogeneous both within and across asset classes. Returns are positively correlated with the wealth level and highly persistent over time. In addition, wealth returns are (mildly) correlated across generations.

Although theoretical, empirical and quantitative studies all reveal the significant economic impact of capital income risk, existing models of capital income risk in the income fluctuation framework are highly stylized. For example, the assumptions of {\sc iid} labor income process, {\sc iid} wealth return process and their mutual independence made by \cite{benhabib2015wealth} are rejected by the empirical data in several economies (see, e.g., \cite{kaplan2010much}, \cite{guvenen2010inferring} and \cite{fagereng2016heterogeneity, fagereng2016heterogeneityb}). As \cite{benhabib2015wealth} point out, adding positive correlations in labor earnings and wealth returns enriches model dynamics in that it captures economic environments with limited social mobility.

To our best knowledge, a general theory of capital income risk in the income fluctuation framework has been missing in the literature. This raises concerns about whether or not existing views on the economic impact of capital income risk hold in general, as well as whether or not modeling capital income risk in more generic and realistic settings is technically achievable. To be specific, several important questions are: 
\begin{itemize}
    \item Do correlations in the wealth return process (e.g., those caused by mean persistence or stochastic volatility of wealth returns) enhance or dampen the macroeconomic impact of capital income risk?
    
    \item What if, in addition to serial correlation, the wealth return process and the labor earnings process are mutually correlated?
    
    \item Does an optimal policy always exist in these generalized settings? If it does, is it unique?
    
    \item Does the stochastic law of motion for optimal wealth accumulation yield a stationary distribution of wealth?
    
    \item If it does, is the model economy globally stable, in the sense that the stationary distribution is unique and can be approached by the distributional path from any starting point?
    
    \item How do we compute the optimal policy and the stationary wealth distribution in practice?
\end{itemize}
These questions are highly significant, in the sense that a negative answer to any of them will pose a threat to the existing findings concerning capital income risk. However, due to technical limitations, these questions have not been investigated in a general income fluctuation framework. In this paper, we attempt to fill this gap. To this end, we extend the standard income fluctuation problem by characterizing the following essential features.
\begin{itemize}
    \item Agents face idiosyncratic rate of return to wealth $\{ R_t\}$ (capital 
    income risk) and idiosyncratic labor earnings $\{Y_t\}$ (labor income risk), both of which 
    are affected by a generic, exogenous Markov process $\{ z_t \}$.
    
    \item Supports of $\{ R_t\}$ and $\{ Y_t\}$ are bounded or unbounded, and, in 
    either case, allowed to contain zero.
    
    \item The reward (utility) function is  bounded or unbounded, and no specific structure is imposed beyond differentiability, concavity and the usual slope conditions. 
\end{itemize}
As can be seen, general $\{ R_t\}$ and $\{Y_t\}$ processes that are serially correlated and mutually dependent are covered by our framework. Moreover, consumption can become either arbitrarily small or arbitrarily large, so that agents are allowed to borrow up to the highest sustainable level of debt, creating rich and substantial model dynamics reflecting agents' borrowing activity.\footnote{See the discussion of \cite{rabault2002borrowing}.} 


We make several tightly connected contributions on optimality, stochastic stability  and computation of this generalized income fluctuation problem. 

First, we prove that the Coleman operator adapted to this framework is indeed an ``$n$-step'' contraction mapping in a complete metric space of candidate consumption policies, even when rewards are unbounded. The unique fixed point is shown to be the optimal policy (also unique in the candidate space), and several important properties (e.g., continuity and monotonicity) are derived. To tackle unboundedness, we draw on and extend \cite{li2014solving} by adding capital income risk and constructing a metric that evaluates consumption differences in terms of marginal utility.  To obtain contractions under a minimal level of restriction, we focus our key assumption on bounding the \textit{long-run growth rate} of wealth returns. 

We show that this assumption is indeed equivalent to bounding the spectral radius of an expected wealth return operator (a bounded linear operator) by $1 / \beta$. As a result, it is similar to the assumptions made by recent literature regarding the operator theoretic method, which have been proven both necessary and sufficient for the existence and uniqueness of solutions in a variety of models (see, e.g., \cite{hansen2009long, hansen2012recursive}, \cite{borovivcka2017necessary} and \cite{toda2018wealth}). Our assumption is easy to verify numerically. For example, when the state space for the exogenous Markov process $\{z_t\}$ is finite, verifying this assumption is as convenient as finding the largest modulus of the set of eigenvalues for a given matrix.

Second, as our most significant contribution, we show that the model economy is globally stable, even in the presence of capital income risk. Specifically, there exists a unique stationary distribution for the state process (including wealth and the exogenous Markov state), and, given any initial state, the distributional path of the state process generated via optimal consumption and wealth accumulation converges to the stationary distribution as time iterates forward. The idea of proof goes as follows. Based on the optimality results established in the previous step, existence of a stationary distribution is guaranteed under some further restrictions on agents' level of patience, plus some mild assumptions on the stochastic properties of the exogenous state and the labor income processes. The key is to show that the wealth process is bounded in probability.

The proof of global stability is more tricky and separated into two scenarios. 

(Scenario \rom{1}) When the exogenous state process $\{z_t\}$ is independent and identically distributed, so are $\{R_t\}$ and $\{ Y_t\}$, and wealth is the only state variable remaining. We show that, with some additional concavity structure imposed, the model economy is monotone, allowing us to use some new results in the field of stochastic stability (due to \cite{kamihigashi2014stochastic, kamihigashi2016seeking}). Based on these results, both global stability and the Law of Large Numbers are established. In this case, convergence of the distributional path to its stationarity is in the form of weak convergence. Moreover, the added concavity assumption holds for standard utilities such as CRRA or the logarithm utility. Notably, even in the current case, our theory extends the stability theory of \cite{benhabib2015wealth}, since we allow $\{R_t\}$ and $\{ Y_t\}$ to be dependent on each other (a more detailed comparison is given below). 

(Scenario \rom{2}) When the exogenous state process $\{z_t\}$ is Markovian, $\{ R_t\}$ and $\{Y_t\}$ are in general autocorrelated and mutually dependent, and the structure of monotone economy is lost due to the added exogenous state. As a result, the order theoretic approach used in the previous case is no longer applicable. In response to that, we aim to exploit the traditional theory of stochastic stability (see, e.g., \cite{meyn2009markov}). Specifically, we provide sufficient conditions for the state process to be $\psi$-irreducible, strongly aperiodic and a positive Harris chain, which in turn guarantee global stability and the Law of Large Numbers. Convergence here is in total variation norm distance, which is stronger than weak convergence. Our sufficient conditions are easy to verify in applications, and centered around  existence of density representations for the exogenous state process and the labor earnings process. We only require that supports of the two densities contain respectively a nontrivial compact subset and a certain ``small" interval. Importantly, no further concavity structure is required.

Moreover, we show in this scenario that if we add the same concavity structure as we do in scenario \rom{1} and some other mild assumptions (e.g., existence of densities for the wealth return process and geometric drift property of the labor earnings process), then the model economy is indeed $V$-geometrically ergodic. As a result, convergence to the stationary distribution occurs at a geometric speed.  

Since an {\sc iid} process is a special Markov process, as a byproduct, the theory in scenario \rom{2} serves as an alternative stability theory when the exogenous state process is {\sc iid}. As can be seen from the discussion above, neither of the two theories is ``stronger" than the other in this circumstance. On the one hand, global stability in scenario \rom{1} is established under an additional concavity assumption, which is not required for global stability in scenario \rom{2}. On the other hand, we make no assumptions on the density structure of the key stochastic processes in scenario \rom{1} as we do in scenario \rom{2}.

Based on the established stability and ergodicity results, the unique stationary distribution can be approximated via tracking a single state process simulated according to the optimal consumption and wealth accumulation rules, which is highly efficient. The real caveat is that, in presence of capital income risk, there can be very large realized values of wealth (and  consumption), causing serious problems to numerical computation of the optimal policy.  However, this problem is alleviated in our setting. We show that, under our maintained assumptions, the optimal policy is concave and asymptotically linear with respect to the wealth level. Hence, at large levels of wealth, the optimal consumption rule can be well approximated via linear extrapolation. 

We provide several important applications. First, we illustrate how our theory can be applied to modeling capital income risk in different scenarios. Then, we provide a numerical example in which we explore the quantitative effect of stochastic volatility and mean persistence of the wealth return process on wealth inequality. In the calibrated economy, our quantitative analysis shows that both these two factors lead to lower tail exponents of the stationary wealth distribution and higher Gini coefficients, and thus a higher level of wealth inequality.

In terms of connections to the existing literature, the most closely related results are those found in the recent paper \cite{benhabib2015wealth}. Like us, the authors study capital income risk in an income fluctuation framework. On the one hand,  their paper proves an important theoretical result---the stationary wealth distribution has a fat tail, a topic not treated by the present paper (tail properties are only studied by us numerically). 

On the other hand, our theory of optimality and stochastic stability is considerably sharper and covers a much broader range of applications. Specifically, to avoid technical complication, \cite{benhabib2015wealth} assume that $\{R_t\}$ and $\{ Y_t\}$ are {\sc iid}, mutually independent, supported on bounded closed intervals with strictly positive lower bounds, and that their distributions are represented by densities. Albeit helpful for simplifying analysis and deriving tail properties, these assumptions rule out important features observed in the real economy (e.g., mean persistence and stochastic volatility in the empirical labor earnings and wealth return processes, as discussed). Moreover, the strictly positive lower bound for $\{ Y_t\}$ prevents agents from borrowing up to the highest sustainable level of debt, hiding substantial model dynamics.\footnote{As discussed in
    \cite{rabault2002borrowing}, in this case, agents are guaranteed a strictly 
    positive minimum level of consumption, so the marginal value of consumption is 
    bounded, and the problem can be easily solved by constructing 
    supremum norm contractions. However, relaxing this assumption allows agents 
    to systematically avoid exhausting their borrowing capacity. }
As described above, all these assumptions are relaxed in our framework.

Regarding earlier literature, specific types of capital income risk are modeled by \cite{quadrini2000entrepreneurship}, \cite{angeletos2005incomplete}, \cite{cagetti2006entrepreneurship} and \cite{angeletos2007uninsured} in general equilibrium frameworks. In comparison, the present paper focuses on constructing a ``more general" one-sector framework and deriving sharper theoretical results, which, of course, could potentially benefit ``more general" general equilibrium analysis.

Moreover, since we tackle unbounded rewards and the associated technical complication, our paper is also related to \cite{rabault2002borrowing}, \cite{carroll2004theoretical}, \cite{kuhn2013recursive} and \cite{li2014solving}. These works develop different methods to handle the issue of unboundedness in standard income fluctuation problems (ones without capital income risk). While \cite{carroll2004theoretical} constructs a weighted supremum norm contraction and works with the Bellman operator, the other three works focus on the Coleman operator. In particular, \cite{rabault2002borrowing} exploits the monotonicity structure, \cite{kuhn2013recursive} applies a version of the Tarski's fixed point theorem, while \cite{li2014solving} constructs a contraction mapping based on a metric that evaluates consumption differences in marginal values. As discussed above, the present paper draws on and extends \cite{li2014solving} by incorporating capital income risk. 


The rest of this paper is structured as follows. Section \ref{s:setup} formulates the problem. Section \ref{s:opt_results} establishes optimality results. Sufficient conditions for the existence and uniqueness of optimal policies are discussed. Section \ref{s:sto_stability} focuses on stochastic stability. Global stability and some further properties are studied. Section \ref{s:app} provides a set of applications. All proofs are deferred to the appendix.

\section{Set up}
\label{s:setup}

This section sets up the income fluctuation problem to be studied. As a first step, we introduce some mathematical techniques and notation used in this paper.

\subsection{Preliminaries}

\label{ss:prelm}




Let $\NN$, $\RR$ and $\RR_+$ be the natural, real and nonnegative real numbers 
respectively. Given topological space $\SS$, let $\bB(\SS)$ be the Borel 
$\sigma$-algebra and let $\pP(\SS)$ be the set of probability measures on 
$\bB(\SS)$. A \emph{stochastic kernel} $Q$ on $\SS$ is a map 
$Q \colon \SS \times \bB(\SS) \to [0, 1]$ such that 
\begin{itemize}
    \item $x \mapsto Q(x, B)$ is $\bB(\SS)$-measurable for each $B \in \bB(\SS)$ 
    and 
    \item $B \mapsto Q(x, B)$ is a probability measure on $\bB(\SS)$ for each $x \in \SS$.  
\end{itemize}
Let $bc \SS$ be the set of bounded continuous functions on $\SS$. A stochastic kernel $Q$ is called \emph{Feller} if $x \mapsto \int h(y) Q(x, \diff y)$ is in $bc\SS$ whenever 
$h \in bc\SS$.

For all $t \in \NN$, we define the \emph{$t$-th order kernel} as
\begin{equation*}
	Q^1 := Q, \quad
	Q^t (x, B) := \int Q^{t-1}(y, B) Q (x, \diff y) \quad
	(x \in \SS, B \in \bB(\SS)).
\end{equation*} 
The value $Q^t (x,B)$ represents the probability of transitioning from 
$x$ to $B$ in $t$ steps.
Furthermore, for all $\mu \in \pP(\SS)$, we define $\mu Q^t \in \pP(\SS)$ as 
\begin{equation*}
  (\mu Q^t)(B) := \int Q^t (x, B) \mu(\diff x)  \qquad (B \in \bB(\SS)).
\end{equation*}
A sequence $\{ \mu_n \} \subset \pP(\SS)$ is called \emph{tight}, if, for all 
$\epsilon>0$, there exists a compact $K \subset \SS$ such that 
$\mu_n(\SS \backslash K) \leq \epsilon$ for all $n$. 
We say that \emph{$\mu_n$ converges to $\mu$ weakly} and write 
$\mu_n \stackrel{w}{\rightarrow} \mu$ if $\mu \in \pP(\SS)$ and
$\int h \diff \mu_n \rightarrow \int h \diff \mu$ for all bounded 
continuous $h \colon \SS \rightarrow \RR$.

A stochastic kernel $Q$ is called \emph{bounded in probability} if the
sequence $\{ Q^t(x, \cdot)\}_{t \geq 0}$ is tight for all $x \in \SS$.  We call
$\psi \in \pP(\SS)$ \emph{stationary} for $Q$ if $\psi Q = \psi$.  We say that
$Q$ is \emph{globally stable} if there exists a unique stationary distribution
$\psi$ in $\pP(\SS)$ and $\psi_0 Q^t \stackrel{w}{\to} \psi$ for all $\psi_0
\in \pP(\SS)$.

Let $K$ be a bounded linear operator from $bc \SS$ to itself and $\| \cdot \|$ be the supremum norm on $bc\SS$. The \textit{operator norm} and \textit{spectral radius} of $K$ are defined by
\begin{equation*}
    \| K \| := \sup \{ \|Kg\|: g \in bc\SS, \; \|g \| \leq 1 \}
    \quad \text{and} \quad
    r(K) := \lim_{m \to \infty} \|K^m \|^{1 / m}.
\end{equation*}
In particular, when $\SS$ is finite, $K$ becomes a square matrix, and the spectral radius $r(K)$ reduces to $\max_\lambda |\lambda|$, where $\lambda$ ranges over the set of eigenvalues of $K$. (See, e.g., page~663 of \cite{guide2006infinite}).

In what follows, $(\Omega, \fF, \PP)$ is a fixed probability space
on which all random variables are defined, while $\EE$ is 
expectations with respect to $\PP$.

\subsection{The income fluctuation problem}

We introduce capital income risk and consider a generalized income fluctuation problem as follows
\begin{align}
    \label{eq:trans_at}
    & \max \, \EE \left\{ 
                \sum_{t \geq 0} \beta^t u(c_t)
             \right\}    \nonumber \\
  \text{s.t.} \quad  
    & a_{t+1} = R_{t+1} (a_t - c_t) + Y_{t+1},  \\
    & 0 \leq   \; c_t \leq a_t, \quad (a_0, z_0)=(a,z) \text{ given} \nonumber,
\end{align}
where $\beta \in [0,1)$ is a state-independent discount factor, $u$ is the 
utility function, the control process $\{ c_t\}_{t \geq 0}$ is consumption,
$\{R_{t}\}_{t \geq 1}$ is a gross rate of return on wealth and 
$\{Y_{t} \}_{t \geq 1}$ is labor income. The return and income
processes obey
\begin{align}
\label{eq:RY_func}
  R_{t} &= R \left( z_{t}, \zeta_{t} \right),
      \quad  
      \left\{ \zeta_{t} \right\}_{t \geq 1} \iidsim \nu,  
      \nonumber  \\
  Y_{t} &= Y \left( z_{t}, \eta_{t} \right),
      \quad 
      \left\{ \eta_{t} \right\}_{t \geq 1} \iidsim \mu,
\end{align}
where $R$ and $Y$ are nonnegative real-valued measurable functions, $\{ \zeta_t\}$ and $\{ \eta_t\}$ are innovations, and
$\{z_t\}_{t \geq 0}$ is a time-homogeneous $\ZZ$-valued Markov process with 
Feller stochastic kernel $P$, where $\ZZ$ is a Borel subset of $\RR^m$ paired 
with the usual relative topology. 

Throughout we make the following assumption on the agent's utility.

\begin{assumption}
\label{a:utility}
    The utility function $u \colon \RR_+ \rightarrow \{ - \infty \} \cup \RR$
    is twice differentiable on $(0, \infty)$ and satisfies
    \begin{enumerate}
      \item $u' > 0$ and $u'' < 0$ everywhere on $(0, \infty)$, and
      \item $u'(c) \rightarrow \infty$ as $c \rightarrow 0$ and 
          $u'(c) \rightarrow 0$ as $c \rightarrow \infty$.
    \end{enumerate}
\end{assumption}

\begin{example}
A typical example that meets assumption \ref{a:utility} is the CRRA utility
\begin{equation}
\label{eq:crra_utils}
    u(c) = c^{1 - \gamma} / (1 - \gamma) 
    \quad \text{if } \gamma > 0, \, \gamma \neq 1
    \quad \text{and} \quad
    u(c) = \log c 
    \quad \text{if } \gamma = 1,
\end{equation}
where $\gamma > 0$ is the coefficient of relative risk aversion.
\end{example}

\subsection{Further notation}
\label{ss:nota}

We use $x$ and $\hat{x}$ to denote respectively the current and next period random variables.
In addition,
\begin{equation}
\label{eq:nota1}
  \EE_{a,z} 
      := \EE \left[ \,\cdot \, \big| \, (a_0,z_0) = (a, z) \right]
  \quad \text{and} \quad
  \EE_z 
      := \EE \left[ \, \cdot \, \big| \, z_0 = z \right].
\end{equation}
In particular, for any integrable function $f$, 
\begin{equation}
\label{eq:nota2}
    \EE_z \, f (\hat{z}, \hat{R}, \hat{Y})
  = \int f \left[
               \hat{z},
               R ( \hat{z}, \hat{\zeta} ), 
               Y \left( \hat{z}, \hat{\eta} \right)
            \right]
     P(z, \diff \hat{z})
     \nu(\diff \hat{\zeta})
     \mu(\diff \hat{\eta}).
\end{equation}

\section{Optimality Results}
\label{s:opt_results}

In this section, we show that, with bounded or unbounded rewards, the Coleman operator adapted to the income fluctuation problem above is an $n$-step contraction mapping on a complete metric space of candidate policies, and that the unique fixed point is the optimal policy. To that end, we make the following assumptions.

\begin{assumption}
	\label{a:ctra_coef}
	There exists $n \in \NN$ such that $\theta := \beta^n \sup_{z \in \ZZ} \EE_z R_1 \cdots R_n < 1$.
\end{assumption}

\begin{assumption}
    \label{a:Y_sum}
    For all $z \in \ZZ$, we have 
    $\sum_{t=1}^{\infty} \beta^t \EE_z Y_t < \infty$.
\end{assumption}

\begin{assumption}
    \label{a:bd_sup_ereuprm}
    $\sup_{z \in \ZZ} \EE_z \, \hat{R} < \infty$, \, $\sup_{z \in \ZZ} \EE_z u'(\hat{Y}) < \infty$ and $\sup_{z \in \ZZ} \EE_z \hat{R} u' (\hat{Y}) < \infty$.
\end{assumption}

\begin{assumption}
    \label{a:conti_ereuprm}
    The functions 
    $z \mapsto R(z, \zeta)$, $z \mapsto Y(z, \eta)$,
    $z \mapsto \EE_z \hat{R}$ and 
    $z \mapsto \EE_z \hat{R} \, u'(\hat{Y})$ are continuous.
\end{assumption}

\begin{example}
	\label{ex:spec_rad_ctra}
	For all bounded continuous function $f$ on $\ZZ$, define
	\begin{equation*}
	    K f (z) := \EE_z \hat{R} f(\hat{z}),
	    \quad z \in \ZZ.
	\end{equation*}
	Then $K$ is a bounded linear operator by assumption \ref{a:bd_sup_ereuprm}. 
	Let $r(K)$ be the spectral radius of $K$. Then assumption \ref{a:ctra_coef} 
	holds if and only if $\beta r(K) < 1$. We prove this result in the appendix. 
\end{example}

\begin{example}
	\label{ex:spec_rad_matrix}
	Let $\{z_t\}$ be a finite-state Markov chain on 
	$\ZZ := \{ i_1, \cdots, i_N \}$ with transition matrix $\Pi$ (a ``discrete'' 
	stochastic kernel). Let $\text{diag} (\cdot)$ denote the diagonal matrix 
	generated by elements in the bracket, and, with slight abuse of notation, let
	\begin{equation*}
	    \EE R(z, \zeta) := \int R(z, \zeta) \nu (\diff \zeta)
	    \quad \text{and} \quad
	    D := \text{diag} \left( \EE R(i_1, \zeta), \cdots , \EE R(i_N, \zeta) \right).       
	\end{equation*}
	In this case, the operator $K$ in example \ref{ex:spec_rad_ctra} reduces to the matrix $K = \Pi D$. Therefore, assumption \ref{a:ctra_coef} holds if and only if $r(\Pi D) < 1 / \beta$. In particular, $r(\Pi D)$ equals the largest modulus of all the eigenvalues of $\Pi D$.
\end{example}

\begin{example}
	\label{ex:CS_suff}
	Based on the H{\"o}lder's inequality, to show assumption  \ref{a:bd_sup_ereuprm}, 
	it suffices to find some $p, q \in [1, \infty]$ such that $1/p + 1/q = 1$ and 
	\begin{equation*}
		\sup_{z \in \ZZ} \EE_z \, \hat{R}^p  < \infty
		\quad \text{and} \quad
		\sup_{z \in \ZZ} \EE_z u' ( \hat{Y} )^q < \infty.
	\end{equation*}
\end{example}

To establish the required results, we (temporarily) assume $a_0>0$ 
and set the asset space as $(0, \infty)$. The state space for the state process
$\{(a_t, z_t) \}_{t \geq 0}$ is then\footnote{Note that the second condition of 
    assumption \ref{a:utility} and assumption \ref{a:bd_sup_ereuprm} imply that 
    $\PP \{ Y_t > 0 \}=1$ for all $t \geq 1$ (although $Y_t$ is allowed to be 
    arbitrarilly close to zero). Hence, $\PP \{ a_t > 0 \} = 1$ for all $t \geq 1$ 
    by the law of motion \eqref{eq:trans_at}. It thus makes no difference 
    to optimality to exclude zero from the asset space. Doing 
    this simplifies analysis since $u$ and $u'$ are finite away from 
    zero. It actually allows us to propose a useful metric and apply the 
    contraction approach, as to be shown later.}
\begin{equation*}
\label{eq:S0}
  \SS_0:= (0, \infty) \times \ZZ \ni (a,z).
\end{equation*}
Consider the maximal asset path $\{ \tilde{a}_t \}$ defined by 
\begin{equation}
\label{eq:max_path}
    \tilde{a}_{t+1} = R_{t+1} \, \tilde{a}_t + Y_{t+1}
    \quad \text{and}
    \quad (\tilde{a}_0, \tilde{z}_0) = (a,z) \; \text{given}.
\end{equation}

\begin{lemma}
\label{lm:max_path}
If assumptions \ref{a:ctra_coef}--\ref{a:Y_sum} hold,
then $\sum_{t \geq 0} \beta^t \EE_{a,z} \, \tilde{a}_t$ is finite 
for all $(a,z) \in \SS_0$. 
\end{lemma}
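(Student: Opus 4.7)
My plan is to unroll the linear recursion for $\tilde a_t$, apply the Markov property together with the iid structure of the innovation sequences to reduce the relevant conditional expectations to quantities of the form $\phi_k(z) := \EE_z\prod_{j=1}^k R_j$, and then exploit Assumption \ref{a:ctra_coef} to secure geometric decay of $\beta^k \sup_z \phi_k(z)$. Iterating $\tilde a_{t+1} = R_{t+1}\tilde a_t + Y_{t+1}$ from $\tilde a_0 = a$ yields, for each $t\ge 1$,
\begin{equation*}
    \tilde a_t \;=\; a\prod_{s=1}^{t} R_s \;+\; \sum_{s=1}^{t} Y_s \prod_{j=s+1}^{t} R_j,
\end{equation*}
with the convention that empty products equal one. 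Conditioning on the history through time $s$ and using the Markov property of $\{z_t\}$ together with the iid structure of $\{\zeta_t\}$ and $\{\eta_t\}$, one obtains $\EE_z\bigl[Y_s \prod_{j=s+1}^{t} R_j\bigr] = \EE_z\bigl[Y_s\,\phi_{t-s}(z_s)\bigr] \le \bigl(\sup_z \phi_{t-s}(z)\bigr)\,\EE_z Y_s$ and $\EE_z\prod_{s=1}^{t}R_s = \phi_t(z)$.

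Multiplying by $\beta^t$, summing over $t\ge 0$, and swapping the order of summation by Tonelli (all summands are nonnegative) then produces the master bound
\begin{equation*}
    \sum_{t=0}^{\infty} \beta^t \EE_{a,z}\tilde a_t \;\le\; \Bigl(\,a + \sum_{s=1}^{\infty} \beta^s \EE_z Y_s\,\Bigr) \sum_{k=0}^{\infty} \beta^k \sup_{z\in\ZZ} \phi_k(z).
\end{equation*}
The bracketed factor is finite for every $z\in\ZZ$ by Assumption \ref{a:Y_sum}, so everything reduces to showing that $S := \sum_{k\ge 0}\beta^k \sup_z \phi_k(z) < \infty$.

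The hard part will be establishing this last summability, since Assumption \ref{a:ctra_coef} only provides control at the single step $n$. The key tool is a submultiplicative estimate obtained by applying the Markov property once more: $\phi_{k+\ell}(z) = \EE_z[R_1\cdots R_k\,\phi_\ell(z_k)] \le (\sup_z \phi_\ell)\,\phi_k(z)$, whence $\sup_z \phi_{k+\ell}(z) \le (\sup_z \phi_k)(\sup_z \phi_\ell)$. Writing any $k$ as $k = qn + r$ with $0\le r < n$ and iterating,
\begin{equation*}
    \beta^k \sup_z \phi_k(z) \;\le\; \theta^q \cdot \beta^r \sup_z \phi_r(z), \qquad \theta := \beta^n \sup_z \phi_n(z) < 1,
\end{equation*}
by Assumption \ref{a:ctra_coef}. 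Summing the resulting geometric series over $q$ and then over $r \in \{0,1,\ldots,n-1\}$ yields $S \le (1-\theta)^{-1} \sum_{r=0}^{n-1} \beta^r \sup_z \phi_r(z)$. Finiteness of the small-$r$ factors follows from the same submultiplicative bound combined with Assumption \ref{a:bd_sup_ereuprm} (in force throughout Section \ref{s:opt_results}), since $\sup_z \phi_r(z) \le (\sup_z \EE_z \hat R)^r < \infty$ for $r < n$. Putting the pieces together gives the finiteness of the master bound for every $(a,z)\in\SS_0$, completing the proof.
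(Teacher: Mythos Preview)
Your proof is correct and follows essentially the same approach as the paper's. Both arguments unroll the linear recursion, use the Markov property to factor $\EE_z[Y_s\prod_{j=s+1}^t R_j]$ through $\phi_{t-s}(z_s)$, establish submultiplicativity of $\sup_z\phi_k$, decompose $k=qn+r$ to extract $\theta^q$ from Assumption~\ref{a:ctra_coef}, and invoke Assumption~\ref{a:bd_sup_ereuprm} (which, as you correctly note, is in force throughout the section even though the lemma statement lists only Assumptions~\ref{a:ctra_coef}--\ref{a:Y_sum}) to bound the residual factors $\sup_z\phi_r$ for $r<n$; the only cosmetic difference is that the paper organizes the computation around $\sup_z\EE_z\sum_{i\ge 0}\beta^i\prod_{k=1}^i R_k$ while you first take $\sup_z$ termwise and then sum.
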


A \emph{feasible policy} is a Borel measurable function 
$c \colon \SS_0 \rightarrow \RR$ with $0 \leq c(a,z) \leq a$ for all 
$(a,z) \in \SS_0$. Given any feasible policy $c$ and initial condition 
$(a,z) \in \SS_0$, the \emph{asset path} generated by $(c, (a,z))$ is the 
sequence $\{ a_t\}_{t \geq 0}$ in \eqref{eq:trans_at} when 
$c_t = c (a_t, z_t)$ and $(a_0, z_0) = (a,z)$. The \emph{lifetime value} of 
any feasible policy $c$ is the function 
$V_c \colon \SS_0 \rightarrow \{ - \infty \} \cup \RR$ defined by 
\begin{equation*}
  V_c (a,z) = \EE_{a,z} 
               \left\{ 
                  \sum_{t \geq 0} \beta^t u \left[ c (a_t, z_t) \right]
               \right\}, 
\end{equation*}
where $\{ a_t\}$ is the asset 
path generated by $(c,(a,z))$. Notice that $V_c(a,z) < \infty$ for any 
feasible $c$ and any $(a,z) \in \SS_0$. This is because, by assumption 
\ref{a:utility}, there exists a constant $L$ such that $u(c) \leq c + L$, 
and hence 
\begin{equation*}
  V_c(a,z) \leq \EE_{a,z} \sum_{t \geq 0} \beta^t u(a_t)
           \leq \EE_{a,z} \sum_{t \geq 0} \beta^t u(\tilde{a}_t)
           \leq \sum_{t \geq 0} \beta^t \EE_{a,z} \, \tilde{a}_t
                + \frac{L}{1 - \beta}.
\end{equation*}
The last expression is finite by lemma \ref{lm:max_path}.

A feasible policy $c^*$ is called \emph{optimal} if $V_c \leq V_{c^*}$ on
$\SS_0$ for any feasible policy $c$. In the present setting, the finiteness of 
$V_c$ for each feasible policy, the strict concavity of $u$, and the 
convexity of the set of feasible policies from each $(a,z) \in \SS_0$ imply 
that for each given parameterization, at most one optimal policy exists.

A feasible policy is said to satisfy the \textit{first order optimality 
conditions} if 
\begin{equation}
\left( u'\circ c \right)(a,z) \geq 
   \beta \, \EE_{z} \, \hat{R} 
              \left( u' \circ c \right)
              \left( 
                  \hat{R} \left[ a - c(a,z) \right] + \hat{Y}, 
                  \, \hat{z} 
              \right)
\end{equation}
for all $(a,z) \in \SS_0$, and equality holds when 
$c(a,z) < a$. Moreover, a feasible policy is said to satisfy the 
\textit{transversality condition} if, for all $(a, z) \in \SS_0$,
\begin{equation}
\label{eq:tvc}
  \lim_{t \rightarrow \infty}
    \beta^t       
    \EE_{a,z} 
        \left[ 
            \left(u' \circ c \right)(a_t, z_t) \, a_t  
        \right] = 0.
\end{equation}

\begin{theorem}
\label{t:opt_result}
If assumptions \ref{a:utility} and \ref{a:ctra_coef}--\ref{a:Y_sum} hold, and $c$ is a feasible policy that satisfies both the first order optimality 
conditions and the transversality condition, then $c$ is an optimal policy.
\end{theorem}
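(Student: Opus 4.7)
The plan is to carry out the classical dynamic-programming sufficiency argument, adapted to the two shocks $R_{t+1}$ and $Y_{t+1}$. Fix $(a,z)\in\SS_0$, let $c$ satisfy the first-order and transversality conditions, and let $c'$ be any other feasible policy, inducing asset paths $\{a_t\},\{a'_t\}$ and savings $s_t:=a_t-c_t$, $s'_t:=a'_t-c'_t$. Set $\Delta a_t:=a_t-a'_t$ and $\Delta s_t:=s_t-s'_t$. Because the labor shock $Y_{t+1}$ cancels in \eqref{eq:trans_at}, one has $\Delta a_0=0$ and $\Delta a_{t+1}=R_{t+1}\Delta s_t$. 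I may assume $V_{c'}(a,z)>-\infty$, as otherwise the claim is immediate.

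Strict concavity of $u$ gives $u(c_t)-u(c'_t)\ge u'(c_t)(c_t-c'_t)=u'(c_t)(\Delta a_t-\Delta s_t)$. For each horizon $T$, reindexing via $\Delta a_0=0$ and $\Delta a_{t+1}=R_{t+1}\Delta s_t$ yields the Abel-type identity
\begin{equation*}
\sum_{t=0}^{T}\beta^t u'(c_t)(c_t-c'_t)
=\sum_{t=0}^{T-1}\beta^t\bigl[\beta R_{t+1}u'(c_{t+1})-u'(c_t)\bigr]\Delta s_t-\beta^T u'(c_T)\Delta s_T.
\end{equation*}
The first-order condition yields, conditional on time-$t$ information, $\beta\,\EE[R_{t+1}u'(c_{t+1})\mid a_t,z_t]-u'(c_t)\le 0$, with equality whenever $s_t>0$; on $\{s_t=0\}$, $\Delta s_t=-s'_t\le 0$. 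Taking iterated expectations and using that $\Delta s_t$ is measurable at time $t$, every term of the first (Euler) sum has nonnegative expectation under $\EE_{a,z}$. For the boundary term, $\Delta s_T\le s_T\le a_T$ and $u'\ge 0$ give $\EE_{a,z}\beta^T u'(c_T)\Delta s_T\le \EE_{a,z}\beta^T u'(c_T) a_T\to 0$ by \eqref{eq:tvc}, so $\limsup_T \EE_{a,z}\beta^T u'(c_T)\Delta s_T\le 0$. Taking expectations of the concavity inequality, summing, and passing to the $\liminf$ in $T$ therefore delivers $V_c(a,z)\ge V_{c'}(a,z)$.

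The principal technical obstacle is the passage to the infinite sum when $u$ is unbounded below. I would split each increment $u(c_t)-u(c'_t)$ into positive and negative parts: the linear bound $u(c)\le c+L$ established just before the theorem, combined with Lemma~\ref{lm:max_path}, dominates the positive side and permits dominated convergence; on the negative side, the reduction $V_{c'}>-\infty$ gives $\sum_t\beta^t\EE_{a,z}[u(c'_t)]^-<\infty$, and the analogous sum for $c$ is then controlled via the partial-sum inequality above, so monotone convergence applies symmetrically. A secondary obstacle is the sign analysis of the Euler-error terms, which requires being careful to apply the Markov first-order condition under the correct conditioning and to exploit complementary slackness on the binding set $\{s_t=0\}$. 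Once these bookkeeping issues are handled, the proof is a direct assembly of concavity, the Euler inequalities, and the transversality condition.
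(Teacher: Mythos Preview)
Your proposal is correct and follows essentially the same approach the paper relies on. The paper omits its own proof, stating only that it ``extends theorem~1 of \cite{benhabib2015wealth} and theorem~3.1 of \cite{li2014solving}'' with a similar argument; those referenced proofs proceed exactly by the concavity bound, the Abel-type telescoping of $u'(c_t)(c_t-c'_t)$ against the Euler residuals, complementary slackness on the binding set $\{s_t=0\}$, and the transversality condition to kill the boundary term, precisely as you outline. Your identification of the two technical issues---justifying the interchange of limit and expectation when $u$ is unbounded below, and the integrability needed for the tower-property step on the Euler terms---is accurate, and your proposed resolutions (the linear upper bound $u(c)\le c+L$ plus Lemma~\ref{lm:max_path} for positive parts; reducing to $V_{c'}>-\infty$ and then bootstrapping via the partial-sum inequality for the negative parts of $c$) are the standard ones.
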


When does an optimal policy exist, and how can we compute it? To answer these 
questions, following \cite{li2014solving}, we use a contraction argument, where the 
underlying function space is set to $\cC$, the functions 
$c \colon \SS_0 \rightarrow \RR$ such that
\begin{enumerate}
  \item $c$ is continuous,
  \item $c$ is increasing in the first argument,
  \item $0 < c(a,z) \leq a$ for all $(a,z) \in \SS_0$, and
  \item $\sup_{(a,z) \in \SS_0} \left| (u' \circ c)(a,z) - u'(a) \right| < \infty$.
\end{enumerate}
To compare two policies, we pair $\cC$ with the distance
\begin{equation}
\label{eq:rho_metric}
  \rho(c,d) 
    := \left\| u' \circ c - u' \circ d \right\|
    := \sup_{(a,z) \in \SS_0} 
             \left| 
                 \left(u' \circ c \right)(a,z) - 
                 \left(u' \circ d \right)(a,z) 
             \right|
\end{equation}
that evaluates the maximal difference in terms of marginal utility.
Note that 
\begin{equation}
\label{eq:bd_uprime}
  c \in \cC \Longrightarrow 
    \exists K \in \RR_+ 
    \text{ s.t. } 
    u'(a) \leq (u' \circ c)(a,z) \leq u'(a) + K, \,
    \forall (a,z) \in \SS_0.
\end{equation}
Moreover, while elements of $\cC$ are not generally bounded, one can show 
that $\rho$ is a valid metric on $\cC$. In particular, $\rho$ is finite on 
$\cC$ since 
$\rho(c,d) \leq \left\| u' \circ c - u' \right\| 
              + \left\| u' \circ d - u' \right\|$, 
and the last two terms are finite by the definition of $\cC$.

\begin{proposition}
\label{pr:complete}
    $(\cC, \rho)$ is a complete metric space.
\end{proposition}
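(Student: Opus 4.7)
The plan is to first verify that $\rho$ is a genuine metric on $\cC$, and then establish completeness by constructing, for any Cauchy sequence $\{c_n\} \subset \cC$, a limit in $\cC$. Positivity, symmetry and the triangle inequality are immediate from the sup-norm structure; the only non-trivial metric axiom is $\rho(c,d)=0 \Rightarrow c=d$, which follows because $u'$ is strictly decreasing on $(0,\infty)$ by assumption \ref{a:utility}, hence injective.

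For completeness, the main idea is to pivot from the unbounded function $u'\circ c_n$ to the bounded function $g_n := u'\circ c_n - u'$. Property (4) in the definition of $\cC$ gives boundedness of $g_n$, and combined with properties (1), (3) and continuity of $u'$ on $(0,\infty)$, also continuity. Since $\rho(c_n,c_m) = \|g_n - g_m\|$, the sequence $\{g_n\}$ is Cauchy in the space of bounded continuous functions on $\SS_0$ under the supremum norm, which is well known to be complete. Hence $g_n \to g$ uniformly for some bounded continuous $g \colon \SS_0 \to \RR$.

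Next I would define the candidate limit via the inverse of $u'$. Set $h(a,z) := g(a,z) + u'(a)$. Because $c_n(a,z) \leq a$ and $u'$ is decreasing, each $g_n$ is nonnegative, so $g \geq 0$ and $h \geq u' > 0$ on $\SS_0$. By assumption \ref{a:utility}, $u'$ is a continuous strictly decreasing bijection from $(0,\infty)$ onto $(0,\infty)$, so it admits a continuous decreasing inverse $(u')^{-1}$. Define
\begin{equation*}
  c(a,z) := (u')^{-1}\bigl( h(a,z) \bigr).
\end{equation*}
I would then verify the four defining properties of $\cC$: continuity of $c$ from continuity of $h$ and $(u')^{-1}$; monotonicity in $a$ by noting that each $c_n$ is increasing in $a$ so each $u'\circ c_n$ is decreasing in $a$, a property preserved by pointwise limits and inherited by $h$, which via the decreasing $(u')^{-1}$ yields monotonicity of $c$; the bound $0 < c(a,z) \leq a$ from $h(a,z) \geq u'(a)$ and $(u')^{-1}$ decreasing; and property (4) directly from $u'\circ c - u' = g$ being bounded.

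The last step is convergence: since $u'\circ c = h$, we have $u'\circ c_n - u'\circ c = g_n - g$, so $\rho(c_n, c) = \|g_n - g\| \to 0$. I do not anticipate a genuine obstacle here; the only delicate point is handling the unboundedness of $u'$ near zero, which is exactly why the metric $\rho$ is defined in terms of marginal utilities and why the shift by $-u'(a)$ to work with $g_n$ is essential. Everything else reduces to standard facts about uniform limits of bounded continuous functions together with the monotone bijection $u'$ supplied by assumption \ref{a:utility}.
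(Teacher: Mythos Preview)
Your proof is correct and follows essentially the same approach as the paper: the paper establishes completeness by constructing an isometric isomorphism $H\colon c \mapsto u'\circ c$ from $(\cC,\rho)$ onto an auxiliary space $(\hH,d_\infty)$ of continuous functions $h$ satisfying $u'(a)\leq h(a,z)\leq u'(a)+K$, and then invokes a ``standard argument'' for the completeness of $\hH$. Your argument simply unpacks that standard argument directly, shifting by $-u'(a)$ to land in the bounded continuous functions and then inverting via $(u')^{-1}$; the mathematical content is identical.
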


\begin{proposition}
\label{pr:suff_optpol}
If assumptions \ref{a:utility} and \ref{a:ctra_coef}--\ref{a:bd_sup_ereuprm} 
hold, $c \in \cC$, and, for all $(a,z) \in \SS_0$, 
\begin{equation}
\label{eq:foc}
  \left( u' \circ c \right)(a,z) = 
    \max \left\{ 
            \beta \, \EE_z \, \hat{R} 
                      \left( u' \circ c \right) 
                        \left( \hat{R} \left[ a - c(a,z)\right] + \hat{Y}, 
                               \, \hat{z} 
                        \right),
            u'(a)
         \right\},
\end{equation}
then $c$ satisfies both the first order optimality conditions and the 
transversality condition. In particular, $c$ is an optimal policy.
\end{proposition}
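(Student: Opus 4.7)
The plan is to verify the hypotheses of Theorem \ref{t:opt_result}: that $c$ satisfies both the first order optimality conditions (FOC) and the transversality condition (TVC). With both in hand, that theorem immediately delivers optimality. FOC will come essentially for free from the max structure of \eqref{eq:foc}; TVC will be the substantive step.

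For FOC, since $c\in\cC$ forces $c(a,z)\le a$ and $u'$ is strictly decreasing, $(u'\circ c)(a,z)\ge u'(a)$ holds everywhere, with strict inequality exactly when $c(a,z)<a$. The max structure of \eqref{eq:foc} immediately dominates the conditional expectation argument, giving the FOC inequality. Moreover, when $c(a,z)<a$ the strict inequality $(u'\circ c)(a,z)>u'(a)$ forces the max in \eqref{eq:foc} to be attained by the conditional expectation, yielding the FOC equality.

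For TVC, the key input is the bound $(u'\circ c)(a,z)\le u'(a)+K$ for some constant $K$, which is built into membership of $\cC$; see \eqref{eq:bd_uprime}. This allows writing
\[
\beta^t\EE_{a,z}[a_t(u'\circ c)(a_t,z_t)] \le \beta^t\EE_{a,z}[a_t u'(a_t)] + K\beta^t\EE_{a,z}[a_t],
\]
and the second term vanishes because $a_t\le\tilde a_t$ and Lemma \ref{lm:max_path} gives $\beta^t\EE_{a,z}[\tilde a_t]\to 0$.

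The hard part will be controlling $\beta^t\EE_{a,z}[a_t u'(a_t)]$, since $a\,u'(a)$ can blow up as $a\to 0$ when $u(0)=-\infty$ (as in CRRA with $\gamma\ge 1$). My plan is to use the elementary inequality $u'(a)\,a \le u'(M)\,a + M\,u'(a)$, valid for any fixed $M>0$ (compare $a$ with $M$ in the two regimes). For $t\ge 1$, the budget recursion gives $a_t = R_t(a_{t-1}-c_{t-1}) + Y_t \ge Y_t$, so monotonicity of $u'$ gives $u'(a_t)\le u'(Y_t)$, and hence
\[
\beta^t\EE_{a,z}[a_t u'(a_t)] \le u'(M)\,\beta^t\EE_{a,z}[a_t] + M\,\beta^t\EE_{a,z}[u'(Y_t)].
\]
The first summand tends to $0$ by Lemma \ref{lm:max_path}, and the second tends to $0$ because Assumption \ref{a:bd_sup_ereuprm} supplies a uniform finite bound on $\EE_z u'(\hat Y)$, so $\EE_{a,z}[u'(Y_t)]$ is bounded in $t$ while $\beta^t\to 0$. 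Combining the pieces gives TVC, and Theorem \ref{t:opt_result} then yields that $c$ is optimal.
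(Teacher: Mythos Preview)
Your proposal is correct and follows essentially the same route as the paper: the FOC is immediate from the max structure, and for the TVC you use the same decomposition via \eqref{eq:bd_uprime}, the same threshold-splitting inequality $a_t u'(a_t)\le u'(M)a_t + M u'(a_t)$ (which the paper writes out with indicator functions on $\{a_t\le L\}$ and $\{a_t>L\}$), the same bound $u'(a_t)\le u'(Y_t)$ from $a_t\ge Y_t$, and the same appeal to Assumption~\ref{a:bd_sup_ereuprm} plus Lemma~\ref{lm:max_path}. The only point worth making explicit is that bounding $\EE_{a,z}[u'(Y_t)]$ uniformly in $t$ requires one application of the Markov property, $\EE_z u'(Y_t)=\EE_z\EE_{z_{t-1}}u'(\hat Y)\le \sup_{z'}\EE_{z'}u'(\hat Y)$, which you gesture at but do not spell out.
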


Inspired by proposition \ref{pr:suff_optpol}, we aim to characterize the 
optimal policy as the fixed point of the \emph{Coleman operator} $T$ defined 
as follows: for fixed $c \in \cC$ and $(a,z) \in \SS_0$, the value of the 
image $Tc$ at $(a,z)$ is defined as the $\xi \in (0,a]$ that solves
\begin{equation}
\label{eq:T_opr}
    u'(\xi) = \psi_c(\xi, a, z),
\end{equation}
where $\psi_c$ is the function on 
\begin{equation}
\label{eq:dom_T_opr}
    G := \left\{ 
            (\xi, a, z) \in \RR_+ \times (0, \infty) \times \ZZ \colon
            0 < \xi \leq a
         \right\}
\end{equation}
defined by
\begin{equation}
\label{eq:keypart_T_opr}
    \psi_c(\xi,a,z) := 
      \max \left\{
              \beta \EE_{z} \hat{R}
                 (u' \circ c)[\hat{R}(a - \xi) + \hat{Y}, \, \hat{z}], 
              \, u'(a)
           \right\}.
\end{equation}

The following propositions show that the Coleman operator $T$ is a 
well-defined self-map from the candidate space $(\cC, \rho)$ into itself.

\begin{proposition}
\label{pr:welldef_T}
If assumptions \ref{a:utility} and \ref{a:ctra_coef}--\ref{a:bd_sup_ereuprm} 
hold, then
  for each $c \in \cC$ and $(a,z) \in \SS_0$,
  there exists a unique $\xi \in (0,a]$ that solves \eqref{eq:T_opr}.
\end{proposition}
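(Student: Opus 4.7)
The plan is to fix $c \in \cC$ and $(a,z) \in \SS_0$, and study the function $\Phi(\xi) := u'(\xi) - \psi_c(\xi, a, z)$ on $(0, a]$, then argue via intermediate value theorem plus strict monotonicity. The existence/uniqueness will follow once we establish that $\psi_c(\cdot, a, z)$ is continuous and weakly increasing, that $\psi_c(0^+, a, z)$ is finite, and that $\psi_c(a, a, z) \geq u'(a)$.

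First I would verify weak monotonicity: since $c \in \cC$ is increasing in its first argument, $u' \circ c$ is decreasing there; hence $\xi \mapsto (u'\circ c)[\hat R(a - \xi) + \hat Y, \hat z]$ is pointwise increasing in $\xi$, so the expectation is increasing, and the max with the constant $u'(a)$ preserves this. Because $u'$ is strictly decreasing, $\Phi$ is then strictly decreasing on $(0,a]$, which delivers uniqueness for free.

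Next I would establish continuity and integrability via dominated convergence. The key bound uses property (4) of $\cC$: $(u' \circ c)(x, z) \leq u'(x) + K$ for some $K$. Since $\hat R (a - \xi) + \hat Y \geq \hat Y$ for $\xi \in (0, a]$ and $\hat R \geq 0$, monotonicity of $u'$ gives the $\xi$-uniform bound
\begin{equation*}
    \hat R \,(u' \circ c)\bigl[\hat R (a-\xi) + \hat Y, \, \hat z\bigr]
    \;\leq\; \hat R \bigl( u'(\hat Y) + K \bigr),
\end{equation*}
and this dominating random variable has finite $\EE_z$-expectation by assumption \ref{a:bd_sup_ereuprm}. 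Together with continuity of $u' \circ c$ (which follows from continuity of $c$ and $u'$), the dominated convergence theorem gives continuity of $\xi \mapsto \EE_z \hat R (u' \circ c)[\hat R(a - \xi) + \hat Y, \hat z]$ on $[0, a]$; taking the max with $u'(a)$ preserves continuity, and evaluating the dominating bound at $\xi = 0$ shows $\psi_c(0, a, z)$ is finite.

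Finally I would close the intermediate value argument. Since $u'(\xi) \to \infty$ as $\xi \to 0^+$ while $\psi_c(\xi, a, z)$ stays bounded above near $0$, we have $\Phi(\xi) \to +\infty$; and at $\xi = a$, $\psi_c(a, a, z) \geq u'(a)$ by definition of the max, so $\Phi(a) \leq 0$. Continuity of $\Phi$ then yields at least one root $\xi^* \in (0, a]$, and the strict monotonicity noted above forces uniqueness. I expect the main (though still routine) obstacle to be pinning down the dominating function correctly so that dominated convergence applies uniformly in $\xi$ and the bound $\psi_c(0^+, a, z) < \infty$ is justified solely from the $\cC$-membership of $c$ and assumption \ref{a:bd_sup_ereuprm}; everything else is a direct consequence of the monotonicity structure baked into $\cC$ and the Inada-type properties of $u'$ in assumption \ref{a:utility}.
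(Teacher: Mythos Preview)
Your proposal is correct and follows essentially the same approach as the paper: uniqueness from the monotonicity of $\psi_c$ in $\xi$ against the strict monotonicity of $u'$, continuity of $\psi_c$ via dominated convergence with the dominating variable $\hat R\,(u'(\hat Y)+K)$ justified by assumption \ref{a:bd_sup_ereuprm}, and existence via the intermediate value theorem using $u'(\xi)\to\infty$ as $\xi\to 0$ together with $\psi_c(a,a,z)\ge u'(a)$. The only cosmetic difference is that you package the argument through the function $\Phi=u'-\psi_c$, whereas the paper states the three IVT ingredients separately.
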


\begin{proposition}
\label{pr:self_map}
  If assumptions \ref{a:utility} and \ref{a:ctra_coef}--\ref{a:conti_ereuprm} hold, then $Tc \in \cC$ for all $c \in \cC$.
\end{proposition}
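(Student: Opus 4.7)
The plan is to verify the four defining properties of $\cC$ for $Tc$ in the order (3) $\to$ (2) $\to$ (4) $\to$ (1), which runs from trivial to most involved. Property (3), that $0 < Tc(a,z) \leq a$, is immediate from Proposition \ref{pr:welldef_T}. For property (2), monotonicity in $a$, I would argue by contradiction: fix $z$, take $a_1 < a_2$ and suppose $\xi_1 := Tc(a_1,z) > \xi_2 := Tc(a_2,z)$. Then $a_1 - \xi_1 < a_2 - \xi_2$, and, because $c$ is increasing in its first argument and $u'$ is strictly decreasing, the expected term in $\psi_c$ at $(\xi_1, a_1, z)$ strictly dominates the one at $(\xi_2, a_2, z)$; combined with $u'(a_1) > u'(a_2)$, this yields $\psi_c(\xi_1, a_1, z) > \psi_c(\xi_2, a_2, z)$, contradicting $u'(\xi_1) < u'(\xi_2)$ via \eqref{eq:T_opr}.

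For property (4), note that \eqref{eq:T_opr} combined with \eqref{eq:keypart_T_opr} gives $u'(Tc(a,z)) \geq u'(a)$, so it suffices to bound $u'(Tc(a,z)) - u'(a)$ above uniformly. Using \eqref{eq:bd_uprime} for $c$ (with some constant $K$) and the fact that $\hat{R}[a - Tc(a,z)] \geq 0$, I would estimate
\begin{equation*}
    u'(Tc(a,z)) - u'(a) \leq \beta \, \EE_z \hat{R}\, (u' \circ c)(\hat{R}[a - Tc(a,z)] + \hat{Y}, \hat{z}) \leq \beta\, \EE_z \hat{R}\, u'(\hat{Y}) + \beta K\, \EE_z \hat{R},
\end{equation*}
which is uniformly bounded on $\SS_0$ by assumption \ref{a:bd_sup_ereuprm}.

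The main obstacle is property (1), joint continuity of $Tc$. My plan is to deduce this from joint continuity of $\psi_c$ together with strict monotonicity of $u'$ and the uniqueness in Proposition \ref{pr:welldef_T}. Given $(a_n, z_n) \to (a, z)$, the bound from (4) forces $\xi_n := Tc(a_n, z_n)$ into a compact set bounded away from zero (inside any neighborhood of $(a,z)$); any limit point $\xi^*$ of $\{\xi_n\}$ solves $u'(\xi^*) = \psi_c(\xi^*, a, z)$ by passage to the limit, hence equals $Tc(a,z)$ by uniqueness, so the whole sequence converges.

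What remains is joint continuity of $\psi_c(\xi, a, z)$. Since $\psi_c$ is a max of two pieces and $u'(a)$ is continuous, I only need continuity of $A(\xi, a, z) := \beta\, \EE_z \hat{R}\, (u' \circ c)(\hat{R}(a-\xi) + \hat{Y}, \hat{z})$. Joint continuity in $(\xi, a)$ at fixed $z$ follows from continuity of $u' \circ c$ and the dominated convergence theorem with dominator $\hat{R}(u'(\hat{Y}) + K)$, integrable under $\EE_z$ by assumption \ref{a:bd_sup_ereuprm}. For continuity in $z$, I would decompose
\begin{equation*}
    A(\xi, a, z) = \beta\, \EE_z \hat{R}\, u'(\hat{R}(a - \xi) + \hat{Y}) + \beta\, \EE_z \hat{R}\, \bigl[(u' \circ c)(\hat{R}(a-\xi) + \hat{Y}, \hat{z}) - u'(\hat{R}(a - \xi) + \hat{Y})\bigr],
\end{equation*}
where the bracketed term is bounded in $[0, K]$ by \eqref{eq:bd_uprime}. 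Using the Feller property of $P$, continuity of $R(\cdot, \zeta)$ and $Y(\cdot, \eta)$ from assumption \ref{a:conti_ereuprm}, and the continuity of $z \mapsto \EE_z \hat{R}$ and $z \mapsto \EE_z \hat{R} u'(\hat{Y})$ supplying uniform integrability along convergent $z_n \to z$, I would apply a generalized dominated convergence argument (truncating $\hat{R}$ and $u'(\hat{Y})$ and letting the thresholds tend to infinity) to establish continuity of each term in $z$. This Feller-plus-uniform-integrability step is where the real work lies; everything else is routine.
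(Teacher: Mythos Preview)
Your proposal is correct and follows the same four-step outline as the paper: properties (2), (3), (4) are handled identically, and for continuity both arguments first establish joint continuity of $\psi_c$ on $G$ and then deduce continuity of $Tc$. The only difference is in the tools invoked for Step~1: the paper applies the generalized Fatou lemma of \cite{feinberg2014fatou} to the nonnegative functions $\hat{R}(u'(\hat{Y})+K) \pm \hat{R}(u'\circ c)(\hat{R}(a-\xi)+\hat{Y},\hat{z})$ and then cites a parametric fixed-point continuity result (theorem~B.1.4 of \cite{stachurski2009economic}), whereas you use uniform integrability of the dominator---supplied precisely by the continuity of $z \mapsto \EE_z \hat{R}$ and $z \mapsto \EE_z \hat{R} u'(\hat{Y})$ in assumption~\ref{a:conti_ereuprm}---together with a direct compactness-plus-uniqueness argument. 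These are interchangeable devices; your route is more self-contained, the paper's more concise. One small caution: your write-up treats $(\xi,a)$-continuity at fixed $z$ and $z$-continuity at fixed $(\xi,a)$ separately, but separate continuity does not imply joint continuity. However, the uniform-integrability argument you sketch does deliver joint continuity in $(\xi,a,z)$ once you observe that the dominator $\hat{R}(u'(\hat{Y})+K)$ is uniformly integrable along any $z_n \to z$ and that the integrand converges pointwise as $(\xi_n,a_n,z_n) \to (\xi,a,z)$; just make sure to run the limit in all three variables simultaneously.
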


Recall $n$ and $\theta$ defined in assumption \ref{a:ctra_coef}. We now provide our key optimality result.

\begin{theorem}
\label{t:ctra_T}
If assumptions \ref{a:utility} and \ref{a:ctra_coef}--\ref{a:conti_ereuprm} hold, 
then $T^n$ is a contraction mapping on $(\cC, \rho)$ with modulus $\theta$. 
In particular,
\begin{enumerate}
    \item $T$ has a unique fixed point $c^* \in \cC$.
    \item The fixed point $c^*$ is the unique optimal policy in $\cC$.
    \item For all $c \in \cC$ and $k \in \NN$, we have 
      $\rho(T^{nk} c, c^*) \leq \theta^k \rho(c, c^*)$.
\end{enumerate}
\end{theorem}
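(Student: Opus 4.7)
The strategy is to establish a pointwise one-step inequality for $T$ whose right-hand side still contains $|u' \circ c - u' \circ d|$ evaluated at a random \emph{future} state (not yet replaced by a sup-norm), and then iterate it $n$ times so that the product $R_1 R_2 \cdots R_n$ defining $\theta$ in Assumption \ref{a:ctra_coef} emerges inside a \emph{joint} expectation. Once this $n$-step contraction is in hand, the three claims follow by combining Banach's fixed point theorem with Propositions \ref{pr:complete}, \ref{pr:self_map} and \ref{pr:suff_optpol}.

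For the one-step bound, I would fix $(a,z) \in \SS_0$, write $\xi_c := Tc(a,z)$, $\xi_d := Td(a,z)$, and set
\begin{equation*}
   \phi_c(\xi) := \beta \, \EE_z \, \hat R \, (u' \circ c)\!\left(\hat R(a-\xi) + \hat Y, \, \hat z\right),
\end{equation*}
and analogously $\phi_d$. Because $c \in \cC$ is increasing in its first argument and $u'$ is strictly decreasing, $\phi_c$ is increasing in $\xi$. Assume without loss of generality that $\xi_c \leq \xi_d$. A short case analysis on the $\max$ in \eqref{eq:keypart_T_opr} then gives: if both $\xi_c, \xi_d < a$, then $u'(\xi_c) - u'(\xi_d) = \phi_c(\xi_c) - \phi_d(\xi_d) \leq \phi_c(\xi_d) - \phi_d(\xi_d)$ by monotonicity of $\phi_c$; if $\xi_d = a$, the binding constraint forces $\phi_d(a) \leq u'(a)$ and one obtains $u'(\xi_c) - u'(\xi_d) \leq \phi_c(a) - \phi_d(a)$ by combining $\phi_c(\xi_c) \leq \phi_c(a)$ with $u'(a) \geq \phi_d(a)$; the case $\xi_c = a$ forces $\xi_d = a$ and is trivial. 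Setting $\xi^* := \max\{\xi_c, \xi_d\} \in (0, a]$, each case yields
\begin{equation*}
   |u'(Tc(a,z)) - u'(Td(a,z))| \leq \beta \, \EE_z \, \hat R \, |(u' \circ c) - (u' \circ d)|\!\left(\hat R(a - \xi^*) + \hat Y, \, \hat z\right).
\end{equation*}

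Next I would iterate: replacing $(c,d)$ successively by $(T^{n-1}c, T^{n-1}d), (T^{n-2}c, T^{n-2}d), \ldots$, and using the tower property together with the fact that each $R_k = R(z_k, \zeta_k)$ depends only on the exogenous chain, I obtain
\begin{equation*}
   |u'(T^n c(a,z)) - u'(T^n d(a,z))| \leq \beta^n \, \EE_z \! \left[ R_1 R_2 \cdots R_n \cdot |(u' \circ c) - (u' \circ d)|(A_n, z_n) \right]
\end{equation*}
for some $\SS_0$-valued random state $(A_n, z_n)$ generated along the iteration. Bounding the last factor uniformly by $\rho(c,d)$ and pulling it out gives $\rho(T^n c, T^n d) \leq \beta^n \sup_{z \in \ZZ} \EE_z (R_1 \cdots R_n) \cdot \rho(c,d) = \theta \, \rho(c,d)$, with $\theta < 1$ by Assumption \ref{a:ctra_coef}. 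Propositions \ref{pr:complete} and \ref{pr:self_map} provide a complete metric space into which $T$ maps itself, so Banach's theorem delivers a unique fixed point $c^* \in \cC$ of $T^n$; since $T$ commutes with $T^n$, $Tc^*$ is also fixed by $T^n$, hence $Tc^* = c^*$, which is claim~(1). Claim~(2) follows because $Tc^* = c^*$ is precisely equation \eqref{eq:foc}, so Proposition \ref{pr:suff_optpol} identifies $c^*$ as the optimal policy, while strict concavity of $u$ (noted in the text preceding Theorem \ref{t:opt_result}) yields uniqueness within the feasible policies, in particular within $\cC$. Claim~(3) is immediate from iterating the $T^n$-contraction $k$ times.

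The main obstacle is the case analysis in the one-step bound: the kink introduced by the borrowing constraint $c \leq a$ (encoded by the $\max$ in \eqref{eq:keypart_T_opr}) has to be handled without losing the clean ``contract by $\beta \EE_z \hat R$'' factor. Using monotonicity of $\phi_c$ in $\xi$ together with the cross-comparison $\phi_c(\xi_d) - \phi_d(\xi_d)$ at the \emph{larger} of $\xi_c, \xi_d$ is precisely what allows the iterated bound to collect the joint expectation $\EE_z(R_1 \cdots R_n)$ rather than the $n$-th power $(\sup_z \EE_z \hat R)^n$; the latter could easily exceed one even when $\theta < 1$, so cutting corners at this step would destroy the contraction under Assumption \ref{a:ctra_coef} alone.
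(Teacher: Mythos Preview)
Your argument is correct and arrives at the same contraction estimate, but the route differs from the paper's. The paper first passes to the conjugate operator $\tilde T$ on $(\hH,d_\infty)$ via the isometry $H\colon c\mapsto u'\circ c$, shows that $\tilde T$ is order preserving, and then verifies an iterated Blackwell discounting condition
\[
  \tilde T^k(h+\gamma)(a,z)\;\le\;\tilde T^k h(a,z)+\gamma\,\beta^k\,\EE_z R_1\cdots R_k,
\]
proved by induction on $k$; taking $k=n$ and invoking Blackwell's sufficient conditions gives the $\theta$--contraction for $\tilde T^n$, which is then transported back to $T^n$ via the isometry. Your approach bypasses both the conjugate operator and the Blackwell machinery: you establish a direct one-step pointwise Lipschitz inequality for $u'\circ T$ by a case analysis on the kink, and then iterate it along the exogenous chain so that the product $R_1\cdots R_n$ accumulates inside a single expectation before you pass to the sup norm. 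The two arguments are close cousins---your one-step bound with $d$ a constant shift of $c$ in $u'$-coordinates is exactly the paper's discounting step---but your packaging is more hands-on and makes explicit why the iteration produces the joint expectation $\EE_z(R_1\cdots R_n)$ rather than the possibly much larger $(\sup_z\EE_z\hat R)^n$. The paper's version, in exchange, is more modular: once monotonicity and the shift inequality are in place, Blackwell's lemma does the rest without any case analysis on the borrowing-constraint kink. Your deductions of claims (1)--(3) from the $n$-step contraction, Propositions \ref{pr:complete}, \ref{pr:self_map}, \ref{pr:suff_optpol}, and the commutation $T\circ T^n=T^n\circ T$ are standard and match the paper's reasoning.
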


\section{Stochastic Stability}
\label{s:sto_stability}

This section focuses on stochastic stability of the generalized income fluctuation
problem. We first provide sufficient conditions for the existence of a
stationary distribution and then explore conditions for uniqueness and
ergodicity.

Now we add zero back into the asset space, and consider a larger state space for the state process $\{ (a_t, z_t)\}_{t \geq 0}$, denoted by
\begin{equation*}
\label{eq:S}
  \SS := [0, \infty) \times \ZZ \ni (a,z).
\end{equation*}
We extend $c^*$ to $\SS$  by setting $c^* (0,z) = 0 $ for all $z \in \ZZ$.
Together, $c^*$ and the transition functions for $\{ a_t\}$, 
$\{ R_t\}$ and $\{ Y_t\}$ determine a Markov process with state vector 
$s_t := (a_t, z_t)$ taking values in the state space $\SS$. 
Let $Q$ denote the corresponding stochastic kernel.
The law of motion of $\{ s_t\}$ is
\begin{align}
\label{eq:dyn_sys}
  a_{t+1} &= R \left( 
                  z_{t+1}, \zeta_{t+1} 
               \right) 
             \left[ a_t - c^* \left(a_t, z_t \right) \right]
             + Y \left( 
                    z_{t+1}, \eta_{t+1} 
                 \right), 
                 \nonumber  \\
  z_{t+1} &\sim P \left( z_t, \, \cdot \, \right)
\end{align}

\subsection{Existence of a stationary distribution}
\label{ss:exist_stat}

To obtain existence of a stationary distribution, we make the following assumptions.

\begin{assumption}
    \label{a:suff_bd_in_prob}
    There exists $\alpha \in (0,1)$ such that
    \begin{enumerate}
      \item  $\beta \, \EE_z \hat{R} \, u'[ \hat{R} \left(1-\alpha \right)a ]  \leq u'(a)$
      for all $(a,z) \in \SS_0$,\footnote{Here we adopt the convention that 
          $0 \cdot \infty = 0$ so that assumption \ref{a:suff_bd_in_prob} does not 
          rule out the case $\PP \{R_t =0 \mid z_{t-1} = z\} > 0$. Indeed, as would be 
          shown in proofs, all the conclusions of 
          this paper still hold if we replace this condition 
          by the weaker alternative: 
          $\beta \, \EE_z \hat{R} \, u'[ \hat{R} \left(1-\alpha \right)a + \alpha \hat{Y}]  
            \leq u'(a)$
          for all $(a,z) \in \SS_0$, while maintaining the second part of  
          assumption \ref{a:suff_bd_in_prob}.}
      and   
      \item there exists $n \in \NN$ such that 
      $(1 - \alpha)^n \sup_{z \in \ZZ} \EE_z R_1 \cdots R_n < 1$.
    \end{enumerate}
\end{assumption}

\begin{assumption}
    \label{a:bd_in_prob_Yt}
    $\sup_{t \geq 1} \EE_z \, Y_t < \infty$ for all $z \in \ZZ$.
\end{assumption}

\begin{assumption}
    \label{a:z_bdd_in_prob}
    The stochastic kernel $P$ is bounded in probability.
\end{assumption}

\begin{example}
\label{ex:homog}
    For homogeneous utility functions (e.g., CRRA), if the first condition of assumption \ref{a:suff_bd_in_prob} holds for some $a \in (0, \infty)$, then it must hold for all $a \in (0, \infty)$. To see this, let $k$ be the degree of homogeneity. Then we have
    \begin{equation*}
        \beta \EE_z \hat{R} u'[\hat{R} (1 - \alpha) a] / u'(a)
        = \beta \EE_z \hat{R}^{1+k} (1 - \alpha)^k 
        \quad \text{for all } a \in (0, \infty).
    \end{equation*}
    The right hand side is constant in $a$.
\end{example}

\begin{example}
\label{ex:bdd_in_prob_matrix}
Recall example \ref{ex:spec_rad_matrix}, where $\{ z_t \}$ is a finite-state Markov chain. Consider the CRRA utility defined in \eqref{eq:crra_utils}. Define further the column vector 
\begin{equation*}
    V := \left(
            \EE R(i_1, \zeta)^{1 - \gamma}, \cdots, \EE R(i_N, \zeta)^{1 - \gamma} 
        \right)'.
\end{equation*}
Then, assumption \ref{a:suff_bd_in_prob} holds whenever 
\begin{equation}
\label{eq:bdd_in_prob_matrix}
    \max \{ r(\Pi D), 1 \} < \left( \beta \| \Pi V \| \right)^{-1 / \gamma}.
\end{equation}
To see this, the first condition of assumption \ref{a:suff_bd_in_prob} holds if there exists $\alpha \in (0,1)$ such that $(1 - \alpha)^{-\gamma} \beta \EE_z \hat{R}^{1 - \gamma} \leq 1$ for all $z \in \ZZ$. Since $\ZZ$ is finite, this is equivalent to the existence of an $\alpha \in (0,1)$ such that $(1 - \alpha)^{-\gamma} \beta \| \Pi V \| \leq 1$. Similar to example \ref{ex:spec_rad_matrix}, the second condition of assumption \ref{a:suff_bd_in_prob} holds if $r(\Pi D) < 1 / (1 - \alpha)$ for the same $\alpha$. Together, these requirements are equivalent to \eqref{eq:bdd_in_prob_matrix}.
\end{example}

\begin{example}
    \cite{benhabib2015wealth} consider the CRRA utility and assume that 
    $\left\{ R_{t} \right\}$ and $\left\{ Y_{t} \right\}$ are {\sc iid},
    mutually independent, supported on bounded closed intervals of strictly positive 
    real numbers with their distributions represented by densities, and that
    $\beta \EE R_t^{1 - \gamma} < 1$ and
    $( \beta \EE R_t^{1 - \gamma} )^{\frac{1}{\gamma}} \EE R_t < 1$.
    Under these conditions, 
    assumptions \ref{a:bd_in_prob_Yt}--\ref{a:z_bdd_in_prob} obviously hold.
    Assumption \ref{a:suff_bd_in_prob} is satisfied by letting  
    $\alpha := 1 - ( \beta \EE R_t^{1 - \gamma} )^{\frac{1}{\gamma}}$
    and $n := 1$. The first condition of assumption 
    \ref{a:suff_bd_in_prob} holds since $\alpha \in (0,1)$ and
    \begin{align*}
      \beta \EE_z \hat{R} \, u' [ \hat{R} (1 - \alpha) a ]
           \big/ u'(a) 
       = (1 - \alpha)^{-\gamma} \beta \EE R_t^{1 - \gamma}
       = \left( \beta \EE R_t^{1 - \gamma} \right)^{-1} \beta \EE R_t^{1-\gamma} 
       = 1,
    \end{align*}
    while the second condition holds for $n=1$ since 
    $(1 - \alpha) \EE R_t 
        = ( \beta \EE R_t^{1 - \gamma} )^{\frac{1}{\gamma}} \EE R_t
        < 1.$
\end{example}

Let $c^*$ be the unique optimal policy obtained from theorem \ref{t:ctra_T} and 
$\alpha$ be defined as in assumption \ref{a:suff_bd_in_prob}. 
The next proposition establishes a strictly positive lower bound on the optimal
consumption rate.

\begin{proposition}
    \label{pr:opt_pol_bd_frac}
    If assumptions \ref{a:utility}, \ref{a:ctra_coef}--\ref{a:conti_ereuprm} and
    \ref{a:suff_bd_in_prob} hold, then
    $c^*(a,z) \geq \alpha a$ for all $(a,z) \in \SS$.
\end{proposition}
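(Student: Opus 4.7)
The plan is to identify a nonempty, closed, $T$-invariant subset of the complete metric space $(\cC, \rho)$ whose members satisfy the stated bound pointwise, and then invoke theorem \ref{t:ctra_T} to deduce that the unique fixed point $c^*$ must lie in this subset.

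Define $\cC_\alpha := \{ c \in \cC : c(a,z) \geq \alpha a \text{ for every } (a,z) \in \SS_0 \}$. First I would verify that $\cC_\alpha$ is nonempty: the ``consume everything'' policy $(a,z) \mapsto a$ sits in $\cC$ and clearly satisfies $c(a,z) \geq \alpha a$. Closedness in $(\cC, \rho)$ follows because $\rho$-convergence forces pointwise convergence of $u' \circ c_k$, and the constraint $c_k(a,z) \geq \alpha a$ is equivalent to $(u' \circ c_k)(a,z) \leq u'(\alpha a)$ by strict monotonicity of $u'$, an inequality that is preserved under pointwise limits.

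The core step is to show $T(\cC_\alpha) \subseteq \cC_\alpha$. Fix $c \in \cC_\alpha$ and $(a,z) \in \SS_0$, and let $\xi := (Tc)(a,z) \in (0, a]$. If $\xi = a$ the conclusion $\xi \geq \alpha a$ is immediate. Otherwise $\xi < a$ forces $u'(\xi) > u'(a)$, so the maximum in \eqref{eq:keypart_T_opr} is attained at the first argument, giving
\begin{equation*}
u'(\xi) = \beta \, \EE_z \hat{R} \, (u' \circ c)(\hat{R}(a-\xi) + \hat{Y}, \hat{z}).
\end{equation*}
Using $c \in \cC_\alpha$, $\hat{Y} \geq 0$, the monotonicity of $u'$, and the convention $0 \cdot \infty = 0$, one obtains
\begin{equation*}
u'(\xi) \leq \beta \, \EE_z \hat{R} \, u'(\alpha \hat{R}(a-\xi)) = \beta \, \EE_z \hat{R} \, u'\!\left( \hat{R}(1-\alpha) \cdot \frac{\alpha(a-\xi)}{1-\alpha} \right).
\end{equation*}
Applying assumption \ref{a:suff_bd_in_prob}(1) at the point $\alpha(a-\xi)/(1-\alpha) > 0$ then yields $u'(\xi) \leq u'(\alpha(a-\xi)/(1-\alpha))$, and strict monotonicity of $u'$ rearranges to $\xi \geq \alpha a$.

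To conclude, pick any $c_0 \in \cC_\alpha$; iterating invariance gives $T^{nk} c_0 \in \cC_\alpha$ for every $k \in \NN$, while theorem \ref{t:ctra_T}(iii) gives $\rho(T^{nk} c_0, c^*) \to 0$, so closedness of $\cC_\alpha$ forces $c^* \in \cC_\alpha$. The extension $c^*(0,z) = 0$ satisfies the bound trivially, so $c^*(a,z) \geq \alpha a$ on all of $\SS$. The chief subtlety is the calibration in the Euler inequality: recognizing that $\alpha(a-\xi)/(1-\alpha)$ is the correct substitution into assumption \ref{a:suff_bd_in_prob}(1) is what makes the Euler bound fold back to the desired consumption floor; everything else is a closedness-plus-iteration argument on the contraction from theorem \ref{t:ctra_T}.
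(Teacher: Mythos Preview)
Your proposal is correct and follows essentially the same strategy as the paper: define the subset $\cC_\alpha$ (the paper's $\cC_1$), prove it is closed and $T$-invariant, and conclude via the contraction in theorem~\ref{t:ctra_T} that $c^* \in \cC_\alpha$. The only notable difference is in the Euler step: the paper argues by contradiction (assuming $\xi < \alpha a$ and applying assumption~\ref{a:suff_bd_in_prob}(1) at the point $\alpha a$), whereas you argue directly by applying the assumption at $\alpha(a-\xi)/(1-\alpha)$, which rearranges neatly to $\xi \geq \alpha a$; your direct route is a slight streamlining of the same idea.
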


From this result the existence of a stationary distribution is not difficult
to verify.

\begin{theorem}
    \label{t:sta_exist}
    If assumptions \ref{a:utility}, \ref{a:ctra_coef}--\ref{a:conti_ereuprm} and 
    \ref{a:suff_bd_in_prob}--\ref{a:z_bdd_in_prob} hold, then
    $Q$ is bounded in probability and admits at least one stationary 
    distribution.
\end{theorem}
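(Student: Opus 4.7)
The plan is to first show that $Q$ is bounded in probability, then apply a Krylov--Bogolyubov argument based on the Feller property of $Q$ to obtain a stationary distribution.

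For boundedness in probability, proposition \ref{pr:opt_pol_bd_frac} gives $a_t - c^*(a_t, z_t) \leq (1-\alpha) a_t$, so the law of motion yields the pathwise bound $a_{t+1} \leq (1-\alpha) R_{t+1} a_t + Y_{t+1}$ and hence, by iteration,
\[
a_t \leq (1-\alpha)^t R_t \cdots R_1 \, a_0 + \sum_{s=1}^{t} (1-\alpha)^{t-s} R_t \cdots R_{s+1} \, Y_s,
\]
with the empty product interpreted as $1$. Set $f_k(z) := \EE_z[R_1 \cdots R_k]$ and $M_k := \sup_z f_k(z)$. Conditioning on $(a_0,z_0) = (a,z)$, the Markov property of $\{z_t\}$ and the independence of $\{\zeta_t\}$, $\{\eta_t\}$ give $\EE_z[R_{s+1} \cdots R_t \mid \fF_s] = f_{t-s}(z_s)$, so $\EE_z[R_{s+1} \cdots R_t \, Y_s] \leq M_{t-s} \, \EE_z Y_s$. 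The same Markov property yields $f_{k+n}(z) \leq M_n \, f_k(z)$ and hence $M_{k+n} \leq M_n M_k$; combined with $(1-\alpha)^n M_n < 1$ from assumption \ref{a:suff_bd_in_prob}(2) and with $M_r < \infty$ for $0 \leq r < n$ (by induction from $\sup_z \EE_z \hat R < \infty$ in assumption \ref{a:bd_sup_ereuprm}), we obtain $\sum_{k \geq 0} (1-\alpha)^k M_k < \infty$. Combining this with $\sup_s \EE_z Y_s < \infty$ from assumption \ref{a:bd_in_prob_Yt} yields $\sup_{t \geq 0} \EE_{a,z} a_t < \infty$. Markov's inequality then gives tightness of the $a$-marginal of $Q^t((a,z), \cdot)$; since the $z$-marginal coincides with $P^t(z, \cdot)$, tight by assumption \ref{a:z_bdd_in_prob}, a product-rectangle argument shows $\{Q^t((a,z), \cdot)\}_{t \geq 0}$ is tight on $\SS$ for every $(a,z) \in \SS$.

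For existence, proposition \ref{pr:self_map} gives $c^* \in \cC$, hence continuous on $\SS_0$, and the extension $c^*(0, z) := 0$ is continuous at $a = 0$ because $0 \leq c^*(a, z) \leq a \to 0$. Combined with assumption \ref{a:conti_ereuprm} and the Feller property of $P$, a bounded-convergence argument shows that $Q$ is Feller. The Ces\`aro averages $\mu_T := T^{-1} \sum_{t=0}^{T-1} Q^t((a, z), \cdot)$ are tight by the preceding step, so along a subsequence $\mu_{T_k} \tod \psi \in \pP(\SS)$, and the Feller property then forces $\psi Q = \psi$.

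The main obstacle is the moment bound $\sup_t \EE_{a, z} a_t < \infty$: the $n$-step return control in assumption \ref{a:suff_bd_in_prob}(2) must be propagated to uniform-in-$k$ summability of $(1-\alpha)^k M_k$ via the multiplicative inequality $M_{k+n} \leq M_n M_k$, and the labor-income sum must be separated from the forward return product through the conditional-expectation identity $\EE[R_{s+1} \cdots R_t \mid \fF_s] = f_{t-s}(z_s)$ rather than any pathwise independence, since $Y_s$ and subsequent returns both depend on $z_s$.
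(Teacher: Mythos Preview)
Your proposal is correct and follows essentially the same route as the paper: establish the moment bound $\sup_t \EE_{a,z} a_t < \infty$ via the iterated pathwise inequality and the submultiplicativity $M_{k+n}\le M_n M_k$ (this is the paper's Lemma~\ref{lm:bd_in_prob_at}), combine with tightness of $\{P^t(z,\cdot)\}$ to get boundedness in probability, verify that $Q$ is Feller, and conclude via Krylov--Bogolyubov. The only point to tighten is the phrase ``bounded-convergence argument'' for the Feller step: since the integrating measure $P(z_n,\cdot)\otimes\nu\otimes\mu$ itself varies with $n$, you need joint continuity of the integrand together with weak convergence of the product measures (the paper handles this via the generalized Fatou lemma of Feinberg--Kasyanov--Zadoianchuk, as in the proof of Proposition~\ref{pr:self_map}); this is a matter of phrasing rather than a gap.
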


\subsection{Further Optimality Properties}
\label{ss:further_prop}

Slightly digressed from our main topics, we show that the optimal policy satisifies several other important properties under the following assumption.

\begin{assumption}
    \label{a:concave}
    The map
    $ s \mapsto
          (u')^{-1} \left[ 
                       \beta \EE_z \hat{R}
                           \left( u' \circ c \right) 
                           (\hat{R} s + \hat{Y}, \, \hat{z} )
                    \right]$
    is concave on $\RR_+$ for each fixed $z \in \ZZ$ and $c \in \cC$ that is 
    concave in its first argument.
\end{assumption}

\begin{example}
\label{eg:concave}
Assumption \ref{a:concave} imposes some concavity structure on the utility 
function. It holds for CRRA and logarithmic utilities, as shown in appendix B.
\end{example}

The next proposition implies that, with this added concavity structure, the optimal 
policy is concave and asymptotically linear with respect to the wealth level. 

\begin{proposition}
    \label{pr:optpol_concave}
    If assumptions \ref{a:utility}, \ref{a:ctra_coef}--\ref{a:conti_ereuprm},
    \ref{a:suff_bd_in_prob} and \ref{a:concave} hold, then
    
    \begin{enumerate}
        \item $a \mapsto c^*(a,z)$ is concave for all $z \in \ZZ$, and
                
        \item for all $z \in \ZZ$, there exists $\alpha' \in [\alpha,1)$ such that
        $\lim_{a \to \infty} [c^*(a,z) / a] = \alpha'$.\footnote{\label{fn:abar<inf}	
             Here we rule out the trivial situation 
             $\PP \{R_t = 0 \mid z_{t-1} = z \} = 1$, in which case
             $\alpha' = 1$.}
    \end{enumerate}
\end{proposition}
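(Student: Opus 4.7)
The plan is to establish part (1) by showing that the Coleman operator $T$ from theorem \ref{t:ctra_T} preserves concavity in the first argument on $\cC$, and then to pass this property to the fixed point $c^*$ via the $\rho$-convergence $T^{nk}c \to c^*$. Part (2) then follows quickly from the concavity of $c^*(\cdot, z)$ combined with the uniform lower bound of proposition \ref{pr:opt_pol_bd_frac}.

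First I would take the initial policy $c_0(a, z) := a$, which lies in $\cC$ and is concave (indeed linear) in $a$. The core step is to show that whenever $c \in \cC$ is concave in its first argument, so is $Tc(\cdot, z)$. Write $\phi_c(s, z) := (u')^{-1}\bigl[\beta \EE_z \hat{R}\, (u' \circ c)(\hat{R} s + \hat{Y}, \hat{z})\bigr]$; by assumption \ref{a:concave}, $s \mapsto \phi_c(s, z)$ is concave, and since $c$ is increasing in $a$ and $u'$ is strictly decreasing, $\phi_c$ is strictly increasing in $s$. Hence $g(s) := s + \phi_c(s, z)$ is strictly increasing and concave, so its inverse $g^{-1}$ is strictly increasing and convex. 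From \eqref{eq:T_opr}--\eqref{eq:keypart_T_opr}, in the interior regime where $Tc(a,z) < a$ the savings $s = a - Tc(a,z)$ satisfy $g(s) = a$, so $Tc(a, z) = a - g^{-1}(a)$ is concave in $a$; in the constraint-binding regime $Tc(a, z) = a$ is linear. These two pieces match continuously at the threshold $\bar{a}(z)$ defined by $u'(\bar{a}(z)) = \beta \EE_z \hat{R}\, (u' \circ c)(\hat{Y}, \hat{z})$, and since the right derivative at the threshold equals $1 - (g^{-1})'(\bar{a}(z)^+) \leq 1$ (the left derivative), concavity is preserved across the join.

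Iterating yields that $T^{nk} c_0(\cdot, z)$ is concave in $a$ for every $k$. Theorem \ref{t:ctra_T} gives $\rho(T^{nk}c_0, c^*) \to 0$, i.e., uniform convergence of $u' \circ T^{nk}c_0$ to $u' \circ c^*$; since $(u')^{-1}$ is continuous, $T^{nk}c_0(a, z) \to c^*(a, z)$ pointwise on $\SS_0$, and pointwise limits of concave functions are concave, so $a \mapsto c^*(a, z)$ is concave on $(0, \infty)$ and extends to $[0, \infty)$ via the convention $c^*(0, z) = 0$ together with $c^*(a, z) \leq a$. For part (2), concavity with $c^*(0, z) = 0$ forces $a \mapsto c^*(a, z)/a$ to be non-increasing on $(0, \infty)$, so $\alpha'(z) := \lim_{a \to \infty} c^*(a, z)/a$ exists; proposition \ref{pr:opt_pol_bd_frac} gives $\alpha'(z) \geq \alpha$. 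If $\alpha'(z) = 1$ then $c^*(a, z) \equiv a$ in $a$, and the first-order condition would require $u'(a) \geq \beta \EE_z \hat{R}\, (u' \circ c^*)(\hat{Y}, \hat{z})$, whose right side is a strictly positive constant under our standing assumptions (which exclude $\PP\{\hat{R} = 0 \mid z\} = 1$), contradicting $u'(a) \to 0$ as $a \to \infty$; hence $\alpha'(z) < 1$.

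The main obstacle is the concavity-preservation step, specifically verifying that the piecewise expression for $Tc$ (linear on $(0, \bar{a}(z)]$ and $a - g^{-1}(a)$ on $(\bar{a}(z), \infty)$) glues into a globally concave function of $a$; this is exactly where the monotonicity of $g^{-1}$ (equivalently, the non-negativity of savings) does the essential work. A secondary point is to verify that $\rho$-convergence upgrades to pointwise convergence of the policies themselves, a routine consequence of the continuity and strict monotonicity of $u'$ on $(0, \infty)$.
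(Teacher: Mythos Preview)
Your proposal is correct and follows essentially the same route as the paper: show that the Coleman operator $T$ preserves concavity in the first argument and pass this to the fixed point $c^*$, with part (2) following from concavity together with the lower bound $c^*(a,z)\ge \alpha a$ from proposition~\ref{pr:opt_pol_bd_frac}. The only stylistic difference is in the concavity-preservation step: the paper argues by contradiction directly from the Euler equation (working inside the closed $T$-invariant subset $\cC_2$ of concave policies), whereas you give the constructive representation $Tc(a,z)=a-g^{-1}(a)$ in the interior region and read concavity off from the convexity of $g^{-1}$; both rest on assumption~\ref{a:concave} in the same way.
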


By proposition \ref{pr:optpol_concave}, as $a$ gets large, $c^*(a, z) \approx \alpha' a + b(z)$ for some function $b$, which is helpful for numerical computation. In the presence of capital income risk, there can be large realized values of wealth and consumption. This proposition then provides a justification for the linear extrapolation technology adopted when computing the optimal policy at large wealth levels.

\subsection{Global stability}
\label{ss:glb_stb}

We start with the case of {\sc iid} $\{z_t \}$ process, which allows us 
to exploit the monotonicity structure of the stochastic kernel $Q$. 
We then discuss general Markov $\{ z_t\}$ processes.
Since $Q$ is not generally monotone in these  
settings,\footnote{Since the optimal 
    policy $c^*(a,z)$ is not generally monotone in $z$, we cannot 
    conclude from \eqref{eq:dyn_sys} that $a_{t+1}$ is monotone in 
    $z_t$. Hence, $(a_{t+1}, z_{t+1})$ is not necessarily increasing 
    in $(a_t, z_t)$ and monotonicity might fail.}
global stability is established via a different approach.

\subsubsection{Case I: {\sc iid} $\{z_t\}_{t \geq 0}$ process}
\label{ss:gs_iid}

In this case, both $\{ R_t\}$ and $\{ Y_t\}$ are {\sc iid} processes, 
though dependence between $\{ R_t\}$ and $\{ Y_t\}$ are allowed. 
The optimal policy is then a function of asset only, and the transition 
function \eqref{eq:dyn_sys} reduces to
\begin{equation}
\label{eq:dyn_sys_iid}
  a_{t+1} 
  = R_{t+1} \left[ a_t - c^*(a_t) \right] + Y_{t+1}.
\end{equation}
In particular, we have a Markov process $\{ a_t \}_{t \geq 0}$ taking values in $\RR_+$. The next result extends theorem~3 of \cite{benhabib2015wealth}.

\begin{theorem}
    \label{t:gs_iid}
    If assumptions \ref{a:utility}, \ref{a:ctra_coef}--\ref{a:conti_ereuprm}, 
    \ref{a:suff_bd_in_prob}--\ref{a:bd_in_prob_Yt} and \ref{a:concave} hold, then $Q$ is globally stable.\footnote{Since 
          $\{ z_t\}$ is {\sc iid}, conditional expectations reduce to unconditional ones. 
          Hence, to verify assumptions \ref{a:ctra_coef}--\ref{a:conti_ereuprm} and 
          \ref{a:bd_in_prob_Yt}, it suffices to show:
          $\EE R_t^2 < \infty$, $\beta \EE R_t < 1$,
          $\EE Y_t < \infty$ and $\EE [u'(Y_t)]^2 < \infty$.} 
\end{theorem}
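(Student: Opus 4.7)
The plan is to exploit monotonicity of the reduced one-dimensional kernel and invoke the global-stability machinery for monotone Markov chains developed in \cite{kamihigashi2014stochastic} and \cite{kamihigashi2016seeking}.

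\emph{Step 1 (Reduction and Feller.)} Since $\{z_t\}$ is i.i.d., its contribution can be absorbed into the shock and the state collapses to $\{a_t\}$ on $\RR_+$ with law of motion \eqref{eq:dyn_sys_iid}. Assumption \ref{a:z_bdd_in_prob} holds automatically in the i.i.d.\ case, so theorem \ref{t:sta_exist} supplies existence of a stationary distribution and boundedness in probability of $Q$. Continuity of $c^* \in \cC$, together with dominated convergence, yields the Feller property of $Q$.

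\emph{Step 2 (Monotonicity of $Q$.)} I would show that the savings function $s^*(a) := a - c^*(a)$ is nondecreasing and Lipschitz with constant at most $1-\alpha'$, where $\alpha' \in [\alpha,1)$ is the asymptotic marginal propensity to consume from proposition \ref{pr:optpol_concave}. Concavity of $c^*$ together with $c^*(0)=0$ gives the standard bound $c^{*\prime}(a+) \leq c^*(a)/a \leq 1$, while monotonicity of $c^{*\prime}(a+)$ combined with $c^*(a)/a \to \alpha'$ as $a \to \infty$ forces $c^{*\prime}(a+) \geq \alpha'$; thus the subgradient of $s^*$ lies in $[0, 1-\alpha']$. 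Under a common-shock coupling the map $a_t \mapsto a_{t+1} = R_{t+1} s^*(a_t) + Y_{t+1}$ is therefore pathwise nondecreasing, so $Q$ is monotone in the stochastic order.

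\emph{Step 3 (Coupling contraction.)} Let $(a_t, a_t')$ be trajectories coupled from $a_0 \leq a_0'$ with the same shocks. Step 2 gives
\begin{equation*}
0 \,\leq\, a_{t+1}' - a_{t+1} \,=\, R_{t+1}\bigl[s^*(a_t') - s^*(a_t)\bigr] \,\leq\, (1-\alpha') R_{t+1} (a_t' - a_t).
\end{equation*}
Iterating and using independence gives $\EE(a_t' - a_t) \leq [(1-\alpha') \EE R]^t (a_0' - a_0)$. Because $\alpha' \geq \alpha$ and assumption \ref{a:suff_bd_in_prob}(2) implies $(1-\alpha)\EE R < 1$ in the i.i.d.\ case (since $[(1-\alpha)\EE R]^n < 1$), the right-hand side vanishes geometrically, so $a_t' - a_t \to 0$ in $L^1$ and in probability.

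\emph{Step 4 (Conclude.)} Monotonicity, Feller property, tightness, existence of a stationary distribution, and the coupling contraction together satisfy the hypotheses of the global-stability theorem for monotone Markov kernels in \cite{kamihigashi2014stochastic} (with the weak-convergence refinement from \cite{kamihigashi2016seeking}), yielding a unique $\psi \in \pP(\RR_+)$ with $\psi_0 Q^t \stackrel{w}{\to} \psi$ for every initial $\psi_0$, i.e., global stability of $Q$. The main obstacle is the Lipschitz bound on $s^*$ in step 2: this bound is precisely what feeds the contraction in step 3, and it is available only through the concavity of $c^*$, which is why the extra assumption \ref{a:concave} is required in the hypotheses of the theorem.
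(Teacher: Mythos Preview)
Your argument is sound, but it diverges from the paper's route after Step~2. The paper verifies the \emph{order-reversing} hypothesis of theorem~3.2 in \cite{kamihigashi2014stochastic}: it takes two \emph{independent} trajectories $\{a_t\}, \{a_t'\}$ with $a_0' \leq a_0$ and uses lemma~\ref{lm:bind_fntime} (the borrowing constraint binds in finite time with positive probability) to conclude that $a_{t+1} = Y_{t+1} \leq a_{t+1}'$ occurs with positive probability at that binding time. Your Step~3 instead couples the trajectories with \emph{common} shocks and proves an $L^1$ contraction via the Lipschitz bound $s^{*\prime} \in [0,1-\alpha']$. This is a legitimate and in fact quantitatively sharper route (it yields the geometric rate $[(1-\alpha')\EE R]^t$), but it is \emph{not} the order-reversing condition that \cite{kamihigashi2014stochastic} actually requires, so the citation in Step~4 is misplaced: your coupling contraction already delivers uniqueness and weak convergence directly by a standard Wasserstein argument, without routing through that reference. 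Both proofs lean on assumption~\ref{a:concave}: the paper uses concavity only to obtain that $a \mapsto a - c^*(a)$ is nondecreasing (hence $Q$ is increasing), whereas you extract from it the additional lower bound $c^{*\prime}(a+) \geq \alpha'$ that drives the contraction.
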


Let $\psi^*$ be the unique stationary distribution of $Q$, obtained in 
theorem \ref{t:gs_iid}. Let $\lL$ be the linear span of the set of 
increasing $\psi^*$-integrable functions
$h \colon \RR_+ \to \RR$.\footnote{In other words, 
    $\lL$ is the set of all $h \colon \RR_+ \to \RR$ such that 
    $h = \alpha_1 h_1 + \cdots + \alpha_k h_k$ for some scalars
    $\{ \alpha_i\}_{i=1}^k$ and increasing measurable $\{h_i \}_{i=1}^k$
    with $\int |h_i| \diff \psi^* < \infty$.} 
Recall that $bc \RR_+$ is the set of continuous bounded functions 
$h \colon \RR_+ \to \RR$. The following theorem shows that the Law of Large 
Numbers holds in this framework.

\begin{theorem}
\label{t:LLN_iid}
If the assumptions of theorem \ref{t:gs_iid} hold, then the following statements hold:

\begin{enumerate}
  \item For all $\mu \in \pP(\RR_+)$ and $h \in \lL$, we have
    \begin{equation*}
      \PP_{\mu}
                \left\{
                   \lim_{T \to \infty} 
                   \frac{1}{T}
                   \sum_{t=1}^T h(a_t)
                   = \int h \diff \psi^*
                \right\}
      = 1.
    \end{equation*}
  
  \item For all $\mu \in \pP(\RR_+)$, we have
    \begin{equation*}
      \PP_{\mu}
                \left\{
                   \lim_{T \to \infty}
                   \frac{1}{T}
                   \sum_{t=1}^T h(a_t)
                   = \int h \diff \psi^*
                   \; \text{ for all }
                   h \in cb \RR_+
                \right\}
       = 1.
    \end{equation*}
\end{enumerate}
\end{theorem}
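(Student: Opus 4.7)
The plan is to combine Birkhoff's ergodic theorem applied to the stationary version of the chain, a pathwise monotone coupling, and a countable convergence-determining class: Birkhoff yields the LLN from the stationary start, monotonicity extends it to arbitrary initial laws, and the determining class together with almost-sure tightness of the empirical measures upgrades Part 1 to the simultaneous statement in Part 2.

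Concretely, I first construct a canonical version on $(\Omega, \fF, \PP)$ driven by an {\sc iid} shock sequence $\{(R_t, Y_t)\}_{t\geq 1}$, so that for any initial state $a_0$ independent of the shocks, $a_{t+1} = F(a_t, R_{t+1}, Y_{t+1})$ with $F(x, R, Y) := R[x - c^*(x)] + Y$. By Proposition \ref{pr:optpol_concave}, $c^*$ is concave with $c^*(0) = 0$ and $c^*(x) \leq x$, so $x \mapsto x - c^*(x)$ is convex, nonnegative, and vanishes at zero, hence nondecreasing; thus $F$ is nondecreasing in its first argument, giving a pathwise monotone coupling under common shocks. Let $\{a_t^*\}$ denote the version with $a_0^* \sim \psi^*$ independent of the shocks. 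Uniqueness of $\psi^*$ (Theorem \ref{t:gs_iid}) makes the stationary process ergodic, so Birkhoff's theorem yields $T^{-1}\sum_{t=1}^T h(a_t^*) \to \int h \diff \psi^*$ almost surely for every $\psi^*$-integrable $h$.

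To establish Part 1, fix an increasing $\psi^*$-integrable $h$ and $\mu \in \pP(\RR_+)$. The common-shock monotone coupling of a chain $\{a_t\}$ started from $\mu$ with the stationary chain $\{a_t^*\}$ allows the time average $T^{-1}\sum h(a_t)$ to be sandwiched between two time averages built from chains started at the coupled min and max of the initial conditions; both of these converge almost surely to $\int h \diff \psi^*$ by the stationary-version LLN combined with the order-theoretic stability machinery of Kamihigashi and Stachurski already invoked in the proof of Theorem \ref{t:gs_iid}. Integrating in $x$ against $\mu$ and using the linearity of the time average then establishes Part 1 for every $h \in \lL$.

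For Part 2, first note that each indicator $\1_{(N, \infty)}$ is bounded and increasing and therefore lies in $\lL$; Part 1 then yields $\mu_T((N, \infty)) \to \psi^*((N, \infty))$ almost surely for every $N \in \NN$, where $\mu_T := T^{-1}\sum_{t \leq T}\delta_{a_t}$, and a countable intersection over $N$ gives almost-sure tightness of $\{\mu_T\}$. Next, choose a countable convergence-determining class $\hH \subset bc\RR_+$ whose elements are differences of two bounded nondecreasing continuous functions---for instance, a sup-norm dense countable subset of compactly supported Lipschitz functions decomposed via positive and negative variation. Each component lies in $\lL$, so by Part 1 and a countable union over $\hH$, on a single full-measure event $\int h \diff \mu_T \to \int h \diff \psi^*$ holds for every $h \in \hH$. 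Tightness plus convergence on $\hH$ forces $\mu_T \tod \psi^*$ almost surely, which is precisely the statement of Part 2.

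The main obstacle I anticipate is the sandwich step in Part 1: rigorously transferring the LLN from the stationary start to arbitrary $\mu$ for unbounded $\psi^*$-integrable increasing $h$ requires uniform integrability along the coupled paths. The ingredients are all present---boundedness in probability of $Q$ (Theorem \ref{t:sta_exist}), the lower savings rate of Proposition \ref{pr:opt_pol_bd_frac}, and the moment bound $\EE Y_t < \infty$---but assembling them into a clean coupling/uniform-integrability estimate is where the order-theoretic stability results for monotone economies must carry the weight.
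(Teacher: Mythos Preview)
Your approach is essentially the same as the paper's: both rest on the monotone-economy ergodicity framework of Kamihigashi and Stachurski (2016). The paper's proof is more compressed---it simply notes that $Q$ was already shown (in the proof of Theorem~\ref{t:gs_iid}) to be increasing, bounded in probability, and order reversing, invokes Proposition~4.1 of that reference to conclude $Q$ is \emph{monotone ergodic}, and then reads off the two claims directly from Corollary~3.1 and Theorem~3.2 of the same reference---whereas you sketch the internals of those cited results (Birkhoff on the stationary chain, pathwise monotone coupling, countable convergence-determining class). You correctly identify that the sandwich step for unbounded $\psi^*$-integrable increasing $h$ is the delicate point, and that this is exactly where the order-reversing property and the Kamihigashi--Stachurski coupling machinery must carry the argument; the paper simply cites the black-box result rather than reproving it.
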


\subsubsection{Case II: Markovian $\{ z_t\}_{t \geq 0}$ process}
\label{ss:gs_general}

In this case, $\{ R_t\}$ and $\{ Y_t\}$ are in general non-{\sc iid} and mutually dependent processes.\footnote{Since this framework encorporates the {\sc iid} 
    $\{z_t\}$ structure as a special case, this section provides an alternative
    ergodic theory for the {\sc iid} framework as a byproduct. By comparing the 
    assumptions of theorem \ref{t:gs_iid} and those of theorem 
    \ref{t:gs_gnl_ergo_LLN} below, we see that the latter holds without 
    assumption \ref{a:concave}, so neither of the two theories 
    is more powerful than the other. } 

We assume that the stochastic processes $\{z_t\}$ and $\{Y_t\}$ admit 
density representations denoted respectively by $p \left(z' \mid z \right)$ 
and $f_L \left( Y \mid z \right)$. Specifically, there exists a nontrivial measure $\vartheta$ on $\bB(\ZZ)$ such that
\begin{equation*}
    P(z, A) = \int_A p(z'\mid z) \vartheta (\diff z'),
    \qquad \left( A \in \bB(\ZZ), z \in \ZZ \right),
\end{equation*}
and for $\diff Y := \lambda (\diff Y)$, where $\lambda$ is the Lebesgue measure, 
\begin{equation*}
    \PP \{ Y_t \in A \mid z_t = z \} 
    = \int_A f_L (Y \mid z) \diff Y,
    \qquad \left( A \in \bB(\RR_+), z \in \ZZ \right).
\end{equation*}
%

\begin{assumption}
	\label{a:pos_dens}
	The following conditions hold:
	\begin{enumerate}
	  \item the support of $\vartheta$ contains a compact subset $\CC$ that has 
	      nonempty interior,\footnote{The \textit{support} of the measure 
	      	  $\vartheta$ is defined as the set of points $z \in \ZZ$ for which
	      	  every open neighborhood of $z$ has positive $\vartheta$ measure.}
	  \item $p \left( z' \mid z \right)$ is strictly positive on 
	      $\CC \times \ZZ$ and continuous in $z$, and
	  \item there exists $\delta_Y > 0$ such that $f_L \left( Y \mid z \right)$  
	      is strictly positive on $(0, \delta_Y) \times \CC$.
	\end{enumerate}
\end{assumption}

Assumption \ref{a:pos_dens} is easy to verify in applications. The following examples are some simple illustrations, while more complicated applications are treated in section \ref{s:app}.

\begin{example}
	\label{ex:suff_pos_dens_ctb}
	If $\ZZ$ is a countable subset of $\RR^m$, then $\{ z_t \}$ is a countable 
	state Markov chain, in which case $\vartheta$ is the counting measure and 
	$p(z'\mid z)$ reduces to a transition matrix $\Pi$. In particular, each 
	single point in $\ZZ$ is a compact subset in the support of $\vartheta$ that 
	has nonempty interior (itself), and $p$ is continuous in $z$ by definition. 
	Hence, conditions~(1)--(2) of assumption \ref{a:pos_dens} hold as long as at 
	least one column of $\Pi$ is strictly positive (i.e., each element of that  
	column is positive).
\end{example}

\begin{example}
	\label{ex:suff_pos_dens_Leb}
	Since $\ZZ$ is a Borel subset of $\RR^m$, if $\vartheta$ can be chosen as the 
	Lebesgue measure, then condition~(1) of assumption \ref{a:pos_dens} holds 
	trivially. Indeed, since $P(z, \ZZ) =1$, the support of $\vartheta$ must 
	contain a nonempty open box (i.e., sets of the form $\Pi_{i=1}^m (a_i, b_i)$ 
	with $a_i < b_i$, $i = 1, \cdots, m$), inside which a compact subset with 
	nonempty interior can be found.
\end{example}

For all measurable map $f \colon \SS \to [1, \infty)$ and $\mu \in \pP(\SS)$, 
we define
\begin{equation*}
	\left\| \mu \right\|_f 
	:= \sup_{g: |g| \leq f} 
	\left| \int g \diff \mu \right|.
\end{equation*}
We say that the stochastic kernel $Q$ corresponding to $\{ (a_t,z_t)\}_{t \geq 0}$ is \textit{$f$-ergodic} if 
\begin{enumerate}
	\item[(a)] there exists a unique stationary distribution 
	    $\psi^* \in \pP(\SS)$ such that  $\psi^* Q = \psi^*$,
	\item[(b)] $f \geq 1$, $\int f \diff \psi^* < \infty$, and, for all 
	    $(a,z) \in \SS$,
		\begin{equation*}
		    \lim_{t \to \infty} 
		    \| Q^t \left( (a,z), \cdot \right) - \psi^* \|_f = 0.
		\end{equation*}
\end{enumerate}
We say that $Q$ is \emph{$f$-geometrically ergodic} if, in addition, there exist constants $r > 1$ and $M \in \RR_+$ such that, for all $(a,z) \in \SS$, 
\begin{equation*}
	\sum_{t \geq 0} r^t 
	\left\| Q^t ((a,z), \cdot) - \psi^* \right\|_f 
	\leq M f(a,z).
\end{equation*}
In particular, if $f \equiv 1$, then $Q$ is called \textit{ergodic}/\textit{geometrically ergodic}. 


The following theorem establishes ergodicity and the Law of Large Numbers. Notably, assumption \ref{a:concave} is {\sc not} required for these results.

\begin{theorem}
	\label{t:gs_gnl_ergo_LLN}
	If assumptions \ref{a:utility}, \ref{a:ctra_coef}--\ref{a:conti_ereuprm}, \ref{a:suff_bd_in_prob}--\ref{a:z_bdd_in_prob} and \ref{a:pos_dens} hold, then 
	\begin{enumerate}
		\item $Q$ is ergodic, in particular,
		\begin{equation*}
		    \sup_{A \in \bB(\SS)} 
		    \left|Q^t \left( (a,z), A \right) - \psi^* (A) \right| \to 0
		    \quad \text{as } \, t \to \infty.
		\end{equation*}
		\item For all $\mu \in \pP(\SS)$ and map $h: \SS \to \RR$ with $\int |h| \diff \psi^* < \infty$, 
		\begin{equation*}
		    \PP_{\mu} 
		    \left\{
		        \lim_{T \to \infty} \sum_{t=1}^T h(a_t, z_t) = \int h \diff \psi^*  
		    \right\}
		    = 1.
		\end{equation*}
	\end{enumerate}
\end{theorem}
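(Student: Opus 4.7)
The plan is to verify that $Q$ satisfies the three conditions under which the standard aperiodic-ergodic theory of $\psi$-irreducible Markov chains (Meyn--Tweedie) yields convergence in total variation and the Law of Large Numbers: (i) $\psi$-irreducibility, (ii) a one-step minorization on a suitable small set (hence strong aperiodicity), and (iii) positive Harris recurrence. Existence of at least one stationary distribution $\psi^*$ is already provided by theorem \ref{t:sta_exist}, so once (i)--(iii) are in hand, uniqueness of $\psi^*$ together with part (1) follows from the standard convergence theorem for aperiodic positive Harris chains, and part (2) from the Harris ergodic theorem.

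The core of the argument is to use assumption \ref{a:pos_dens} and the lower bound $c^*(a,z) \geq \alpha a$ from proposition \ref{pr:opt_pol_bd_frac} to produce a one-step minorization on a compact rectangle $C := [0, M] \times \CC_0$, where $\CC_0 \subset \CC$ is compact with nonempty interior. On $C$ the saving level $s := a - c^*(a,z)$ is bounded by $(1 - \alpha) M$, so, after restricting the innovation $\zeta_1$ to a bounded set $Z_0 \subset \mathrm{supp}\,\nu$ on which $R(z', \zeta_1)$ is uniformly bounded for $z' \in \CC_0$ (possible by the continuity of $R$ in $z$ from assumption \ref{a:conti_ereuprm}), the shift $R(z', \zeta_1)\, s$ stays in a fixed compact interval as $(a,z)$ ranges over $C$. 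The next-period asset $a_1 = R(z', \zeta_1)\, s + Y(z', \eta_1)$ then admits Lebesgue density $y \mapsto f_L\bigl(y - R(z', \zeta_1)\, s \mid z'\bigr)$, which by assumption \ref{a:pos_dens}(3) is strictly positive on an interval of length $\delta_Y$. Combining these positivities via a compactness-continuity argument and integrating against $p(z'\mid z)\, \vartheta(\diff z')$ on $\CC_0$ (which is bounded below by assumption \ref{a:pos_dens}(2)) yields a minorization of the form $Q((a,z), A) \geq \varepsilon\, \nu(A)$ for all $(a,z) \in C$ and all Borel $A$, where $\nu$ is a nontrivial probability measure absolutely continuous with respect to $\lambda \otimes \vartheta$ on a subset of $(y_0, y_0 + \delta) \times \CC_0$. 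This single inequality provides both the small-set property of $C$ and strong aperiodicity.

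To close, bounded-in-probability of $Q$ from theorem \ref{t:sta_exist} together with assumption \ref{a:z_bdd_in_prob} ensures that every orbit returns to $C$ (for $M$ and $\CC_0$ taken large enough) infinitely often with positive frequency; combined with the minorization this yields $\psi$-irreducibility with $\psi := \nu$ and Harris recurrence, which the existence of $\psi^*$ then upgrades to \emph{positive} Harris recurrence. The Meyn--Tweedie machinery delivers the total-variation convergence in part (1) and the LLN in part (2). The principal difficulty will be the minorization step, which requires uniform control in $(a,z) \in C$ of the density of $Q((a,z), \cdot)$; this forces a careful coordinated choice of the compact $\CC_0$, the truncation $Z_0$ of $\zeta_1$, and the window on which $f_L$ is minorized, and it is precisely what the density hypotheses in assumption \ref{a:pos_dens} and the continuity hypotheses in assumption \ref{a:conti_ereuprm} are tailored to enable.
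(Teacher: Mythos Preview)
Your high-level strategy (establish $\psi$-irreducibility, a one-step minorization giving strong aperiodicity, and positive Harris recurrence, then invoke Meyn--Tweedie) is the same as the paper's. However, your concrete implementation of the minorization/irreducibility step has a genuine gap.

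The problem is a tension between the size of your compact rectangle $C = [0,M]\times\CC_0$ and the two things you need it to do. For the \emph{minorization} you need a common interval of $a'$-values on which the transition density is bounded below uniformly in $(a,z)\in C$. But with savings $s = a - c^*(a,z)$ ranging over $[0,(1-\alpha)M]$ and $R$ merely bounded on your truncation set $Z_0$, the window where $y\mapsto f_L(y - Rs\mid z')$ is positive is $(Rs,\,Rs+\delta_Y)$, which \emph{slides} with $s$; as soon as $(1-\alpha)M\cdot R_{\max}\geq\delta_Y$ these windows have empty intersection and no uniform lower bound exists. So the minorization forces $M$ to be small. For the \emph{irreducibility/return} part you assert that ``bounded in probability \dots\ ensures that every orbit returns to $C$ (for $M$ and $\CC_0$ taken large enough)''. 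But tightness of $\{Q^t((a,z),\cdot)\}$ only says mass does not escape to infinity; it does \emph{not} force the asset coordinate to dip below any prescribed level $M$. (Nothing in the stated assumptions rules out, a priori, trajectories that stay bounded but remain above a fixed threshold with probability one.) So the return argument forces $M$ to be large. You cannot have both.

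The paper resolves this by a different mechanism that you are missing: it first proves (lemma \ref{lm:bind_fntime}) that from \emph{every} $(a,z)$ the borrowing constraint binds in finite time with positive probability, using the Euler equation and assumption \ref{a:ctra_coef}. Combined with lemma \ref{lm:binding} and the fact that $\inf_{z}\bar a(z)>0$ (lemma \ref{lm:inf_abar}), this lets one take the small set to be $\DD=[0,\min\{\delta_Y,\inf_z\bar a(z)\}]\times\CC$, on which $c^*(a,z)=a$ identically, so $s\equiv 0$ and $a_1=Y_1$; the minorization is then immediate from assumption \ref{a:pos_dens} with no window-shifting issue. Irreducibility to $\DD$ comes from the binding lemma, not from bounded-in-probability. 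Harris recurrence is then obtained via the Feller/bounded-in-probability/$\psi$-irreducible route (Meyn--Tweedie, theorem 18.3.2), not from a direct return-time argument. In short, the key idea you are missing is the ``constraint binds in finite time'' lemma; without it your return-to-$C$ step is unjustified, and with it your delicate truncation-of-$\zeta_1$ construction is unnecessary.
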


We next show that geometric ergodicity is guaranteed under some further 
assumptions. Suppose $\{R_t\}$ admits a density representation $f_C(R \mid z)$, in other words, 
\begin{equation*}
	\PP \{ R_t \in A \mid z_t = z \}
	= \int_A f_C (R \mid z) \diff R,
	\qquad \left( A \in \bB (\RR), \; z \in \ZZ \right),
\end{equation*}
where $\diff R := \lambda (\diff R)$. Recall the {\sc iid} innovations $\{ \zeta_t\}$ and $\{ \eta_t \}$ defined by \eqref{eq:RY_func} and the compact subset $\CC \subset \ZZ$ defined by assumption 
\ref{a:pos_dens}.

\begin{assumption}
	\label{a:geo_drift_Yt}
	The following conditions hold:
	\begin{enumerate}
		\item there exists $\delta_R > 0$ such that $f_C (R \mid z)$ is strictly 
		positive on  $(0, \delta_R) \times \CC$, 
				
		\item there exist $q \in [0,1)$ and $q' \in \RR_+$ such that 
		$\EE_z Y_2 \leq q \EE_z Y_{1} + q'$ for all $z \in \ZZ$,
		
		\item the innovations $\{ \zeta_t \}$ and $\{ \eta_t\}$ are mutually 
		independent.
	\end{enumerate}
\end{assumption}

\begin{example}
	\label{ex:suff_geodrift_Y}
	If either $\{Y_t\}$ is a bounded process or $\ZZ$ is a finite set, then the second condition of assumption \ref{a:geo_drift_Yt} holds trivially. In 
	particular, if $\ZZ$ is finite, then we can let $q$ be an arbitrary number in 
	$[0,1)$ and let $q':= \sup_{z \in \ZZ} \EE_z Y_2$, which is finite by 
	assumption \ref{a:Y_sum}. More general examples are discussed in the next 
	section.
\end{example}

Let the measurable map $V \colon \SS \to [1, \infty)$ be defined by
\begin{equation}
\label{eq:V_func}
  V(a,z) := a + m \, \EE_z \hat{Y} + 1,
\end{equation}
where $m$ is a sufficiently large constant defined in the proof of theorem \ref{t:gs_gnl} below.

\begin{theorem}
	\label{t:gs_gnl}
	If assumptions \ref{a:utility}, \ref{a:ctra_coef}--\ref{a:conti_ereuprm} and 
	\ref{a:suff_bd_in_prob}--\ref{a:geo_drift_Yt} hold, then $Q$ is $V$-geometrically 
	ergodic.
\end{theorem}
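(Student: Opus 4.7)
The plan is to verify the four classical sufficient conditions of the Meyn–Tweedie framework for $V$-geometric ergodicity of the state kernel $Q$: $\psi$-irreducibility, strong aperiodicity, the existence of a small set, and a (possibly $n$-step) geometric drift inequality. The first three properties extend the density-based minorization already exploited in the proof of Theorem \ref{t:gs_gnl_ergo_LLN}, while the fourth is the main new ingredient and relies on iterating Proposition \ref{pr:opt_pol_bd_frac} together with Assumption \ref{a:geo_drift_Yt}(2).

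First, I would refine the minorization structure. By Assumption \ref{a:pos_dens} the transition density $p(z' \mid z)$ is strictly positive on $\CC \times \ZZ$ and $f_L(Y \mid z)$ is strictly positive on $(0,\delta_Y) \times \CC$; Assumption \ref{a:geo_drift_Yt}(1) adds $f_C(R \mid z) > 0$ on $(0,\delta_R) \times \CC$, and the mutual independence of $\{\zeta_t\}$ and $\{\eta_t\}$ (Assumption \ref{a:geo_drift_Yt}(3)) implies that the conditional density of $(R_{t+1}, Y_{t+1})$ given $z_{t+1}$ factors as $f_C(R \mid z_{t+1}) f_L(Y \mid z_{t+1})$. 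Combining this factorization with \eqref{eq:dyn_sys} yields a strictly positive conditional density for $(a_{t+1}, z_{t+1})$ on a product of open subsets of $(0,\infty) \times \interior(\CC)$ whenever $(a_t, z_t)$ lies in a set $C_N := [0,N] \times \CC$. Hence every $C_N$ is a small set, and $Q$ is $\psi$-irreducible and strongly aperiodic.

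Second, I would establish the geometric drift inequality. Using Proposition \ref{pr:opt_pol_bd_frac}, which gives $a_1 \leq (1-\alpha) R_1 a + Y_1$, and Assumption \ref{a:geo_drift_Yt}(2):
\begin{align*}
    (QV)(a,z)
    &= \EE_{a,z}\bigl[ a_1 + m \, \EE_{z_1} \hat Y + 1 \bigr] \\
    &\leq (1-\alpha) a \, \EE_z \hat R + \EE_z \hat Y + m \bigl( q \, \EE_z \hat Y + q' \bigr) + 1.
\end{align*}
Because $\EE_z \hat R$ need not be below $1/(1-\alpha)$ pointwise, the one-step inequality cannot deliver contraction on $a$. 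I would therefore iterate $n$ times, where $n$ is taken from Assumption \ref{a:suff_bd_in_prob}(2). By induction on the number of steps, using the tower property, Proposition \ref{pr:opt_pol_bd_frac}, and Assumption \ref{a:geo_drift_Yt}(2), one obtains
\begin{equation*}
    (Q^n V)(a,z) \leq (1-\alpha)^n a \, \sup_{z \in \ZZ} \EE_z R_1 \cdots R_n + \Phi(z),
\end{equation*}
where $\Phi(z)$ collects the bounded residue generated by the accumulated $Y$-expectations and the constant term. With $\theta := (1-\alpha)^n \sup_z \EE_z R_1 \cdots R_n < 1$ and $m$ chosen sufficiently large so that $\Phi(z) \leq \tilde\theta \cdot m \, \EE_z \hat Y$ for some $\tilde\theta < 1$ on $C_N^c$ with $N$ large, we obtain $Q^n V \leq \theta' V + b \, \1_{C_N}$ for some $\theta' \in (\max\{\theta, \tilde\theta\}, 1)$ and $b < \infty$.

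Finally, $n$-step geometric drift combined with aperiodicity and the small set $C_N$ yields $V$-geometric ergodicity via the standard Meyn–Tweedie theorem (equivalently, by replacing $V$ with $\sum_{k=0}^{n-1} (\theta')^{-k/n} Q^k V$ to reduce the $n$-step drift to a one-step drift). The main obstacle is the drift step itself: the one-step inequality is structurally insufficient because $(1-\alpha)\EE_z \hat R$ can exceed $1$ along certain $z$-trajectories, so contraction only emerges after $n$ iterations via Assumption \ref{a:ctra_coef}. Carrying the iteration through requires keeping the growing $Y$-contributions under control, which is precisely where the geometric drift hypothesis on $\{Y_t\}$ (Assumption \ref{a:geo_drift_Yt}(2)) is indispensable; without it, the labor income residue accumulated over $n$ steps would destroy the contraction. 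The correct choice of $m$ so that $m \, \EE_z \hat Y$ absorbs $\Phi(z)$ is the most delicate part of the bookkeeping.
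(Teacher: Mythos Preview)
Your overall strategy---verify $\psi$-irreducibility, strong aperiodicity, a petite set, and an $n$-step geometric drift for $V$, then invoke the Meyn--Tweedie machinery---is exactly the route the paper takes, and your drift computation is essentially the paper's: iterate the bound $a_{k+1}\le(1-\alpha)R_{k+1}a_k+Y_{k+1}$ for $n$ steps, use Assumption~\ref{a:suff_bd_in_prob}(2) to contract the $a$-component, and iterate Assumption~\ref{a:geo_drift_Yt}(2) to contract the $m\,\EE_z\hat Y$-component, choosing $m$ large so that $L/m+q^n<1$. The paper's petite set, however, is different: it shows via a \emph{two}-step minorization (relying on the monotonicity of $a\mapsto a-c^*(a,z)$, which in turn comes from the concavity Assumption~\ref{a:concave}) that every strip $[0,d]\times\{z\}$ is petite, rather than working with $C_N=[0,N]\times\CC$.

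The substantive gap in your proposal is the absorption step ``$\Phi(z)\le\tilde\theta\,m\,\EE_z\hat Y$ on $C_N^c$ with $N$ large.'' The complement $C_N^c$ contains every point $(a,z)$ with $z\notin\CC$ and $a$ arbitrarily small. At such points $V(a,z)=a+m\,\EE_z\hat Y+1$ can be small (nothing in the hypotheses forces $\EE_z\hat Y$ to be large off $\CC$), while the residue $\Phi(z)$ always contains the additive constant $\tilde L+mq''+1\ge1$, which cannot be dominated by $\tilde\theta\,m\,\EE_z\hat Y$ when $\EE_z\hat Y$ is near zero. Enlarging $N$ only helps on $\{a>N\}$ (where the slack $(\theta'-\theta)a$ absorbs the constant); it does nothing on $\{a\le N,\;z\notin\CC\}$. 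Consequently $Q^nV\le\theta'V+b\1_{C_N}$ fails there. What is really needed is a petite set containing a full sublevel set $\{V\le K\}$; your one-step density minorization on $C_N$ does not deliver this, and this is precisely why the paper brings in the concavity-driven monotonicity and the density Assumption~\ref{a:geo_drift_Yt}(1) to build its two-step petite set differently.
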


\section{Applications}
\label{s:app}

We now turn to several substantial applications of the theory described above. We first illustrate how our theory can be applied to modeling capital income risk in different situations. We then provide a numerical example and study the quantitative effect of stochastic volatility and mean persistence of the wealth return process on wealth inequality.

Throughout this section, we work with the CRRA utility function defined by \eqref{eq:crra_utils}. Recall that $\gamma > 0$ is the coefficient of relative risk aversion.

\subsection{Modeling Capital Income Risk}
\label{ss:app_cir_modeling}

Suppose the income process contains both persistent and transient components (see, e.g., \cite{blundell2008consumption}, \cite{browning2010modelling},
\cite{heathcote2010macroeconomic}, \cite{kaplan2010much}, 
\cite{kaplan2012inequality}, \cite{debacker2013rising}, 
and \cite{carroll2017distribution}). In particular, we consider 
\begin{equation*}
    \label{eq:app_Yt}
  \log Y_{t} = \chi_{t} + \eta_{t}, 
\end{equation*}
where the persistent component $\{\chi_t \}_{t \geq 0}$ is a finite-state Markov 
chain with transition matrix $\Pi_\chi$, and the transient component 
$\left\{ \eta_{t} \right\}_{t \geq 1}$ is an {\sc iid} sequence with $\EE \me^{\eta_{t}} < \infty$ and $\EE \me^{-2 \gamma \eta_{t}} < \infty$.
Moreover, $\{ \chi_t\}$ and $\{ \eta_t\}$ are mutually independent.

As a natural extension of the {\sc iid} financial return process assumed by \cite{benhabib2015wealth}, we consider $\{ R_{t}\}_{t \geq 1}$ taking  
form of
\begin{equation*}
\label{eq:app_Rt}
  \log R_{t} = \mu_t + \sigma_t \zeta_{t},
\end{equation*}
where $\{ \zeta_t\}_{t \geq 1} \iidsim N(0,1)$, $\{\mu_t\}_{t \geq 0}$ and $\{\sigma_t\}_{t \geq 0}$ are respectively finite-state Markov chains with  transition matrices $\Pi_\mu$ and $\Pi_\sigma$, $\{\sigma_t \}$ is positive, and $\{\mu_t \}$, $\{\sigma_t \}$ and $\{\zeta_t \}$ are mutually independent.\footnote{Note that $\{Y_t\}$ and 
	$\{R_t \}$ are allowed to be dependent on each other since, for example, we 
	allow $\{ \chi_t \}$ and $\{\mu_t\}$ to be mutually dependent, as we do for 
	$\{\eta_t \}$ and $\{ \sigma_t \}$, etc. }
Such a setup, as it appears, allows us to capture both mean persistence and stochastic volatility.

The state spaces of $\{\chi_t \}, \{ \mu_t \}$ and $\{ \sigma_t \}$ are  respectively (sorted in increasing order)
\begin{equation*}
    \ZZ_\chi := \{ \ell_1, \cdots, \ell_K \}, \quad
    \ZZ_{\mu} := \{i_1, \cdots, i_M \} 
    \quad \text{and} \quad 
    \ZZ_{\sigma} := \{j_1, \cdots, j_N \}.
\end{equation*}
Let $\text{diag }(\cdot)$ be the diagonal matrix created by elements in the bracket, and let
\begin{equation*}
    D_{\mu} := \text{diag} 
    \left(
        \me^{i_1}, \cdots, \me^{i_M}
    \right)
    \quad \text{and} \quad
    D_\sigma := \text{diag} 
    \left(
    \me^{j_1^2 / 2}, \cdots, \me^{j_N^2 / 2}
    \right).
\end{equation*}
Furthermore, we define the column vectors
\begin{equation*}
    V_{\mu} := \left( 
            \me^{(1-\gamma) i_1}, \cdots, \me^{(1 - \gamma) i_M} 
        \right)'
    \quad \text{and} \quad
    V_{\sigma} := \left(
            \me^{(1 - \gamma)^2 j_1^2 / 2}, \cdots, 
            \me^{(1 - \gamma)^2 j_N^2 / 2}
        \right)'.
\end{equation*}
For any square matrix $A$, let $r(A)$ be its spectral radius. We assume that
\begin{equation}
\label{eq:app1_srad_bet}
    r(\Pi_{\mu} D_\mu) \cdot r(\Pi_\sigma D_\sigma) < 1 / \beta
    \quad \; \text{and} 
\end{equation}
\begin{equation}
	\label{eq:app1_patience}
	\max \left\{
	    r(\Pi_\mu D_\mu) \cdot r(\Pi_\sigma D_\sigma) , \; 1
	\right\}
	< \left( 
	    \beta \| \Pi_{\mu} V_{\mu} \| \cdot \| \Pi_{\sigma} V_{\sigma} \|
	\right)^{-1 / \gamma}.
\end{equation}
%
This problem can be placed in our framework by setting 
\begin{equation*}
    z_t := \left( \chi_t, \mu_t, \sigma_t \right)
    \quad \text{and} \quad
    \ZZ := \ZZ_{\chi} \times \ZZ_{\mu} \times \ZZ_{\sigma}.
\end{equation*}
To simplify notation, we denote $z := z_0$ and $(\chi, \mu, \sigma) := (\chi_0, \mu_0, \sigma_0)$.

\subsubsection{Optimality Results}

Since $\{ \zeta_t \} \iidsim N(0,1)$, by the Fubini theorem,
\begin{equation*}
    \beta^n \EE_z R_1 \cdots R_n 
    = \beta^n \EE_z \me^{\mu_1 + \sigma_1 \zeta_1} \cdots 
    \me^{\mu_n + \sigma_n \zeta_n}
    = \beta^n (\EE_\mu \me^{\mu_1} \cdots \me^{\mu_n} )
    (\EE_\sigma \me^{\sigma_1^2 / 2} \cdots \me^{\sigma_n^2 / 2}).
\end{equation*}
For all bounded functions $f$ on $\ZZ_\mu$ and $h$ on $\ZZ_\sigma$, we define
\begin{equation*}
    K_1 f(\mu) := \EE_\mu \me^{\mu_1} f(\mu_1)
    \quad \text{and} \quad
    K_2 h(\sigma) := \EE_\sigma \me^{\sigma_1^2 / 2} h(\sigma_1).
\end{equation*}
Similar to example \ref{ex:spec_rad_ctra}, $\beta^n \sup_{z} \EE_z R_1 \cdots R_n < 1$ for some $n \in \NN$ if and only if $\beta r(K_1) r(K_2) < 1$.\footnote{As 
	in example \ref{ex:spec_rad_ctra}, we have $\| K_1^n \| = \sup_\mu \EE_\mu \me^{\mu_1} \cdots \me^{\mu_n}$ and $\| K_2^n \| = \sup_\sigma \EE_\sigma \me^{\sigma_1^2 / 2} \cdots \me^{\sigma_n^2 / 2}$. Then $\beta r(K_1) r(K_2) < 1$ iff $\beta \|K_1^n \|^{1/n} \|K_2^n \|^{1/n} < 1$ for some $n \in \NN$ iff $\beta^n \|K_1^n \| \|K_2^n \| < 1$ for some $n \in \NN$ iff $\sup_z \beta^n \EE_z R_1 \cdots R_n < 1$ for some $n \in \NN$.} 
The latter obviously holds since \eqref{eq:app1_srad_bet} holds, and, similar to example \ref{ex:spec_rad_matrix}, $r(K_1) = r(\Pi_\mu D_\mu)$ and $r(K_2) = r(\Pi_\sigma D_\sigma)$. Assumption \ref{a:ctra_coef} is verified.

Using the fact that $\ZZ$ is a finite space, we have
\begin{equation}
\label{eq:app1_bd_chi}
    \sup_{t \geq 0} \sup_z \EE_z \, \me^{\chi_t} 
    = \sup_{t \geq 0} \sup_\chi \EE_\chi \, \me^{\chi_t} 
    \leq \sup_{t \geq 0} \sup_\chi \EE_\chi \me^{\ell_K} 
    = \me^{\ell_K} < \infty.
\end{equation}
Since in addition $\{\eta_t \}$ is {\sc iid} with $\EE \me^{\eta_t} < \infty$, we have
\begin{equation*}
\sup_{t \geq 0} \sup_z \EE_z Y_t
  = \sup_{t \geq 0} \sup_z \EE_z \me^{\chi_t + \eta_t}   
  = \left(\sup_{t \geq 0} \sup_z \EE_z \me^{\chi_t} \right) \EE \me^{\eta_1}
  < \infty.
\end{equation*}
Hence, assumption \ref{a:Y_sum} holds. As a byproduct, we have also verified assumptions \ref{a:bd_in_prob_Yt} and \ref{a:geo_drift_Yt}-(2) (recall example 
\ref{ex:suff_geodrift_Y}). Similarly, since $\sup_z \EE_z \me^{-2 \gamma \chi_1} \leq \me^{-2 \gamma \ell_1} < \infty$ and $\EE \me^{-2 \gamma \eta_t} < \infty$, we have
\begin{equation}
\label{eq:bd_uy2}
  \sup_z \EE_z \left[ u' \left( Y_1 \right) \right]^2
  = \sup_z \EE_z \me^{-2 \gamma (\chi_1 + \eta_1)}
  = \left(\sup_z \EE_z \me^{-2 \gamma \chi_1} \right) 
  \EE \me^{-2 \gamma \eta_1}< \infty.
\end{equation}
Moreover, for all $z \in \ZZ$, based on the Fubini theorem, 
\begin{align*}
    \EE_z \hat{R}^2 
    = \EE_z \me^{ 2 \mu_1 + 2 \sigma_1 \zeta_1}
    = \EE_\mu \me^{2 \mu_1} \EE_\sigma \me^{2 \sigma_1 \zeta_1}
    = \EE_\mu \me^{2 \mu_1} \EE_\sigma \me^{2 \sigma_1^2}
    \leq \me^{2 i_M + 2 j_N^2 } < \infty.
\end{align*}
Hence, assumption \ref{a:bd_sup_ereuprm} holds (see example \ref{ex:CS_suff}). Since $\ZZ$ is a finite space, this in turn implies that $z \mapsto \EE_z \hat{R} u' (\hat{Y})$ must be continuous, so assumption \ref{a:conti_ereuprm} holds.

In summary, we have verified all the assumptions of section \ref{s:opt_results}. All
the related optimality results have been established.

\subsubsection{Existence of Stationary Distributions}

Similar to examples \ref{ex:homog}--\ref{ex:bdd_in_prob_matrix}, assumption \ref{a:suff_bd_in_prob}-(1) holds if 
$(1 - \alpha)^{-\gamma} \beta \EE_z \hat{R}^{1 - \gamma} \leq 1$ for all 
$z$. Since 
\begin{align*}
    \EE_z \hat{R}^{1 - \gamma}
    &= \EE_{\sigma} 
      \me^{ (1-\gamma) (\mu_1 + \sigma_1 \zeta_1) }
    = \EE_{\mu} \me^{ (1-\gamma) \mu_1} 
    \EE_{\sigma} \me^{ (1 - \gamma) \sigma_1 \zeta_1}    \\
    &= \EE_{\mu} \me^{ (1-\gamma) \mu_1} 
    \EE_{\sigma} \me^{ (1 - \gamma)^2 \sigma_1^2 / 2}
    \leq \| \Pi_{\mu} V_{\mu} \| \cdot \| \Pi_{\sigma} V_\sigma \|,
\end{align*}
it suffices to show that $\beta \| \Pi_{\mu} V_{\mu} \| \cdot \| \Pi_{\sigma} V_\sigma \| \leq (1 - \alpha)^\gamma$. 
Moreover, similar to verifying assumption \ref{a:ctra_coef}, assumption \ref{a:suff_bd_in_prob}-(2) holds as long as $(1 - \alpha) r(\Pi_\mu D_\mu) r(\Pi_\sigma D_\sigma) < 1$. In summary, assumption \ref{a:suff_bd_in_prob} holds whenever there exists $\alpha \in (0,1)$ that satisfies
\begin{equation*}
    r(\Pi_\mu D_\mu) \cdot r(\Pi_\sigma D_\sigma)
    < 1 / (1 - \alpha) 
    \leq \left( 
         \beta  \| \Pi_{\mu} V_{\mu} \| \cdot \| \Pi_{\sigma} V_\sigma \|  
    \right)^{-1 / \gamma}.
\end{equation*}
This is guaranteed by \eqref{eq:app1_patience}. Moreover, assumption \ref{a:bd_in_prob_Yt} has been verified in the previous section, assumption \ref{a:z_bdd_in_prob} is trivial since $\ZZ$ is finite, and assumption 
\ref{a:concave} has been verified in example \ref{eg:concave}.

In summary, all the assumptions up to section \ref{ss:exist_stat} have been verified. As a result, all the conclusions of propositions \ref{pr:opt_pol_bd_frac}--\ref{pr:optpol_concave} and theorem \ref{t:sta_exist} hold.

\subsubsection{Global Stability}

Regarding ergodicity and the Law of Large Numbers (theorem 
\ref{t:gs_gnl_ergo_LLN}), it remains to verify assumption \ref{a:pos_dens}. 
This is true if we assume further
\begin{itemize}
	\item there are strictly positive columns in each of the matrices $\Pi_\chi$, 
	    $\Pi_{\mu}$ and $\Pi_{\sigma}$ (recall example 
	    \ref{ex:suff_pos_dens_ctb}), and
	\item $\{\eta_t\}$ has a density that is strictly positive on 
	    $(-\infty, \delta)$ for some $\delta \in \RR$.
\end{itemize}
Regarding geometric ergodicity (theorem \ref{t:gs_gnl}), it remains to verify assumption \ref{a:geo_drift_Yt}. Condition (1) is trivial since $\{ \zeta_t\} \iidsim N(0,1)$. Condition (2) has been verified in previous sections. Hence, the model is $V$-geometrically ergodic as long as the innovations 
$\{ \eta_t\}$ and $\{\zeta_t\}$ are mutually independent.

\subsection{Modeling Generic Stochastic Returns}

Indeed, our theory works for more general setups. To illustrate, consider the following labor income process\footnote{Similar extensions can be made to the $\{R_t\}$ process.}
\begin{align}
	\label{eq:app2_Yt}
	Y_{t} = \chi_{t} \,  \varphi_{t} + \nu_{t} 
	\quad  \text{and}  \quad
	\ln \chi_{t+1} = \rho \ln \chi_{t} + \epsilon_{t+1},
\end{align}
where $\chi_0 \in (0, \infty)$ and $\rho \in (0,1)$ are given, 
$\left\{ \epsilon_t \right\}_{t \geq 1} \iidsim N(0, \delta^2)$, 
$\{ \nu_t \}_{t \geq 1}$ and $\{\varphi_t \}_{t \geq 1}$ are positive {\sc iid} 
sequences with finite second moments, and $\EE \nu_{t}^{-2 \gamma} < \infty$. 
Moreover, $\{\chi_t \}$, $\{\varphi_t \}$ and $\{\nu_t \}$ are mutually 
independent. Similar setups appear in a lot of applied literature. 
See, for example,
\cite{heathcote2010macroeconomic}, \cite{kaplan2010much}, 
\cite{huggett2011sources}, \cite{kaplan2012inequality} and 
\cite{debacker2013rising}.

This setup can be placed in our framework by setting 
$\eta_t := (\varphi_{t}, \nu_{t})$. Next, we aim to verify all the assumptions 
related to $\{ Y_t \}$.

Based on \eqref{eq:app2_Yt}, for all $t \geq 0$, the distribution of 
$\chi_t$ given $\chi_0$ follows 
\begin{equation*}
	\left( \chi_t \mid \chi_0 \right) 
	\sim 
	LN \left( 
	    \rho^t \ln \chi_0, \;
	    \delta^2 \sum_{k=0}^{t-1} \varphi^{2k}
	\right).
\end{equation*}
We denote $\chi := \chi_0$ for simplicity. Then for all $t \geq 0$ and $s \in \RR$, we have\footnote{Recall that for $X \sim LN(\mu, \sigma^2)$ and $s \in \RR$, we have 
	$\EE (X^s) = \exp \left( s \mu + s^2 \sigma^2 / 2 \right)$.}
\begin{align*}
	\label{eq:mgf}
	\EE_{\chi} \chi_t^s = \exp 
	\left[ 
	    s \rho^t \ln \chi + 
	    \frac{s^2 \delta^2 (1 - \rho^{2t})}{2 (1 - \rho^2)}
	\right].
\end{align*}
In particular, since $\rho \in (0,1)$, this implies that $\sup_{t \geq 0} \EE_\chi \chi_t^s < \infty$ for all $s \in \RR$ and $\chi \in (0, \infty)$. Hence,
\begin{equation*}
	\sup_{t \geq 0} \EE_\chi Y_t
	= \sup_{t \geq 0} \EE_\chi \chi_t \varphi_{t} + \EE \nu_{t}   
	\leq \left(\sup_{t \geq 0} \EE_\chi \chi_t \right) \EE \varphi_{t} +
	    \EE \nu_{t} 
	< \infty
\end{equation*}
for all $\chi \in (0, \infty)$, and assumptions \ref{a:Y_sum} and \ref{a:bd_in_prob_Yt} hold. Moreover, since $Y_t \geq \nu_{t}$, 
\begin{equation*}
	\sup_\chi \EE_\chi \left[ u' \left( Y_t \right) \right]^2
	\leq \EE \left[ u' \left( \nu_{t} \right) \right]^2 
	= \EE \nu_{t}^{-2 \gamma}< \infty,
\end{equation*}
and the second part of assumption \ref{a:bd_sup_ereuprm} holds.
Regarding assumption \ref{a:geo_drift_Yt}-(2), since $\rho \in (0,1)$, we can choose $\bar{\chi} > 0$ such that
\begin{equation*}
    q := \me^{\delta^2 \rho^2 / 2} \bar{\chi}^{\rho (\rho - 1)} < 1.
\end{equation*}
Then for $\chi \leq \bar{\chi}$, we have $\EE_\chi \chi_2 \leq \me^{ \rho^2 \ln \bar{\chi} + \delta^2(1+ \rho^2)/2 } =: d$, and for $\chi > \bar{\chi}$, we have
\begin{align*}
	\EE_\chi \chi_2
	&= \me^{ \delta^2(1+ \rho^2)/2} \chi^{\rho^2}   
	= \frac{ \me^{ \delta^2(1+ \rho^2)/2} \chi^{\rho^2}  
	    }{\me^{\delta^2 / 2 } \chi^\rho} 
	    \cdot \me^{\delta^2 / 2 } \chi^\rho   \\
	&= \me^{ \delta^2 \rho^2 / 2} \chi^{\rho (\rho - 1)} 
	    \cdot \EE_\chi \chi_1    
	\leq \me^{ \delta^2 \rho^2 / 2} \bar{\chi}^{\rho (\rho - 1)} 
	\cdot \EE_\chi \chi_1  
	= q \, \EE_z \chi_1.
\end{align*}
Hence, $\EE_\chi \chi_2 \leq q \, \EE_\chi \chi_1 + d$ for all $\chi$. Since in addition $\EE \varphi_t < \infty$, $\EE \nu_t < \infty$ and 
\begin{equation*}
    \EE_\chi Y_2 = \EE_\chi \chi_2 \, \EE \varphi_2 + \EE \nu_2,
\end{equation*}
assumption \ref{a:geo_drift_Yt}-(2) follows immediately.

Finally, assumption \ref{a:pos_dens}-(3) holds as long as the distributions of 
$\{ \varphi_ t\}$ and $\{ \nu_t\}$ have densities that are strictly positive on 
$(0, \bar{\delta})$ for some $\bar{\delta} > 0$.

\subsection{Numerical Example}
\label{ss:app_numerical}

What are the ``wealth inequality effects" of mean persistence and stochastic 
volatility in the rate of return to wealth? This is an important question that is rarely 
explored by the existing literature. In what follows we attempt to provide an
answer via simulation. In doing this, we will also explore the generality of our 
theory by testing the stability properties of the economy for a broad range of 
parameters. Our study is based on the model of section \ref{ss:app_cir_modeling}.

Regarding the finite-state Markov chains $\{ \chi_t \}$, $\{ \mu_t\}$ and 
$\{ \eta_t\}$, we use the method of \cite{tauchen1991quadrature} 
and discretize the following AR(1) processes
\begin{align*}
    \chi_t &= \rho_\chi \chi_{t-1} + \epsilon_t^{\chi}, 
        \qquad \{ \epsilon_t^\chi \} \iidsim N(0, \delta_\chi^2),    \\ 
    \mu_t &= (1 - \rho_\mu) \bar{\mu} + \rho_\mu \mu_{t-1} + \epsilon_t^\mu ,
        \qquad \{ \epsilon_t^\mu \} \iidsim N(0, \delta_\mu^2),    \\
    \log \sigma_t &= (1- \rho_\sigma) \bar{\sigma} 
        + \rho_\sigma \log \sigma_{t-1}
        + \epsilon_t^\sigma, 
         \qquad \{ \epsilon_t^\sigma \} \iidsim N(0,\delta_\sigma^2 ),
\end{align*}
into $N_\chi	$, $N_\mu$ and $N_\sigma$ states, respectively. 

Regarding the parameters of the $\{Y_t\}$ process, we set $\{ \eta_t\}$ to
be a normal distribution with mean $0$ and variance $\delta_\eta^2 = 0.075$.
In addition, we set $\rho_\chi = 0.9770$ and $\delta_\chi^2 = 0.02$. These 
values are chosen broadly in line with the existing literature. See, for example,
\cite{heathcote2010macroeconomic}, \cite{kaplan2010much}, and 
\cite{debacker2013rising}.

Our calibration of the $\{R_t\}$ process is based on  
\cite{fagereng2016heterogeneity}, in which the authors report the average and 
standard deviation of the financial return process of Norway from 1993--2013.\footnote{This is the only data source we can find that has a full record of financial 
    returns. Although our calibration is based on this dataset, we have conducted
    sensitivity analysis for different groups of parameters. The results show that
    their qualitative effects are broadly the same, although their quantitative effects
    vary, as one would expect.} 
We transform the two series to match our model and run first-order 
autoregressions, which yield
$\bar{\mu} = 0.0281$, $\rho_\mu = 0.5722$, $\delta_\mu = 0.0067$, 
$\bar{\sigma}=-3.2556$, $\rho_\sigma=0.2895$ and $\delta_\sigma=0.1896$. 
Based on this parameterization, the stationary mean and standard deviation of the 
$\{R_t\}$ process are approximately $1.03$ and $4\%$, respectively.

However, to distinguish the different effect of stochastic volatility and mean persistence, as well as to mitigate the computational burden caused by high state dimensionality, 
we consider two subsidiary model economies. The first model reduces 
$\{ \mu_t\}$ to its stationary mean $\bar{\mu}$, while the second model reduces 
$\{ \sigma_t \}$ to its stationary mean 
$\hat{\sigma} := \exp (\bar{\sigma} + \delta_\sigma^2/ 2(1 - \rho_\sigma^2))$. 
In summary, $\{R_t\}$ satisfies
\begin{align*}
    &\log R_t = \bar{\mu} + \sigma_t \zeta_t  
    \qquad (\text{Model \rom{1}})    \\
    &\log R_t = \mu_t + \hat{\sigma} \zeta_t
    \qquad (\text{Model \rom{2}})
\end{align*}
To test the stability properties of the economy, we set $\beta=0.95$, $N_\chi = 5$ and consider respectively $\gamma=1$ and $\gamma=2$. Furthermore, in model \rom{1}, we set $N_\sigma=5$ and consider a broad neighborhood of the calibrated $(\rho_\sigma, \delta_\sigma)$ pairs, and in model \rom{2}, we set $N_\mu=5$ and consider a large neighborhood around the calibrated $(\rho_\mu, \delta_\mu)$ values. Each scenario, we hold the rest of the parameters as in the benchmark. The results are shown in figure \ref{fig:m1} and figure \ref{fig:m2}. 

Since the dot points (calibrated parameter values) lie in the stable range in all cases, both the two calibrated models are globally stable, and stationary wealth distributions can be computed by the established ergodic theorems (theorem \ref{t:gs_gnl_ergo_LLN} and theorem \ref{t:gs_gnl}). Moreover, the broad stability range indicates that our theory can handle a wide range of parameter setups, including highly persistent and volatile $\{R_t\}$ processes.

\begin{figure}
\centering
\begin{subfigure}[a]{0.75\textwidth}
   \includegraphics[width=1\linewidth]{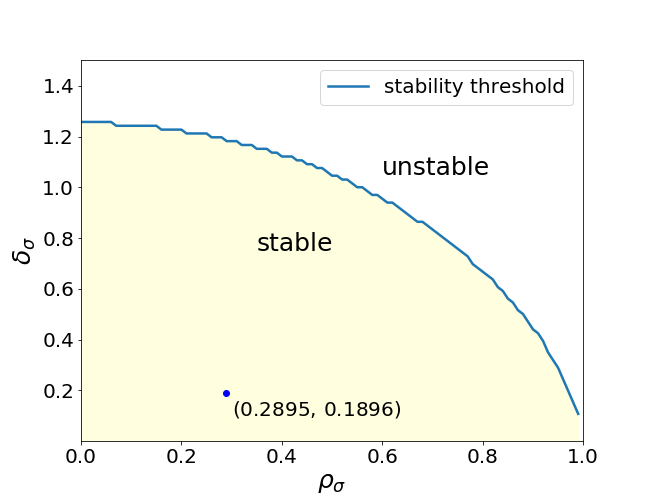}
   \caption*{(a) Model \rom{1} : $\beta = 0.95$, \, $\gamma = 1$, \, $\bar{\mu} = 0.0281$}
   \label{fig:m1_gam1} 
\end{subfigure}
\vspace{0.cm}
\begin{subfigure}[b]{0.75\textwidth}
   \includegraphics[width=1\linewidth]{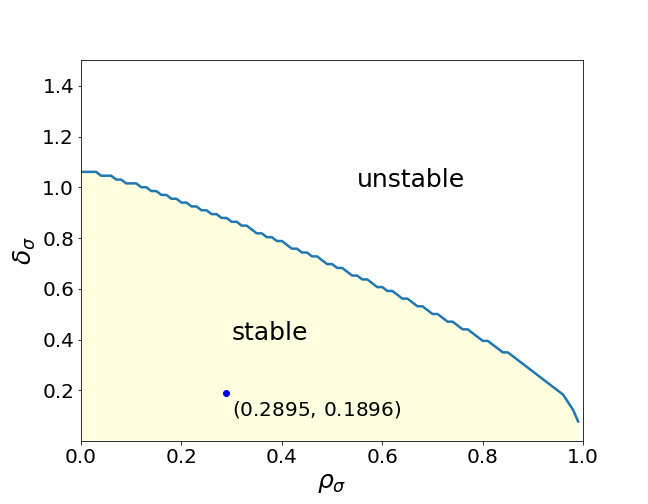}
   \caption*{(b) Model \rom{1} : $\beta = 0.95$, \, $\gamma = 2$, \, $\bar{\mu}=0.0281$}
   \label{fig:m1_gam2}
\end{subfigure}
\caption[]{Stability Range and Threshold of Model \rom{1}}
\label{fig:m1}
\end{figure}

\begin{figure}
\centering
\begin{subfigure}[a]{0.8\textwidth}
   \includegraphics[width=1\linewidth]{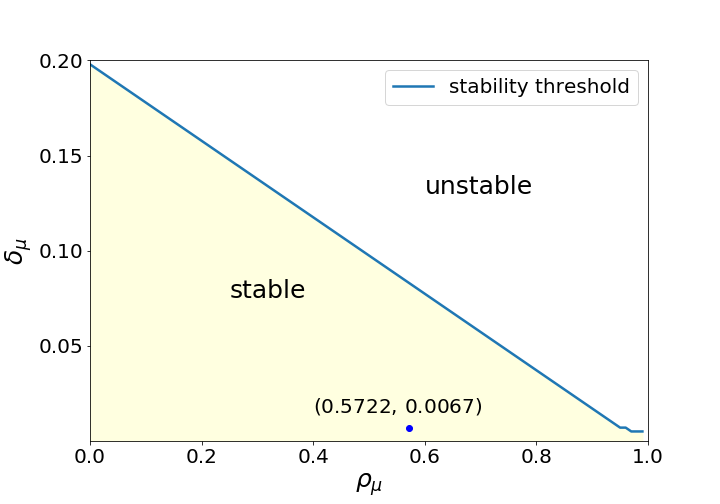}
   \caption*{(a) Model \rom{2} : $\beta = 0.95$, \, $\gamma = 1$, \, $\hat{\sigma} = 0.0393$}
   \label{fig:m1_gam1} 
\end{subfigure}
\vspace{0.cm}
\begin{subfigure}[b]{0.8\textwidth}
   \includegraphics[width=1\linewidth]{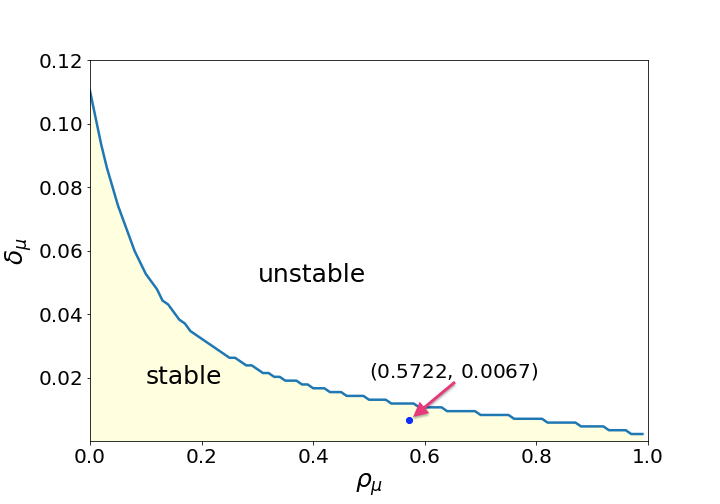}
   \caption*{(b) Model \rom{2} : $\beta = 0.95$, \, $\gamma = 2$, \, $\hat{\sigma}=0.0393$}
   \label{fig:m1_gam2}
\end{subfigure}
\caption[]{Stability Range and Threshold of Model \rom{2}}
\label{fig:m2}
\end{figure}

Our next goal is to explore the quantitative impact of capital income risk on wealth 
inequality. As a first step, we compute the optimal policy. This can be realized by  
iterating the Coleman opeartor and evaluating the distance between loops via the 
designed metric $\rho$. The algorithm is guaranteed to converge based on 
theorem \ref{t:ctra_T}. Specifically, we assign 100 grid points to wealth equally 
spaced in $[10^{-4}, 50]$. Expectations with respect to the {\sc iid} innovations are 
evaluated via Monte Carlo with $1000$ draws. Moreover, in all cases, we use 
piecewise linear interpolation to approximate policies. Policy function evaluation 
outside of the grid range is via linear extrapolation, as is justified by proposition 
\ref{pr:optpol_concave}. 

Once the optimal policy is obtained, we then simulate a single time series of $5 \times 10^7$ agents in each case and compute the stationary distribution based on our ergodic theorems \ref{t:gs_gnl_ergo_LLN}--\ref{t:gs_gnl}. As a final step, we compare the key properties of the stationary wealth distributions in different economies. In particular, we estimate the tail exponent based on the wealth level of the top $5\%$ and top $10\%$ of the simulated agents.\footnote{Recall that a random variable $X$ is said to have a 
    \textit{heavy upper tail} if there exist constants $A, \alpha > 0$ such that
    $\PP \{ X > x\} \geq Ax^{-\alpha}$ for large enough $x$, where $\alpha$
    is refered to as the \textit{tail exponent}. The smaller the tail exponent is,
    the fatter the distribution tail is, and thus a higher level of inequality exists.
     It is common in the literature to 
    estimate the tail exponent via linearly regressing the $\log$-ranks over the 
    $\log$-wealth levels of the top $5\%$ and top $10\%$ most wealthy agents.
    } 
Moreover, we estimate the Gini coefficient and provide a detailed analysis of the 
wealth share in each case. 

All simulations are processed in a standard Julia 
environment on a laptop with a 2.9 GHz Intel Core i7 and 32GB RAM.

\begin{table}[h]
    \caption{Tail Exponent and Gini Coefficient}
    \label{tb:te}
    \vspace*{-0.45cm}
	\noindent 
    \begin{center}		
	\begin{threeparttable}
        {\small
		\begin{tabular}{|c|c|c|c|c|c|}
		\hline 
		\multicolumn{2}{|c|}{Model Economy} & Model I & Model II & IID $\{R_{t}\}$ & Constant $\{R_{t}\}$\tabularnewline
		\hline 
		\hline 
		\multirow{2}{*}{Tail Exponent} & Top 5\% & 3.0 & 2.9 & 4.4 & 4.4\tabularnewline
		\cline{2-6} 
		 & Top 10\% & 2.6 & 2.5 & 3.7 & 3.7\tabularnewline
		\hline 
		\multicolumn{2}{|c|}{Gini Coefficient} & 0.47 & 0.45 & 0.34 & 0.33\tabularnewline
		\hline 
		\end{tabular}
    }
	\begin{tablenotes}
      \fontsize{9pt}{9pt}\selectfont
      \item Parameters: $\beta=0.95$, $\gamma=2$,
      $\bar{\mu}=0.0281$, $\bar{\sigma}=-3.2556$, 
      $\rho_\sigma=0.2895$, $\delta_\sigma=0.1896$,
      $\rho_\mu = 0.5722$ and $\delta_\mu = 0.0067$.
    \end{tablenotes}
	\end{threeparttable}
	\par\end{center}
\end{table}

\begin{table}[h]
    \caption{Wealth Share (in percentage)}
    \label{tb:wealth_share}
    \vspace*{-0.45cm}
	\noindent 
    \begin{center}		
	\begin{threeparttable}
        {\small
		\begin{tabular}{|c|c|c|c|c|c|c|c|c|c|c|}
		\hline 
		Poorest agents (\%) & 5\% & 10\% & 15\% & 20\% & 25\% & 30\% & 35\% & 40\% & 45\% & 50\%\tabularnewline
		\hline 
		\hline 
		Model I & 0.8 & 1.8 & 3.1 & 4.6 & 6.2 & 8.2 & 10.4 & 12.9 & 15.7 & 18.7\tabularnewline
		\hline 
		Model II & 1.1 & 2.4 & 3.9 & 5.7 & 7.6 & 9.7 & 12.1 & 14.7 & 17.5 & 20.6\tabularnewline
		\hline 
		IID $\{R_{t}\}$ & 1.5 & 3.4 & 5.6 & 8.0 & 10.6 & 13.4 & 16.5 & 19.8 & 23.4 & 27.3\tabularnewline
		\hline 
		Constant $\{R_{t}\}$ & 1.6 & 3.5 & 5.6 & 8.0 & 10.7 & 13.5 & 16.6 & 20.0 & 23.6 & 27.5\tabularnewline
		\hline 
		\hline 
		Poorest agents (\%) & 55\% & 60\% & 65\% & 70\% & 75\% & 80\% & 85\% & 90\% & 95\% & 100\%\tabularnewline
		\hline 
		\hline 
		Model I & 22.1 & 25.9 & 30.0 & 34.7 & 40.3 & 47.0 & 55.1 & 64.8 & 77.0 & 100\tabularnewline
		\hline 
		Model II & 24.1 & 27.8 & 31.9 & 36.6 & 42.0 & 48.5 & 56.3 & 65.7 & 77.5 & 100\tabularnewline
		\hline 
		IID $\{R_{t}\}$ & 31.4 & 35.9 & 40.7 & 46.0 & 51.8 & 58.4 & 65.7 & 74.2 & 84.3 & 100\tabularnewline
		\hline 
		Constant $\{R_{t}\}$ & 31.6 & 36.1 & 41.0 & 46.3 & 52.0 & 58.5 & 65.9 & 74.3 & 84.4 & 100\tabularnewline
		\hline 
		\end{tabular}
    }
	\begin{tablenotes}
      \fontsize{9pt}{9pt}\selectfont
      \item Parameters: same as table \ref{tb:te}. In the first and sixth rows,
      $N\%$ denotes the $N\%$ of agents with lowest levels of wealth.
    \end{tablenotes}
	\end{threeparttable}
	\par\end{center}
\end{table}

\begin{figure}
\centering
   \includegraphics[width=1\linewidth]{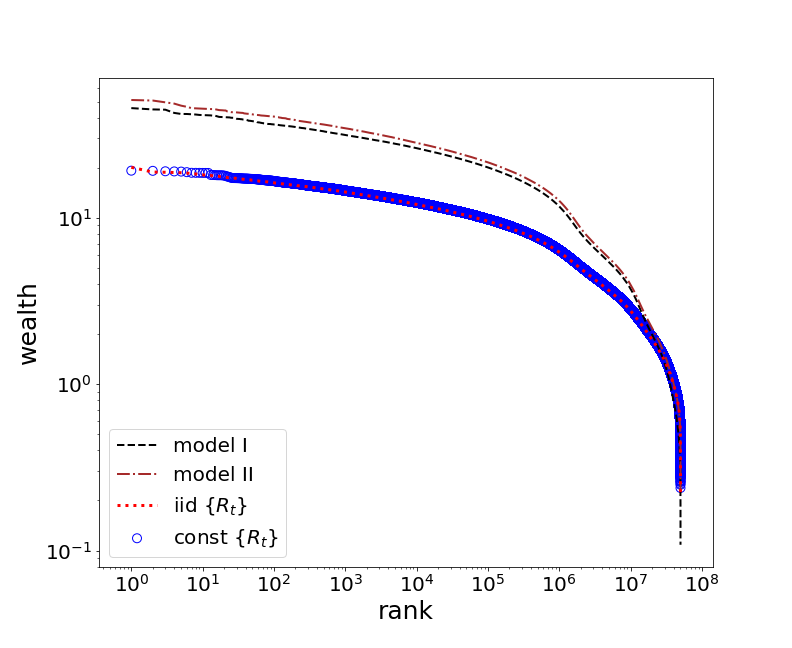}
   \caption{The Zipf Plot}
   \label{fig:zipf}
\end{figure}

\begin{figure}
\centering
   \includegraphics[width=1\linewidth]{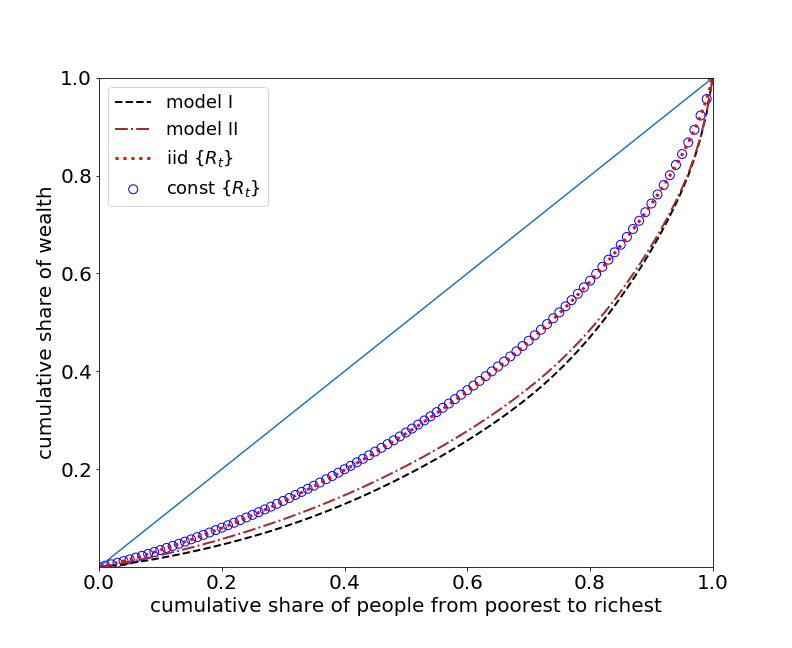}
   \caption{The Lorenz Curve}
   \label{fig:lorenz}
\end{figure}

We compare our models with two other models,
in which $\{ R_t \}$ is respectively an {\sc iid} process and a constant.\footnote{In 
    the former case, we set $N_\sigma=1$ in model \rom{1} (so that 
    $\sigma_t$ reduces to its stationary mean) or $N_\mu=1$ in model \rom{2} 
    (so that $\mu_t$ reduces to its stationary mean). In the latter case, we reduce 
    $\{R_t\}$ to its
    stationary mean.} 
The difference between the results of model \rom{1} and model \rom{2} and the results of the other two models reflects the role of 
stochastic volatility and mean persistence of the wealth return process.
Parameter setups and results are reported in 
tables \ref{tb:te}--\ref{tb:wealth_share}.\footnote{Since the standard 
     Bewley-Ayagari-Hugget model does not generate fat-tailed wealth distribution
     (see, e.g., \cite{stachurski2018impossibility}),
     calculating the tail exponent of the stationary wealth distribution when 
     $\{R_t\}$ is a constant is relatively less standard. 
     However, doing this allows us to reveal the effect of capital income 
     risk on the tail thickness of the stationary wealth distribution.} 

As can be seen in table \ref{tb:te}, the tail exponents of model \rom{1} and 
model \rom{2} are smaller than the tail exponents when $\{R_t\}$ is {\sc iid} or 
constant. In other words, both stochastic volatility and mean persistence in 
wealth returns lead to a higher degree of wealth inequality. Moreover, mean 
persistence results in slightly lower tail exponents than stochastic volatility does.

Similarly, the Gini coefficients generated by model \rom{1} and model \rom{2}
are much higher than those generated by the other two models, illustrating from another perspective that stochastic 
volatility and mean persistence of wealth returns cause more inequality in wealth. 
However, different from the previous case, compared with mean persistence, which 
generates a Gini index 0.45, stochastic volatility has a higher impact on wealth 
inequality, creating a Gini index 0.47.

Moreover, at least in the current models, {\sc iid} wealth returns do not have  
obvious effect on wealth inequality, both in terms of their impact on the tail 
exponent and in terms of their impact on the Gini coefficient.

The above descriptions are further illustrated in table \ref{tb:wealth_share} and 
figures \ref{fig:zipf}--\ref{fig:lorenz}. In particular, in table \ref{tb:wealth_share} 
we calculate the wealth share of a given fraction of poorest agents. Notably, the top 
$10\%$ richest agents hold respectively $35.2\%$, $34.3\%$, $25.8\%$ and 
$25.7\%$ of the total wealth, while the poorest $10\%$ agents hold respectively 
$1.8\%$, $2.4\%$, $3.4\%$, $3.5\%$ of the total wealth  in the four model economies. In 
figure \ref{fig:zipf} we create the Zipf plot (i.e., plotting $\log$ wealth v.s. 
$\log$ rank). 
It is clearly indicated that model \rom{1} and model \rom{2} generate stationary 
wealth distributions with fatter upper tails than the other models do, and that 
the stationary wealth distribution of model \rom{2} has the fattest upper tail. In figure \ref{fig:lorenz} we plot the Lorenz curve, 
which can be viewed as a generalized graphical representation of table 
\ref{tb:wealth_share}.

Finally, sensitivity analysis with respect to model parameters and a more detailed quantitative analysis can be found in the online appendix of this paper.

\newpage

\section{Appendix A: Proof of Section \ref{s:opt_results} Results}
\label{Appendix_A}

In proofs we let $\{\fF_t\}_{t \geq 0}$ be the natural filtration, where $\fF_t := \sigma(s_0, \cdots, s_t)$ with $s_t := (a_t, z_t)$ for all $t$. We start by proving the results of section \ref{s:opt_results}. 

\begin{proof}[Proof of example \ref{ex:spec_rad_ctra}]
	Note that for fixed $n \in \NN$,
	\begin{equation*}
		\| K^n \| = \sup_{ \|f\| \leq 1 } \| K^n f \| 
		= \sup_{ \|f\| \leq 1 } 
		\sup_{z \in \ZZ} 
		\left| 
		    \EE_z R_1 \cdots R_n f(z_n) 
		\right|
		= \sup_{z \in \ZZ} \EE_z R_1 \cdots R_n.
	\end{equation*}
	Suppose assumption \ref{a:ctra_coef} holds. Note that every $t \in \NN$ can be written as $t = kn + \ell$ where $k \in \NN \cup \{0\}$ and $\ell \in \{0, \cdots, n-1 \}$. Since $\| K^t \| = \| K^{kn + \ell} \| \leq \|K^n\|^k \|K^\ell \|$, 
	\begin{equation*}
		\| K^t \|^{1/t} 
		= \| K^{kn + \ell} \|^{1/t} 
		\leq \|K^n\|^{k/t} \|K^\ell \|^{1/t}
		= \|K^n \|^{\frac{1}{n + \ell / k}} \| K^\ell \|^{1/t}.
	\end{equation*}
	Since $\beta \|K^n \| < 1$ by assumption \ref{a:ctra_coef} and 
	$\|K^\ell \| \leq \| K\|^\ell <\infty$, letting $t \to \infty$ (and thus 
	$k \to \infty$) yields
	\begin{equation*}
		\beta r(K) = \beta \lim_{t \to \infty} \|K^t \|^{1/t} 
		\leq \beta \lim_{k \to \infty} \|K^n \|^{\frac{1}{n} \frac{n}{n + \ell / k}} \| K^\ell \|^{\frac{1}{kn + \ell}}
		= \beta \|K^n \|^{1/n} < 1.
	\end{equation*}
	On the other hand, suppose $\beta r(K) < 1$. Then by the definition of $r$ 
	there exists $n \in \NN$ such that $\beta \| K^n \|^{1/n} < 1$. Thus 
	$\beta^n \|K^n \| <1$ and assumption \ref{a:ctra_coef} is verified.
\end{proof}

For the rest of this section, we let $n$ and $\theta$ be defined as in assumption \ref{a:ctra_coef}.

\begin{proof}[Proof of lemma \ref{lm:max_path}]
Iterating backward on the maximal path \eqref{eq:max_path}, we can show 
that
\begin{equation*}
    \tilde{a}_t = 
        \left( \prod_{i=1}^t R_i \right) a + 
        \sum_{j=1}^t \left( Y_j \, \prod_{i=j+1}^t R_i \right). 
\end{equation*}
Taking discounted expectation yields
\begin{align*}
    \beta^t \EE_{a,z} \tilde{a}_t 
    &=
      \left[ \EE_{z} \left( \beta^t \prod_{i=1}^t R_i \right) \right] a + 
      \sum_{j=1}^t       
          \EE_{z} 
          \left[            
              \left( \beta^{t-j} \prod_{i=j+1}^t R_i \right)
              \left( \beta^j Y_j \right)   
          \right].   
\end{align*}
Let $M(a,z) : = \sum_{t \geq 0} \beta^t \EE_{a,z} \tilde{a}_t$. Then the monotone convergence theorem and the Markov property imply that
\begin{align*}
  M(a,z) &=
  \sum_{t=0}^{\infty} \EE_z \left( \beta^t \prod_{i=1}^t R_i \right) a +
  \sum_{t=0}^{\infty} \sum_{j=1}^t       
  \EE_{z} 
  \left[            
  \left( \beta^{t-j} \prod_{i=j+1}^t R_i \right)
  \left( \beta^j Y_j \right)   
  \right]    \\
  &= \EE_z \left( \sum_{t=0}^{\infty} \beta^t \prod_{i=1}^t R_i \right) a +   
  \sum_{j=1}^{\infty} \EE_z \EE_z
  \left[ (\beta^j Y_j) 
      \left( 
          \sum_{i = 0}^{\infty} \beta^i \prod_{k=1}^{i} R_{j+k}
      \right)
      \Big{ | } \fF_{j}
  \right]    \\
  &= \EE_z \left( \sum_{t=0}^{\infty} \beta^t \prod_{i=1}^t R_i \right) a +   
  \sum_{j=1}^{\infty} \EE_z 
  \left[ (\beta^j Y_j) \,
  \EE_{z_j}
  \left( 
  \sum_{i = 0}^{\infty} \beta^i \prod_{k=1}^{i} R_{k}
  \right)
  \right].
\end{align*}
By the Markov property and assumption \ref{a:ctra_coef}, for all $k \in \NN$ and $z \in \ZZ$, we have
\begin{align*}
    \EE_z \beta^{kn} R_1 \cdots R_{kn} 
    &= \EE_z \EE_z [\beta^{(k-1)n} R_1 \cdots R_{(k-1)n} 
    \beta^n R_{(k-1)n+1} \cdots R_{kn} \mid \fF_{(k-1)n}]    \\
    &= \EE_z \beta^{(k-1)n} R_1 \cdots R_{(k-1)n} 
    \EE_{z_{(k-1)n}} (\beta^n R_{1} \cdots R_{n})    \\
    & \leq \theta \EE_z \beta^{(k-1)n} R_1 \cdots R_{(k-1)n} 
    \leq \cdots \leq \theta^k.
\end{align*}
Taking supremum on both sides yields
\begin{equation}
\label{eq:n-step-ctra}
    \beta^{kn} \sup_{z \in \ZZ} \EE_z R_1 \cdots R_{kn} \leq \theta^k.
\end{equation}
Moreover, assumption \ref{a:bd_sup_ereuprm} implies that 
$K_0 := \sup_{z \in \ZZ} \EE_z \hat{R} < \infty$. Hence, 
\begin{align*}
    \EE_{z} \left( \sum_{i=0}^{n-1} \beta^{i} R_{1} \cdots R_{i} \right)
    &= \sum_{i=0}^{n-1} \beta^i \EE_z R_1 \cdots R_i
    = \sum_{i=0}^{n-1} \beta^i \EE_z R_1 \cdots R_{i-1} \EE_{z_{i-1}} R_1    \\
    &\leq \sum_{i=0}^{n-1} \beta^i \EE_z R_1 \cdots R_{i-1} K_0 
    \leq \cdots \leq 
     \sum_{i=0}^{n-1} \beta^i K_0^i =: K_1 < \infty
\end{align*}
for all $z \in \ZZ$. Taking supremum on both sides yields
\begin{equation}
\label{eq:n-step-bd}
    \sup_{z \in \ZZ} \EE_{z}
    \left( \sum_{i=0}^{n-1} \beta^{i} R_{1} \cdots R_{i} \right) \leq K_1 < \infty.
\end{equation}
Based on \eqref{eq:n-step-ctra} and \eqref{eq:n-step-bd}, we have
\begin{align*}
    \EE_{z}
    \left( 
    \sum_{i = 0}^{\infty} \beta^i \prod_{k=1}^{i} R_{k}
    \right)
    &= \sum_{k=0}^{\infty} \EE_z 
    \left( \sum_{i=0}^{n-1} \beta^{kn+i} R_1 \cdots R_{kn + i}  \right)    \\
    &= \sum_{k=0}^{\infty} \EE_z 
    \left[
        \beta^{kn} R_1 \cdots R_{kn} 
        \left( \sum_{i=0}^{n-1} \beta^{i} R_{kn+1} \cdots R_{kn + i}  \right)    
    \right]    \\
    &= \sum_{k=0}^{\infty} \EE_z 
    \left[
    \beta^{kn} R_1 \cdots R_{kn} 
    \EE_{z_{kn}}
    \left( \sum_{i=0}^{n-1} \beta^{i} R_{1} \cdots R_{i}  \right)    
    \right]    \\
    &\leq 
    \sum_{k=0}^{\infty} \EE_z \beta^{kn} R_1 \cdots R_{kn} K_1  
    \leq 
    \sum_{k=0}^{\infty} \theta^k K_1 := K_2 < \infty
\end{align*}
for all $z \in \ZZ$. Hence,
\begin{equation*}
    \sup_{z \in \ZZ} \EE_{z}
    \left( 
    \sum_{i = 0}^{\infty} \beta^i \prod_{k=1}^{i} R_{k}
    \right) 
    \leq K_2 < \infty.
\end{equation*}
Finally, assumption \ref{a:Y_sum} implies that
\begin{align*}
  M(a,z) 
   \leq K_2 a + K_2 \sum_{t=1}^{\infty} \beta^t \EE_z Y_t < \infty
\end{align*}
for all $(a,z) \in \SS_0$. This concludes the proof. 
\end{proof}

\begin{proof}[Proof of theorem \ref{t:opt_result}]
This result extends theorem~1 of \cite{benhabib2015wealth} and theorem~3.1 of \cite{li2014solving}. While the assumptions are weaker in our setting, the proof is similar and hence omitted.
\end{proof}

In the next, we aim to prove proposition \ref{pr:complete}. To that end, we 
define $\hH$ to be the set of functions $h \colon \SS_0 \rightarrow \RR$ that 
satisfies
\begin{enumerate}
    \item $h$ is continuous,
    \item $h$ is decreasing in the first argument, and
    \item $\exists K \in \RR$ such that $u'(a) \leq h(a,z) \leq u'(a) + K$
        for all $(a,z) \in \SS_0$.
\end{enumerate}
On $\hH$ we impose the distance
\begin{equation}
\label{eq:dinf_metric}
    d_{\infty}(h,g) 
      := \left\| h - g \right\|
      := \sup_{(a,z) \in \SS_0} \left| h(a,z) - g(a,z) \right|.
\end{equation}
While the elements of $\hH$ are not bounded, the function $d_{\infty}$ is a 
valid metric. Moreover, standard argument shows that $(\hH, d_{\infty})$ is 
a complete metric space.

\begin{proof}[Proof of proposition \ref{pr:complete}]
Standard argument shows that $\rho$ is a valid metric. To show completeness 
of $(\cC, \rho)$,  it suffices to show that $(\cC, \rho)$ and 
$(\hH, d_{\infty})$ are isometrically isomorphic.

To see that this is so, let $H$ be the map on $\cC$ defined by 
$Hc = u' \circ c$. It is easy to show that 
$H: \cC \rightarrow \hH$ and that it is a bijection. Moreover, for all 
$c,d \in \cC$, 
\begin{equation*}
    d_{\infty}(Hc, Hd) = \left\| Hc - Hd \right\| 
                       = \left\| u' \circ c - u' \circ d \right\|
                       = \rho(c,d).
\end{equation*}
Hence, $H$ is an isometry. The space $(\cC, \rho)$ is then complete, as 
claimed.
\end{proof}

\begin{proof}[Proof of proposition \ref{pr:suff_optpol}]
Let $c$ be a policy in $\cC$ satisfying \eqref{eq:foc}. That $c$ satisfies the first order optimality  conditions is immediate by definition. It remains to show that any asset path generated by $c$ satisfies the transversality condition \eqref{eq:tvc}. To see that this is so, observe that, by \eqref{eq:bd_uprime},
\begin{equation}
\label{eq:ineq_betu'ca}
    \EE_{a,z} \beta^t (u' \circ c) (a_t, z_t) a_t
    \leq \beta^t \EE_{a, z} u'(a_t) a_t + 
         \beta^t K \EE_{a, z} a_t .
\end{equation}
Regarding the first term on the right hand side of \eqref{eq:ineq_betu'ca}, 
fix $L > 0$ and observe that
\begin{align*}
   \EE_{a,z} u'(a_t) a_t 
   &=
     \EE_{a,z} u'(a_t) a_t \1 \{a_t \leq L\} + 
     \EE_{a,z} u'(a_t) a_t \1 \{a_t > L\}    \\
   & \leq 
     L \EE_{a,z} u'(a_t) + u'(L) \EE_{a,z} a_t
   \leq 
     L \EE_{z} u'(Y_t) + u'(L) \EE_{a,z} \tilde{a}_t,
\end{align*}
where $\tilde{a}_t$  is the maximal path defined in \eqref{eq:max_path}. We 
then have
\begin{equation}
\label{eq:ineq_betEu'a}
  \beta^t \EE_{a,z} u'(a_t) a_t 
   \leq L \beta^t \EE_{z} u'(Y_t) + u'(L) \beta^t \EE_{a,z} \tilde{a}_t.
\end{equation} 
Since $M := \sup_{z \in \ZZ} \EE_z u'(\hat{Y}) < \infty$ by assumption \ref{a:bd_sup_ereuprm}, the Markov property then implies that for all $z \in \ZZ$ and $t \geq 1$,
\begin{equation*}
  \EE_z u'\left( Y_t \right) 
  = \EE_z \EE_z \left[ u' \left( Y_t \right) \big| \fF_{t-1} \right]
  = \EE_z \EE_{z_{t-1}} u' (\hat{Y})
  \leq \EE_z M = M.
\end{equation*}
Hence,
$\lim_{t \to \infty} \beta^t \EE_z u' \left( Y_t \right) = 0$. Since in addition $\lim_{t \to \infty} \beta^t \EE_{a,z} \tilde{a}_t = 0$ by lemma \ref{lm:max_path}, \eqref{eq:ineq_betEu'a} then implies that
$\lim_{t \to \infty} \beta^t \EE_{a,z} u'(a_t) a_t = 0$. 

Moreover, the second term on the right hand side of \eqref{eq:ineq_betu'ca} 
is dominated by $\beta^t K \EE_{a,z} \tilde{a}_t$, and converges to zero by 
lemma \ref{lm:max_path}. We have thus shown that the term on the right hand side of \eqref{eq:ineq_betu'ca}
converges to zero. Hence, the transversality condition holds.
\end{proof}

\begin{proof}[Proof of proposition \ref{pr:welldef_T}]
Fix $c \in \cC$ and $(a,z) \in \SS_0$. Because $c \in \cC$, the map 
$\xi \mapsto \psi_c(\xi, a, z)$ is increasing. Since $\xi \mapsto u'(\xi)$
is strictly decreasing, the equation \eqref{eq:T_opr} can have at most one 
solution. Hence uniqueness holds.

Existence follows from the intermediate value theorem provided we can show 
that 
\begin{enumerate}
  \item[(a)] $\xi \mapsto \psi_c(\xi, a, z)$ is a continuous function,
  \item[(b)] $\exists \xi \in (0,a]$ such that 
      $u'(\xi) \geq \psi_c(\xi, a, z)$, and
  \item[(c)] $\exists \xi \in (0,a]$ such that 
      $u'(\xi) \leq \psi_c(\xi, a, z)$.
\end{enumerate}
For part (a), it suffices to show that 
$g(\xi) := 
  \EE_{z} \hat{R} 
          \left(u' \circ c \right) 
          \left[ \hat{R}(a - \xi) + \hat{Y}, \hat{z} \right]$
is continuous on $(0,a]$. To this end, fix $\xi \in (0,a]$ and 
$\xi_n \rightarrow \xi$. By \eqref{eq:bd_uprime} we have
\begin{align}
\label{eq:uppbd_ruprmc}
    \hat{R} \left( u' \circ c \right) 
        \left[ \hat{R} \left( a - \xi \right) + \hat{Y}, \hat{z} \right] 
    \leq 
        \hat{R} \left( u' \circ c \right) ( \hat{Y}, \hat{z} )
    \leq 
        \hat{R} u'( \hat{Y}) + \hat{R} K.
\end{align}
The last term is integrable by assumption \ref{a:bd_sup_ereuprm}. Hence the dominated convergence theorem applies. From this fact and the continuity of $c$, we obtain $g(\xi_n) \rightarrow g(\xi)$. Hence, $\xi \mapsto \psi_c(\xi, a, z)$ is continuous.

Part (b) clearly holds, since $u'(\xi) \rightarrow \infty$ as 
$\xi \rightarrow 0$ and $\xi \mapsto \psi_c(\xi, a, z)$ is increasing and 
always finite (since it is continuous as shown in the previous paragraph). 
Part (c) is also trivial (just set $\xi = a$).
\end{proof}

\begin{proof}[Proof of proposition \ref{pr:self_map}]
Fix $c \in \cC$. With slight abuse of notation, we denote
\begin{equation*} 
g \left( \xi, a, z \right) := 
  \EE_{z} \hat{R} 
          \left( u' \circ c \right) 
          \left[ \hat{R} \left( a - \xi \right) + \hat{Y}, 
                 \, \hat{z} 
          \right].
\end{equation*}
\textbf{Step~1.} We show that $Tc$ is continuous. To apply a standard 
fixed point parametric continuity result such as theorem~B.1.4 of 
\cite{stachurski2009economic}, we first show that $\psi_c$ is jointly 
continuous on the set $G$ defined in \eqref{eq:dom_T_opr}. This will be true
if $g$ is jointly continuous on $G$. For any $\{ (\xi_n, a_n, z_n) \}$ and 
$(\xi, a, z)$ in $G$ with $(\xi_n, a_n, z_n) \rightarrow (\xi, a, z)$, we 
need to show that $g(\xi_n, a_n, z_n) \rightarrow g(\xi, a, z)$. To that 
end, we define
\begin{align*}
 h_1 ( \xi, a, \hat{z}, \hat{\zeta}, \hat{\eta} ), \,
  h_2 ( \xi, a, \hat{z}, \hat{\zeta}, \hat{\eta} )   
  := \hat{R} 
     [ u' (\hat{Y}) + K ]
       \pm 
     \hat{R} \left( u' \circ c \right)
     [ \hat{R} \left( a - \xi \right) + \hat{Y}, \hat{z} ],
\end{align*}
where $\hat{R} := R ( \hat{z}, \hat{\zeta} )$ and 
$\hat{Y} := Y ( \hat{z}, \hat{\eta} )$ as defined in 
\eqref{eq:RY_func}. Then $h_1$ and $h_2$ are continuous in $(\xi, a, \hat{z})$ by 
the continuity of $c$ and assumption \ref{a:conti_ereuprm}, and they are nonnegative since \eqref{eq:uppbd_ruprmc} implies that
$0 \leq 
      \hat{R} \left( u' \circ c \right) 
      [ \hat{R} \left( a - \xi \right) + \hat{Y}, \hat{z} ]   
   \leq \hat{R} [ u' (\hat{Y}) + K ]$.

Moreover, since the stochastic kernel $P$ is Feller, the product measure 
satisfies\footnote{Here $\stackrel{w}{\to}$ denotes weak convergence, i.e.,  
    for all bounded continuous function $f$, we have
    \begin{equation*}
      \int 
          f(\hat{z}, \hat{\zeta}, \hat{\eta}) 
      P(z_n, \diff \hat{z}) \nu(\diff \hat{\zeta}) \mu (\diff \hat{\eta})
      \to
       \int 
          f(\hat{z}, \hat{\zeta}, \hat{\eta}) 
      P(z, \diff \hat{z}) \nu(\diff \hat{\zeta}) \mu (\diff \hat{\eta}).
    \end{equation*}
    The formal definition of weak convergence is provided in section 
    \ref{ss:gs_iid}. 
    }
\begin{equation*}
  P(z_n, \cdot) \otimes \nu \otimes \mu 
  \stackrel{w}{\longrightarrow} 
  P(z, \cdot) \otimes \nu \otimes \mu.
\end{equation*}
Based on the generalized Fatou's lemma of \cite{feinberg2014fatou}
(theorem~1.1), 
\begin{align*}
\liminf_{n \rightarrow \infty} &
  \int 
      h_i ( \xi_n, a_n, \hat{z}, \hat{\zeta}, \hat{\eta} )
  P( z_n, \diff \hat{z}) \nu(\diff \hat{\zeta}) \mu(\diff \hat{\eta})   \\
&\geq
  \int 
      h_i ( \xi, a, \hat{z}, \hat{\zeta}, \hat{\eta} )
  P( z, \diff \hat{z}) \nu(\diff \hat{\zeta}) \mu(\diff \hat{\eta}).  
\end{align*}
Since $z \mapsto \EE_z \hat{R} \, [u'(\hat{Y}) + K ]$ is continuous by 
assumption \ref{a:conti_ereuprm}, this implies that
\begin{align*}
 \liminf_{n \rightarrow \infty} 
  \left(
      \pm 
      \EE_{z_n}        
         \hat{R} 
         \left( u' \circ c \right)
         \left[ \hat{R} \left(a_n - \xi_n \right) + \hat{Y}, \hat{z} \right]
  \right)    
 \geq  
  \left(
      \pm 
      \EE_{z}   
        \hat{R} 
        \left(u' \circ c \right)
        \left[ \hat{R} \left(a - \xi \right) + \hat{Y}, \hat{z} \right]
  \right).  
\end{align*}
The function $g$ is then continuous since the above inequality is equivalent 
to
\begin{align*}
\liminf_{n \rightarrow \infty} g(\xi_n, a_n, z_n)   
  \geq g(\xi, a, z)   
  \geq   
  \limsup_{n \rightarrow \infty} g(\xi_n, a_n, z_n).
\end{align*}
Hence, $\psi_c$ is continuous on $G$, as was to be shown.
Moreover, since $\xi \mapsto \psi_c(\xi, a, z)$ takes values in the closed 
interval 
\begin{equation*}
  I(a,z) := \left[ 
                u'(a), 
                u'(a) + \EE_z \hat{R} \left( u'(\hat{Y}) + K \right)
            \right],
\end{equation*}
the correspondence $(a, z) \mapsto I(a,z)$ is nonempty, compact-valued and 
continuous. By theorem~B.1.4 of \cite{stachurski2009economic}, 
$(a,z) \mapsto [u' \circ (Tc)] (a,z)$ is continuous. $Tc$ is then continuous 
on $\SS_0$ since $u'$ is continuous.

\textbf{Step 2.} We show that $Tc$ is increasing in $a$. Suppose that for 
some $z \in \ZZ$ and $a_1, a_2 \in (0, \infty)$ with $a_1 < a_2$, we have
$\xi_1 := Tc (a_1,z) > Tc (a_2,z) =: \xi_2$. Since $c$ is increasing in $a$ 
by assumption, $\psi_c$ is increasing in $\xi$ and decreasing in $a$. Then 
$u'(\xi_1) < u'(\xi_2) 
 = \psi_c(\xi_2, a_2, z) 
 \leq \psi_c(\xi_1, a_1, z) = u'(\xi_1)$. This is a contradiction.
 
\textbf{Step 3.} We have shown in proposition \ref{pr:welldef_T} that 
$Tc(a,z) \in (0,a]$ for all $(a,z) \in \SS_0$.

\textbf{Step 4.} We show that $\| u' \circ (Tc) - u' \| < \infty$. Since
$u'[Tc(a,z)] \geq u'(a)$, we have
\begin{align*}
    &\left| u'[Tc(a,z)] - u'(a) \right| 
    = u'[Tc(a,z)] - u'(a)    \\
    & \leq
      \EE_{z} \hat{R} 
              \left( u' \circ c \right)  
              \left( \hat{R} \left[ a - Tc(a,z) \right] + \hat{Y},
                     \, \hat{z})
              \right)    
    \leq 
      \EE_{z} \hat{R} \left[ u'(\hat{Y}) + K \right].
\end{align*}
for all $(a,z) \in \SS_0$. Assumption \ref{a:bd_sup_ereuprm} then implies that
\begin{align*}
\left\| u' \circ (Tc) - u' \right\| 
\leq 
    \sup_{z \in \ZZ} \EE_{z} \hat{R} u'(\hat{Y}) + 
    K \left( \sup_{z \in \ZZ} \EE_z \hat{R} \right) < \infty.
\end{align*}
This concludes the proof.
\end{proof}

In the rest of this section, we aim to prove theorem \ref{t:ctra_T}. Recall
$\hH$ defined above.
Given $h \in \hH$, let $\tilde{T} h$ be the function mapping $(a,z) \in \SS_0$ 
into the $\kappa$ that solves
\begin{equation}
\label{eq:T_hat}
    \kappa = 
    \max \left\{ 
            \beta \EE_{z} \hat{R}  \,
                  h \left( 
                        \hat{R} \left[ 
                                    a - \left(u' \right)^{-1}(\kappa) 
                                \right]  
                        + \hat{Y}, 
                        \, \hat{z} 
                    \right),
            \, u'(a)
         \right\}.
\end{equation}
The next lemma implies that $\tilde{T}$ is a well-defined self-map on $\hH$, 
as well as topologically conjugate to $T$ under the bijection 
$H: \cC \rightarrow \hH$ defined by $Hc := u' \circ c$.

\begin{lemma}
\label{lm:conjug}
  The operator $\tilde{T} \colon \hH \to \hH$ and satisfies $\tilde{T} H = H T $ on $\cC$.
\end{lemma}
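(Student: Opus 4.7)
The plan is to exploit the fact that $H \colon \cC \to \hH$, given by $Hc := u' \circ c$, is already known to be a bijection from the proof of proposition~\ref{pr:complete}. This lets us pull back the well-definedness of $\tilde{T}$ to the well-definedness of $T$ established in proposition~\ref{pr:welldef_T}, rather than repeating an intermediate-value argument on $\hH$ from scratch.

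Fix $h \in \hH$ and $(a,z) \in \SS_0$. Because $H$ is a bijection, there is a unique $c \in \cC$ with $h = u' \circ c$. By proposition~\ref{pr:welldef_T}, there is a unique $\xi = Tc(a,z) \in (0,a]$ solving \eqref{eq:T_opr}, which reads
\begin{equation*}
u'(\xi) \;=\; \max\!\left\{\beta\,\EE_z \hat R\,(u'\circ c)\!\left[\hat R(a-\xi)+\hat Y,\;\hat z\right],\;u'(a)\right\}.
\end{equation*}
Setting $\kappa := u'(\xi)$ and noting $\xi = (u')^{-1}(\kappa)$ (since $u'$ is a strictly decreasing bijection from $(0,\infty)$ onto $(0,\infty)$ by assumption~\ref{a:utility}), the display becomes exactly \eqref{eq:T_hat}. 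So $\kappa$ is a solution to \eqref{eq:T_hat}. Conversely, any solution $\kappa'$ to \eqref{eq:T_hat} automatically satisfies $\kappa' \ge u'(a)$, so $\xi' := (u')^{-1}(\kappa') \in (0,a]$; plugging back in shows $\xi'$ solves \eqref{eq:T_opr}, and hence equals $\xi$ by proposition~\ref{pr:welldef_T}. Thus $\kappa = u'(\xi)$ is the unique solution, $\tilde T h(a,z)$ is well-defined, and moreover
\begin{equation*}
(\tilde T h)(a,z) \;=\; u'(\xi) \;=\; u'\!\bigl(Tc(a,z)\bigr) \;=\; \bigl(H(Tc)\bigr)(a,z).
\end{equation*}
Taking $h = Hc$ in particular yields $\tilde T (Hc) = H(Tc)$ for every $c \in \cC$, which is the desired intertwining relation.

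Finally, the self-map property $\tilde T \colon \hH \to \hH$ follows immediately: for any $h \in \hH$ with $h = Hc$, proposition~\ref{pr:self_map} gives $Tc \in \cC$, hence $\tilde T h = H(Tc) \in H(\cC) = \hH$. The main (minor) obstacle is the algebraic verification that the substitution $\xi \leftrightarrow (u')^{-1}(\kappa)$ carries \eqref{eq:T_opr} to \eqref{eq:T_hat} faithfully, together with checking that any candidate $\kappa'$ solving \eqref{eq:T_hat} automatically lands in the admissible range $[u'(a),\infty)$ so that the inverse substitution is legal; beyond that, everything is a direct transport through the bijection $H$.
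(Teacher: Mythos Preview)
Your proof is correct and follows essentially the same approach as the paper: both arguments exploit the bijection $H$ to transport the well-definedness of $T$ (proposition~\ref{pr:welldef_T}) to that of $\tilde T$, verify the intertwining $\tilde T H = HT$ by the substitution $\kappa = u'(\xi)$, and then deduce the self-map property from proposition~\ref{pr:self_map}. Your version is slightly more explicit in checking uniqueness of the solution $\kappa$ to \eqref{eq:T_hat} (the paper simply asserts this is ``immediate''), but the logical structure is the same.
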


\begin{proof}[Proof of lemma \ref{lm:conjug}]
Pick any $c \in \cC$ and $(a,z) \in \SS_0$. Let $\xi := Tc(a,z)$, then $\xi$ solves
\begin{equation}
\label{eq:Tc_eq}
    u'(\xi) = 
     \max \left\{
             \beta \EE_{z} \hat{R} 
                   \left(u' \circ c \right) 
                   \left[ 
                       \hat{R} \left(a - \xi \right) + \hat{Y},
                       \, \hat{z}
                   \right],
             \, u'(a)
          \right\}.
\end{equation}
We need to show that $HTc$ and $\tilde{T} Hc$ evaluate to the same number at 
$(a,z)$. In other words, we need to show that $u'(\xi)$ is the solution to
\begin{equation*}
  \kappa = 
    \max \left\{
            \beta \EE_{z} \hat{R}
                  \left( u' \circ c \right) 
                     \left( 
                        \hat{R} \left[ 
                                   a - \left(u' \right)^{-1} (\kappa)
                                \right] 
                        + \hat{Y},
                        \, \hat{z}
                     \right),
            \, u'(a)
        \right\}.
\end{equation*}
But this is immediate from \eqref{eq:Tc_eq}. Hence, we have shown that
$\tilde{T} H = H T$ on $\cC$. Since $H \colon \cC \to \hH$ is a bijection,
we have $\tilde{T} = HT H^{-1}$. Since in addition $T \colon \cC \to \cC$ by 
proposition \ref{pr:self_map}, we have $\tilde{T} \colon \hH \to \hH$. This 
concludes the proof.
\end{proof}

\begin{lemma}
\label{lm:monot}
$\tilde{T}$ is order preserving on $\hH$. That is, 
$\tilde{T} h_1 \leq \tilde{T} h_2$ for all $h_1, h_2 \in \hH$ with $h_1 \leq h_2$.
\end{lemma}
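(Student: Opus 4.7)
The plan is to establish monotonicity of $\tilde{T}$ via a direct contradiction argument, exploiting the sign structure of how $\kappa$ enters the defining equation \eqref{eq:T_hat}. First I would fix $h_1, h_2 \in \hH$ with $h_1 \leq h_2$ and a point $(a,z) \in \SS_0$, and denote $\kappa_i := (\tilde{T} h_i)(a,z)$ for $i=1,2$. The goal is to show $\kappa_1 \leq \kappa_2$.

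The key observation is that the map $\kappa \mapsto \Phi_h(\kappa) := \beta \EE_z \hat{R}\, h(\hat{R}[a - (u')^{-1}(\kappa)] + \hat{Y},\, \hat{z})$ is decreasing in $\kappa$, because (i) $(u')^{-1}$ is strictly decreasing on its domain (as $u'$ is strictly decreasing by assumption \ref{a:utility}), so $a - (u')^{-1}(\kappa)$ is increasing in $\kappa$; (ii) $\hat{R} \geq 0$ almost surely, so the argument $\hat{R}[a - (u')^{-1}(\kappa)] + \hat{Y}$ is pointwise increasing in $\kappa$; and (iii) $h \in \hH$ is decreasing in its first argument.

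Suppose for contradiction that $\kappa_1 > \kappa_2$. Applying the three observations above, together with the hypothesis $h_1 \leq h_2$, we get the pointwise inequality
\begin{equation*}
  h_1 \bigl( \hat{R}[a - (u')^{-1}(\kappa_1)] + \hat{Y}, \hat{z} \bigr)
  \leq
  h_1 \bigl( \hat{R}[a - (u')^{-1}(\kappa_2)] + \hat{Y}, \hat{z} \bigr)
  \leq
  h_2 \bigl( \hat{R}[a - (u')^{-1}(\kappa_2)] + \hat{Y}, \hat{z} \bigr).
\end{equation*}
Multiplying by $\hat{R} \geq 0$ and taking $\beta\,\EE_z$ gives $\Phi_{h_1}(\kappa_1) \leq \Phi_{h_2}(\kappa_2)$. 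Taking the maximum with $u'(a)$ on both sides and invoking the defining equation \eqref{eq:T_hat} then yields $\kappa_1 \leq \kappa_2$, contradicting $\kappa_1 > \kappa_2$. Hence $\kappa_1 \leq \kappa_2$, as required.

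There is no real obstacle here — the existence and uniqueness of the defining $\kappa_i$ was already secured when showing $\tilde{T}$ is a well-defined self-map on $\hH$ (lemma \ref{lm:conjug} together with proposition \ref{pr:welldef_T} via conjugacy), so only the comparison step remains, and it follows from the monotonicity of $h$ in its first argument, the monotonicity of $(u')^{-1}$, and the preservation of inequalities by the nonnegative integrand and the maximum operation.
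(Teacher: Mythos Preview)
Your proof is correct and essentially identical to the paper's own argument: both fix $(a,z)$, set $\kappa_i := \tilde{T} h_i(a,z)$, assume $\kappa_1 > \kappa_2$ for contradiction, and chain the two monotonicities (decreasing $h$ in the first argument plus $h_1 \leq h_2$) through the defining equation \eqref{eq:T_hat} to reach $\kappa_1 \leq \kappa_2$. The only cosmetic difference is the order in which you apply the two inequality steps, which is immaterial.
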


\begin{proof}[Proof of lemma \ref{lm:monot}]
Let $h_1, h_2$ be functions in $\hH$ with $h_1 \leq h_2$. Suppose to the 
contrary that there exists $(a,z) \in \SS_0$ such that 
$\kappa_1 := \tilde{T} h_1 (a,z) > \tilde{T} h_2 (a,z) =: \kappa_2$. Since functions in $\hH$ are decreasing in the first argument, we have
\begin{align*}
  \kappa_1 &= 
      \max \left\{
              \beta \EE_{z} \hat{R} \, 
                    h_1 \left( \hat{R} 
                               \left[ a - (u')^{-1}(\kappa_1) \right]
                               + \hat{Y},
                               \, \hat{z} 
                        \right),
              \, u'(a)
           \right\}      \\
     & \leq 
      \max \left\{
              \beta \EE_{z} \hat{R} \,
                    h_2 \left( \hat{R} 
                               \left[ a - (u')^{-1}(\kappa_1) \right]
                               + \hat{Y},
                               \, \hat{z} 
                        \right),
              \, u'(a)
           \right\}      \\
     & \leq 
      \max \left\{
              \beta \EE_{z} \hat{R} \, 
                    h_2 \left( \hat{R} 
                               \left[ a - (u')^{-1}(\kappa_2) \right]
                               + \hat{Y},
                               \, \hat{z} 
                        \right),
              \, u'(a)
           \right\}
     = \kappa_2.
\end{align*}
This is a contradiction. Hence, $\tilde{T}$ is order preserving.
\end{proof}

\begin{lemma}
\label{lm:ctra_That}
$\tilde{T}^n$ is a contraction mapping on $(\hH, d_{\infty})$ with modulus 
$\theta$.
\end{lemma}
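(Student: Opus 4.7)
The plan is to exploit monotonicity of $\tilde{T}$ (lemma \ref{lm:monot}) together with a one-step ``affine'' bound that tracks how a nonnegative $z$-dependent perturbation of the input propagates through $\tilde{T}$. The crucial observation is that the dependence on the first argument $a$ drops out after one application, so the iterated bound can be encoded purely through the bounded linear operator $K$ of example \ref{ex:spec_rad_ctra}, namely $Kg(z) = \beta \, \EE_z \hat R \, g(\hat z)$.

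The central intermediate claim I will establish is this: for any $h_1, h_2 \in \hH$ and any nonnegative bounded function $g \colon \ZZ \to \RR_+$, if $h_1(a, z) \leq h_2(a, z) + g(z)$ for all $(a,z) \in \SS_0$, then
\begin{equation*}
    (\tilde{T} h_1)(a, z) \leq (\tilde{T} h_2)(a, z) + (Kg)(z), \qquad \forall (a,z) \in \SS_0.
\end{equation*}
I would prove this by contradiction. Writing $\kappa_i := \tilde{T} h_i(a,z)$ and $M := (Kg)(z)$, suppose $\kappa_1 > \kappa_2 + M$. Because $u'$ is strictly decreasing, $(u')^{-1}(\kappa_1) < (u')^{-1}(\kappa_2)$, so $a - (u')^{-1}(\kappa_1) > a - (u')^{-1}(\kappa_2)$. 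Since functions in $\hH$ are decreasing in the first argument and $h_1 \leq h_2 + g$,
\begin{equation*}
    h_1\bigl(\hat R[a - (u')^{-1}(\kappa_1)] + \hat Y, \hat z\bigr)
    \leq h_2\bigl(\hat R[a - (u')^{-1}(\kappa_2)] + \hat Y, \hat z\bigr) + g(\hat z).
\end{equation*}
Multiplying by $\beta \hat R$, taking $\EE_z$, and using the elementary identity $\max\{A+M, B\} \leq \max\{A, B\} + M$ for $M \geq 0$ then produces $\kappa_1 \leq \kappa_2 + M$, the desired contradiction.

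Given this one-step inequality, the proof of the lemma proceeds by iteration. Fix $h_1, h_2 \in \hH$ and set $\epsilon := d_\infty(h_1, h_2)$, so $h_1 \leq h_2 + \epsilon$ with $\epsilon$ viewed as the constant function on $\ZZ$. Applying the claim with $g \equiv \epsilon$ gives $\tilde{T} h_1 \leq \tilde{T} h_2 + K\epsilon$; since $\tilde{T} h_1, \tilde{T} h_2 \in \hH$ by lemma \ref{lm:conjug}, the same claim applies again with $g = K\epsilon$, yielding $\tilde{T}^2 h_1 \leq \tilde{T}^2 h_2 + K^2 \epsilon$, and inductively
\begin{equation*}
    (\tilde{T}^n h_1)(a,z) - (\tilde{T}^n h_2)(a,z) \leq (K^n \epsilon)(z)
    = \epsilon \, \beta^n \, \EE_z R_1 \cdots R_n \leq \theta \, \epsilon
\end{equation*}
by assumption \ref{a:ctra_coef}. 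Interchanging the roles of $h_1, h_2$ and taking the supremum over $(a,z) \in \SS_0$ yields $d_\infty(\tilde{T}^n h_1, \tilde{T}^n h_2) \leq \theta \, d_\infty(h_1, h_2)$, which is the claim.

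The main obstacle is the one-step perturbation lemma: because $\tilde{T}$ is defined implicitly through a fixed-point equation in $\kappa$ that couples the output to the argument of $h$ via $(u')^{-1}(\kappa)$, a direct manipulation is blocked, and the contradiction argument above is what allows one to pass cleanly between the two implicit equations. Everything else — the iteration, the identification with powers of $K$, and the final reduction to $\theta$ via assumption \ref{a:ctra_coef} — is essentially bookkeeping.
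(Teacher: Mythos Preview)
Your proof is correct and follows essentially the same approach as the paper's: both track how a nonnegative perturbation propagates through $\tilde T$, showing that after $k$ steps a constant shift $\gamma$ becomes the $z$-dependent bound $\gamma\,\beta^{k}\,\EE_z R_1\cdots R_k$, and then invoke assumption~\ref{a:ctra_coef} at step $n$. The paper frames this via Blackwell's conditions and a direct inductive calculation, whereas you isolate a one-step perturbation lemma (with general $z$-dependent $g$) proved by contradiction and iterate it through the operator $K$; these are stylistic rather than substantive differences. One minor point: the operator in example~\ref{ex:spec_rad_ctra} is defined without the factor $\beta$, so your $K$ is $\beta$ times that operator, but your argument is internally consistent with the definition you state.
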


\begin{proof}[Proof of lemma \ref{lm:ctra_That}]
Since $\tilde{T}$ is order preserving and $\hH$ is closed under the addition of nonnegative constants, based on \cite{blackwell1965discounted}, it remains to verify: for $n$ and $\theta$ given by assumption \ref{a:ctra_coef},
\begin{equation*}
  \tilde{T}^n (h+\gamma) \leq \tilde{T}^n h + \theta \gamma  \;
   \text{ for all } h \in \hH 
   \text{ and } \gamma \geq 0.
\end{equation*}
To that end, by assumption \ref{a:ctra_coef}, it suffices to show that for all $k \in \NN$ and $(a,z) \in \SS_0$,
\begin{equation}
\label{eq:That_k}
    \tilde{T}^k (h+ \gamma) (a,z)
    \leq \tilde{T}^k h(a,z) + \gamma \beta^k \EE_z R_1 \cdots R_k.
\end{equation}
Fix $h \in \hH$, $\gamma \geq 0$, and let $h_{\gamma} (a,z) := h(a, z) + \gamma$. By the definition of $\tilde{T}$, we have
\begin{align*}
\tilde{T} h_\gamma (a,z) 
&= 
   \max \left\{ 
           \beta \EE_{z} \hat{R} \, 
                 h_{\gamma} 
                   \left( \hat{R} 
                          \left[ 
                              a - (u')^{-1}(\tilde{T} h_{\gamma})(a,z) 
                          \right] 
                          + \hat{Y}, 
                          \hat{z} 
                   \right),
            u'(a)
        \right\}    \\
& \leq 
   \max \left\{ 
           \beta \EE_{z} \hat{R} \, 
                 h \left( \hat{R} 
                          \left[ a - 
                                (u')^{-1} (\tilde{T} h_{\gamma})(a,z) 
                          \right] 
                          + \hat{Y}, 
                          \hat{z} 
                   \right),
            u'(a)
        \right\}
    + \gamma \beta \EE_z R_1    \\
& \leq
   \max \left\{ 
           \beta \EE_{z} \hat{R} \,
                 h \left( \hat{R} 
                          \left[ a - 
                                (u')^{-1} (\tilde{T} h)(a,z) 
                          \right] 
                          + \hat{Y}, 
                          \hat{z} 
                   \right),
            u'(a)
        \right\}
    + \gamma \beta \EE_z R_1.
\end{align*}
Here, the first inequality is elementary and the second is due to the fact 
that $h \leq h_\gamma$ and $\tilde{T}$ is order preserving. Hence, $\tilde{T} (h+ \gamma) (a,z ) \leq \tilde{T} h (a,z) + \gamma \beta \EE_z R_1$ and \eqref{eq:That_k} holds for $k = 1$. Suppose that \eqref{eq:That_k} holds for arbitrary $k$. It remains to show that \eqref{eq:That_k} holds for $k+1$. Define
\begin{equation*}
    f(z) := \gamma \beta^k \EE_z R_1 \cdots R_k.
\end{equation*}
By the induction hypothesis, the monotonicity of $\tilde{T}$ and the Markov property,
\begin{align*}
    \tilde{T}^{k+1} h_\gamma (a,z)
    &= \max \left\{ 
        \beta \EE_{z} \hat{R} \, (\tilde{T}^k h_{\gamma})
        \left( \hat{R} 
        \left[ a - (u')^{-1}(\tilde{T}^{k+1} h_{\gamma})(a,z) \right] 
        + \hat{Y}, \hat{z} 
       \right),
       u'(a)
   \right\}    \\
   & \leq \max \left\{
       \beta \EE_z \hat{R} 
       \left(\tilde{T}^k h + f \right) \left( \hat{R} 
       \left[ a - (u')^{-1}(\tilde{T}^{k+1} h_{\gamma})(a,z) \right] 
       + \hat{Y}, \hat{z} 
       \right),
       u'(a)
   \right\}    \\
   & \leq \max \left\{
   \beta \EE_z \hat{R} (\tilde{T}^k h) \left( \hat{R} 
   \left[ a - (u')^{-1}(\tilde{T}^{k+1} h_{\gamma})(a,z) \right] 
   + \hat{Y}, \hat{z} \right), u'(a) \right\}    \\
   & \quad + \beta \EE_{z} R_1 f(z_1)    \\
   & \leq  \max \left\{
   \beta \EE_z \hat{R} (\tilde{T}^k h) \left( \hat{R} 
   \left[ a - (u')^{-1}(\tilde{T}^{k+1} h)(a,z) \right] 
   + \hat{Y}, \hat{z} \right), u'(a) \right\}     \\ 
   &\quad + \gamma \beta^{k+1} \EE_{z} R_1 \EE_{z_1} R_1 \cdots R_{k}    \\
   &= \tilde{T}^{k+1} h(a,z) + \gamma \beta^{k+1} \EE_z R_1 \cdots R_{k+1}.
\end{align*} 
Hence, \eqref{eq:That_k} is verified by induction. This concludes the proof.
\end{proof}

With the results established above, we are now ready to prove theorem \ref{t:ctra_T}.

\begin{proof}[Proof of theorem \ref{t:ctra_T}]
In view of propositions \ref{pr:complete} and \ref{pr:suff_optpol}, to 
establish all the claims in theorem \ref{t:ctra_T}, we need only show that
\begin{equation*}
  \rho(T^n c, T^n d) \leq \theta \rho(c,d) \quad \text{for all } \, c,d \in \cC.
\end{equation*}
To this end, pick any $c,d \in \cC$. Note that the topological conjugacy result established in lemma \ref{lm:conjug} implies that $\tilde{T} = H T H^{-1}$. Hence,
\begin{equation*}
    \tilde{T}^n = (H T H^{-1}) \cdots (H T H^{-1}) = H T^n H^{-1} 
    \quad \text{and} \quad
    \tilde{T}^n H = H T^n.
\end{equation*}
By the definition of $\rho$ and the contraction property established in lemma \ref{lm:ctra_That},
\begin{equation*}
  \rho(T^n c, T^n d) = d_{\infty}(H T^n c, H T^n d) 
               = d_{\infty}(\tilde{T}^n Hc, \tilde{T}^n Hd)
               \leq \theta d_{\infty}(Hc, Hd).
\end{equation*}
The right hand side is just $\theta \rho(c,d)$, which completes the proof.
\end{proof}

\section{Appendix B: Proof of Section \ref{s:sto_stability} Results}
\label{Appendix_B}

Before working into the results of each subsection, we prove a general lemma that is frequently used in later sections. Recall that, for all $c \in \cC$, the value $\xi(a,z) := Tc(a,z)$ solves
\begin{equation}
\label{eq:T_opr_general}
  \left( u' \circ \xi \right)(a,z) =
    \max \left\{
            \beta \EE_{z} \hat{R} 
                   \left(u' \circ c \right) 
                         \left(
                              \hat{R} \left[a - \xi(a,z) \right] + \hat{Y},
                              \, \hat{z}
                         \right),
            \, u'(a)
         \right\}.
\end{equation}
Let $c^* \in \cC$ denote the optimal policy. For each $z \in \ZZ$ and $c \in \cC$, define
\begin{equation}
\label{eq:a_bar}
  \bar{a}_c (z) := 
      \left(u' \right)^{-1}
      \left[ 
          \beta \EE_z \hat{R} 
                   \left(u' \circ c \right) 
                   ( \hat{Y}, \, \hat{z} ) 
      \right]    
  \quad \text{and} \quad
  \bar{a}(z) := \bar{a}_{c^*} (z).
\end{equation}
The next result implies that the borrowing constraint binds if and only if wealth is below a certain threshold level.

\begin{lemma}
\label{lm:binding}
  For all $c \in \cC$, $Tc(a,z) = a$ if and only if $a \leq \bar{a}_c (z)$. In particular,
  $c^*(a,z) = a$ if and only if $a \leq \bar{a}(z)$.
\end{lemma}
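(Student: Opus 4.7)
The plan is to work directly from the definition of the Coleman operator in \eqref{eq:T_opr}--\eqref{eq:keypart_T_opr}, together with the uniqueness of $\xi \in (0,a]$ solving \eqref{eq:T_opr} established in proposition~\ref{pr:welldef_T}. The key observation is that evaluating the defining equation at the candidate $\xi = a$ makes the argument $\hat{R}(a - \xi) + \hat{Y}$ collapse to $\hat{Y}$, which is exactly what appears inside $\bar{a}_c(z)$. Since $u'$ is strictly decreasing by assumption~\ref{a:utility}, the inverse $(u')^{-1}$ is also strictly decreasing, so inequalities between $u'(a)$ and $\beta \EE_z \hat{R}(u'\circ c)(\hat{Y},\hat{z})$ translate cleanly into inequalities between $a$ and $\bar{a}_c(z)$.

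For the ``only if'' direction, I would suppose $Tc(a,z) = a$ and plug $\xi = a$ into \eqref{eq:T_opr_general}. The right-hand side becomes $\max\{\beta \EE_z \hat{R}(u'\circ c)(\hat{Y},\hat{z}),\, u'(a)\}$, and the identity forces $u'(a) = \max\{\cdots\}$. This immediately yields $\beta \EE_z \hat{R}(u'\circ c)(\hat{Y},\hat{z}) \leq u'(a)$. Applying $(u')^{-1}$ (which reverses the inequality) gives $\bar{a}_c(z) \geq a$, as required.

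For the ``if'' direction, suppose $a \leq \bar{a}_c(z)$, which by the same monotonicity argument is equivalent to $\beta \EE_z \hat{R}(u'\circ c)(\hat{Y},\hat{z}) \leq u'(a)$. I would then verify that $\xi = a$ satisfies the defining equation for $Tc(a,z)$: the right-hand side of \eqref{eq:T_opr_general} evaluated at $\xi = a$ is $\max\{\beta \EE_z \hat{R}(u'\circ c)(\hat{Y},\hat{z}),\, u'(a)\} = u'(a)$, which matches the left-hand side. By the uniqueness claim in proposition~\ref{pr:welldef_T}, this forces $Tc(a,z) = a$. The special case $c = c^*$ then follows from theorem~\ref{t:ctra_T} since $Tc^* = c^*$ and $\bar{a}(z) = \bar{a}_{c^*}(z)$ by definition.

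There is no real obstacle: the lemma is essentially an unpacking of the ``kinked'' first-order equation. The only minor care needed is to confirm that $\bar{a}_c(z)$ is well-defined, i.e., the argument $\beta \EE_z \hat{R}(u'\circ c)(\hat{Y},\hat{z})$ lies in the range of $u'$ so that $(u')^{-1}$ applies. This is immediate: the quantity is strictly positive (by assumption~\ref{a:utility}(2) and strict positivity of $\hat{R}(u'\circ c)$ wherever $\hat{R} > 0$) and finite (by assumption~\ref{a:bd_sup_ereuprm} together with the bound $(u'\circ c)(\hat{Y},\hat{z}) \leq u'(\hat{Y}) + K$ from \eqref{eq:bd_uprime}), while $u'$ maps $(0,\infty)$ onto $(0,\infty)$ by the Inada-type conditions in assumption~\ref{a:utility}.
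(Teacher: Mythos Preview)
Your proof is correct and follows essentially the same approach as the paper: both directions hinge on evaluating the defining equation \eqref{eq:T_opr_general} at $\xi = a$ and translating the resulting inequality via the strict monotonicity of $u'$. The only cosmetic difference is that for the ``if'' direction the paper argues by contradiction (assuming $\xi(a,z) < a$ and deriving $a > \bar{a}_c(z)$), whereas you directly verify that $\xi = a$ solves the equation and invoke the uniqueness from proposition~\ref{pr:welldef_T}; your route is slightly cleaner but the content is the same.
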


\begin{proof}[Proof of lemma \ref{lm:binding}]
Let $a \leq \bar{a}_c (z)$. We claim that $\xi(a,z) = a$. Suppose to
the contrary that $\xi(a,z) < a$. Then $(u' \circ \xi)(a,z) > u'(a)$. In view of \eqref{eq:T_opr_general}, we have
\begin{align*}
  u'(a) < \beta \EE_z \hat{R} 
                 \left(u' \circ c \right) 
                     \left(
                          \hat{R} \left[a - \xi(a,z) \right] + \hat{Y},
                          \, \hat{z}
                     \right)    
         \leq
           \beta \EE_z \hat{R} 
                 \left(u' \circ c \right) (\hat{Y}, \hat{z})
        = u'[\bar{a}_c (z)].
\end{align*}
From this we get $a > \bar{a}_c (z)$, which is a contradiction. Hence,
$\xi(a,z) = a$.

On the other hand, if $\xi(a,z) = a$, then $(u' \circ \xi)(a,z) = u'(a)$. By 
\eqref{eq:T_opr_general}, we have
\begin{align*}
  u'(a) 
  \geq 
       \beta \EE_z \hat{R}
                   \left( u' \circ c \right) (\hat{Y}, \, \hat{z})
  = u'[\bar{a}_c (z)].                    
\end{align*} 
Hence, $a \leq \bar{a}_c (z)$. The first claim is verified. The second claim follows immediately from the first claim and the fact that $c^*$ is the unique fixed point of  $T$ in $\cC$.
\end{proof}

Given $c \in \cC$, lemma \ref{lm:binding} implies that $\xi(a,z) := Tc (a,z) = a$ for 
$a \leq \bar{a}_c (z)$, and that for $a > \bar{a}_c (z)$, $\xi(a,z)$ solves
\begin{equation*}
  (u' \circ \xi)(a,z)
  = \beta \EE_z \hat{R}
               \left(u' \circ c \right) 
               \left(
                    \hat{R} \left[a - \xi(a,z) \right] + \hat{Y},
                    \, \hat{z}
               \right).
\end{equation*}

\subsection{Proof of section \ref{ss:exist_stat} results}

Our first goal is to prove proposition \ref{pr:opt_pol_bd_frac}. To that end,
recall $\alpha$ given by assumption \ref{a:suff_bd_in_prob}, and define the subspace $\cC_1$ as 
\begin{equation}
\label{eq::cC1}
    \cC_1 := 
    \left\{ 
        c \in \cC: \frac{c(a,z)}{a} \geq \alpha 
        \quad \text{for all } (a,z) \in \SS_0
    \right\}.
\end{equation}

\begin{lemma}
\label{lm:cC1}
    $\cC_1$ is a closed subset of $\cC$, and $T c \in \cC_1$ for all $c \in \cC_1$.
\end{lemma}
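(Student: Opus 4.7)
The plan is to handle the two claims separately. For closedness of $\cC_1$ in $(\cC, \rho)$, I will take any sequence $\{c_n\} \subset \cC_1$ with $\rho(c_n, c) \to 0$ for some $c \in \cC$ and argue that $c \in \cC_1$. Since $\rho$ is the sup-norm distance between $u'\circ c_n$ and $u'\circ c$, we get uniform and hence pointwise convergence $u'(c_n(a,z)) \to u'(c(a,z))$. Because $u'$ is a continuous strictly decreasing bijection from $(0,\infty)$ onto $(0,\infty)$, its inverse is continuous, so $c_n(a,z) \to c(a,z)$ pointwise, and passing to the limit in $c_n(a,z) \geq \alpha a$ gives $c(a,z) \geq \alpha a$.

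For the invariance $T(\cC_1) \subset \cC_1$, I will fix $c \in \cC_1$ and $(a,z) \in \SS_0$, set $\xi := Tc(a,z)$, and split on whether the borrowing constraint binds. If $a \leq \bar{a}_c(z)$, lemma \ref{lm:binding} gives $\xi = a \geq \alpha a$ since $\alpha \in (0,1)$. Otherwise $\xi < a$ and the Euler equation holds:
\begin{equation*}
  u'(\xi) \;=\; \beta\, \EE_z \hat R\, (u'\circ c)\bigl(\hat R(a - \xi) + \hat Y,\, \hat z\bigr) \;=:\; g(\xi).
\end{equation*}
Since $c$ is increasing in its first argument and $u'$ is strictly decreasing, $g$ is non-decreasing in $\xi$ while $u'$ is strictly decreasing; therefore $u' - g$ is strictly decreasing in $\xi$, and to conclude $\xi \geq \alpha a$ it suffices to verify $u'(\alpha a) \geq g(\alpha a)$. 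Using $c(a',z') \geq \alpha a'$ and $u'$ decreasing, $(u'\circ c)(a',z') \leq u'(\alpha a')$; setting $a' = \hat R(1-\alpha) a + \hat Y$ and dropping the nonnegative term $\alpha \hat Y$ inside $u'$ gives $(u'\circ c)(a',\hat z) \leq u'\bigl(\alpha \hat R (1 - \alpha) a\bigr)$, whence
\begin{equation*}
  g(\alpha a) \;\leq\; \beta\, \EE_z \hat R\, u'\bigl(\hat R(1-\alpha)\cdot \alpha a\bigr).
\end{equation*}
The reduction is then completed by invoking assumption \ref{a:suff_bd_in_prob}-(1) at $\alpha a$ in place of $a$, which is legitimate because $(\alpha a, z) \in \SS_0$ whenever $(a,z) \in \SS_0$; this bounds the right-hand side by $u'(\alpha a)$.

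I expect the main obstacle to be identifying the right point at which to read the monotone-return hypothesis. A direct application at $a$ leaves a mismatched factor of $\alpha$ between the $u'$ arguments and fails; the reduction works only once one notices that the $\alpha$ introduced by $c(a',z') \geq \alpha a'$ combines with the $(1-\alpha) a$ produced by the trial point $\xi = \alpha a$ to give the argument $\hat R(1-\alpha)\cdot (\alpha a)$, which is exactly what assumption \ref{a:suff_bd_in_prob}-(1) bounds at $\alpha a$. Nonnegativity of $\hat Y$ is then just enough to absorb the extra labor-income term, and the strict monotonicity of $u' - g$ converts the inequality $u'(\alpha a) \geq g(\alpha a)$ into the desired $\xi \geq \alpha a$.
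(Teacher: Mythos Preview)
Your proof is correct and follows essentially the same approach as the paper. The closedness argument is identical, and for the self-map property you use the same chain of bounds (via $c \geq \alpha a$, dropping the nonnegative $\alpha\hat Y$, and reading assumption \ref{a:suff_bd_in_prob}-(1) at $\alpha a$); the only cosmetic difference is that the paper argues by contradiction from $\xi < \alpha a$ while you argue directly from the strict monotonicity of $u'-g$.
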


\begin{proof}[Proof of lemma \ref{lm:cC1}]
To see that $\cC_1$ is closed, for a given sequence $\{ c_n \}$ in $\cC_1$ and 
$c \in \cC$ with $\rho( c_n , c) \rightarrow 0$, we need to 
verify that $c \in \cC_1$. This obviously holds since $c_n(a,z) /a \geq \alpha$ for all $n$ and $(a,z) \in \SS_0$, and, on the other hand, $\rho( c_n , c) \rightarrow 0$ implies that $c_n (a,z) \to c(a,z)$ for all $(a,z) \in \SS_0$. 

We next show that $T$ is a self-map on $\cC_1$. Fix $c \in \cC_1$. We have $Tc \in \cC$ since $T$ is a self-map on $\cC$. It remains to show that $\xi := Tc$ satisfies 
$\xi (a,z) \geq \alpha a$ for all $(a,z) \in \SS_0$. Suppose to the contrary that 
$\xi(a,z) < \alpha a$ for some $(a,z) \in \SS_0$. Then 
\begin{equation*}
u'(\alpha a) < (u' \circ \xi)(a, z)    
 = \max 
       \left\{ 
             \beta \EE_z \hat{R}    
              \left( u' \circ c \right) 
              \left( 
                   \hat{R} \left[ a - \xi(a,z) \right] + \hat{Y},
                   \, \hat{z}
              \right),
              u'(a)
      \right\}. 
\end{equation*}
Since $u'(\alpha a) > u'(a)$ and $c \in \cC_1$, this implies that
\begin{align*}
 u'(\alpha a )
   &< \beta \EE_z \hat{R}    
              \left( u' \circ c \right) 
              \left( 
                   \hat{R} \left[ a - \xi(a,z) \right] + \hat{Y},
                   \, \hat{z}
              \right)     \\
   &\leq \beta \EE_z \hat{R}    
                   u' \left( 
                          \alpha \hat{R} \left[ a - \xi(a,z) \right] 
                          + \alpha \hat{Y}
                      \right)    \\
& \leq \beta \EE_z \hat{R}    
                   u' \left[
                          \alpha \hat{R} (1 - \alpha) a + \alpha \hat{Y}
                      \right]    
  \leq \beta \EE_z \hat{R}  \,  
                   u' \left[ 
                          \hat{R} (1 - \alpha) (\alpha a)
                      \right].
\end{align*}
This is a contradicted with condition (1) of assumption \ref{a:suff_bd_in_prob} 
since $(\alpha a, z) \in \SS_0$. Hence, $\xi(a,z) / a \geq \alpha$ for all 
$(a,z) \in \SS_0$ and we conclude that $Tc \in \cC_1$.
\end{proof}

With this result, we are now ready to prove proposition \ref{pr:opt_pol_bd_frac}.

\begin{proof}[Proof of proposition \ref{pr:opt_pol_bd_frac}]
Since the claim obviously holds when $a=0$, it remains to verify that this claim holds on $\SS_0$. We have shown in theorem \ref{t:ctra_T} that $T$ is a contraction mapping on the complete metric space $(\cC, \rho)$, with unique fixed point $c^*$. Since in addition $\cC_1$ is a closed subset of $\cC$ and $T \cC_1 \subset \cC_1$ by lemma \ref{lm:cC1}, we know that $c^* \in \cC_1$. In summary, we have $c^*(a, z) \geq \alpha a$ for all $(a,z) \in \SS$.  
\end{proof}

Our next goal is to prove theorem \ref{t:sta_exist}. To that end, recall the integer $n$ given by the second condition of assumption \ref{a:suff_bd_in_prob}.

\begin{lemma}
\label{lm:bd_in_prob_at}
$\sup_{t \geq 0} \EE_{a,z} \, a_t  < \infty$
for all $(a,z) \in \SS$.
\end{lemma}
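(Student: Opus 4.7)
The plan is to reduce this to an argument structurally identical to lemma \ref{lm:max_path}, but with the discount factor $\beta$ replaced by $(1-\alpha)$, where $\alpha$ comes from assumption \ref{a:suff_bd_in_prob}. First, I would use proposition \ref{pr:opt_pol_bd_frac} (the optimal consumption rate is bounded below by $\alpha$) together with the law of motion \eqref{eq:dyn_sys} to obtain the pointwise bound
\begin{equation*}
  a_{t+1} \leq (1-\alpha) R_{t+1} a_t + Y_{t+1}.
\end{equation*}
Iterating backward yields
\begin{equation*}
  a_t \leq (1-\alpha)^t \left( \prod_{i=1}^t R_i \right) a
         + \sum_{j=1}^t (1-\alpha)^{t-j} \left( \prod_{i=j+1}^t R_i \right) Y_j,
\end{equation*}
which plays the same role here as the maximal-path expansion in lemma \ref{lm:max_path}.

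Next, taking $\EE_{a,z}$ on both sides and applying the Markov property to each cross term gives
\begin{equation*}
  \EE_z\!\left[ Y_j \prod_{i=j+1}^t R_i \right]
  = \EE_z\!\left[ Y_j \, \EE_{z_j}[R_1 \cdots R_{t-j}] \right]
  \leq \left( \sup_{z'} \EE_{z'}[R_1 \cdots R_{t-j}] \right) \EE_z Y_j.
\end{equation*}
Set $M := \sup_{j \geq 1} \sup_{z'} \EE_{z'} Y_j$, which is finite for fixed $z$ by assumption \ref{a:bd_in_prob_Yt} (the sup over $z'$ can be absorbed into $M$ by the same argument, or one simply carries $\EE_z Y_j$ through). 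The target bound then reduces to showing that
\begin{equation*}
  B_1 := \sup_{t \geq 0} (1-\alpha)^t \sup_{z} \EE_z[R_1 \cdots R_t] < \infty
  \quad \text{and} \quad
  B_2 := \sum_{s=0}^{\infty} (1-\alpha)^s \sup_z \EE_z[R_1 \cdots R_s] < \infty.
\end{equation*}

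Both bounds follow by the residue-splitting argument already used inside the proof of lemma \ref{lm:max_path}. Write $s = kn + \ell$ with $0 \leq \ell < n$, use the Markov property $k$ times, and invoke $\theta' := (1-\alpha)^n \sup_z \EE_z[R_1 \cdots R_n] < 1$ from assumption \ref{a:suff_bd_in_prob}-(2) to get
\begin{equation*}
  (1-\alpha)^{kn} \sup_z \EE_z[R_1 \cdots R_{kn}] \leq (\theta')^k,
\end{equation*}
while the residual factor $(1-\alpha)^{\ell} \sup_z \EE_z[R_1 \cdots R_\ell]$ is controlled by $(\sup_z \EE_z \hat{R})^\ell < \infty$ from assumption \ref{a:bd_sup_ereuprm}. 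Finiteness of $B_1$ is then immediate, and summing the geometric bound $(\theta')^k$ over $k$ against the finitely many residue constants yields $B_2 < \infty$. Combining, $\EE_{a,z} a_t \leq B_1 a + M B_2$ for every $t$, uniformly in $t$.

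The proof involves no genuinely new idea beyond lemma \ref{lm:max_path}; the main (very mild) obstacle is simply keeping the bookkeeping clean when the cross terms mix a $Y_j$ factor with a product of later $R_i$'s, which is resolved cleanly by a single application of the tower property and the time-homogeneity of the Markov kernel $P$.
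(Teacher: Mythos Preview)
Your proposal is correct and follows essentially the same approach as the paper: iterate the bound $a_{t+1}\le(1-\alpha)R_{t+1}a_t+Y_{t+1}$ from proposition~\ref{pr:opt_pol_bd_frac}, split $t=kn+\ell$, and use the Markov property together with $(1-\alpha)^n\sup_z\EE_z R_1\cdots R_n<1$ from assumption~\ref{a:suff_bd_in_prob}-(2) and $\sup_z\EE_z\hat R<\infty$ from assumption~\ref{a:bd_sup_ereuprm} to control the product terms. The one point to tighten is your definition of $M$: assumption~\ref{a:bd_in_prob_Yt} only gives $\sup_{j\ge1}\EE_z Y_j<\infty$ for each fixed $z$, not uniformly in $z$, so your fallback ``carry $\EE_z Y_j$ through'' is the right move and is exactly what the paper does.
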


\begin{proof}[Proof of lemma \ref{lm:bd_in_prob_at}]
Since $c^*(0,z) = 0$, proposition \ref{pr:opt_pol_bd_frac} implies that 
$c^*(a,z) \geq \alpha a$ for all $(a,z) \in \SS$. For all $t \geq 1$, we have $t = kn + j$ in general, where $k \in \{0\} \cup \NN$ and $j \in \{0,1, \cdots, n-1\}$. Using these facts and \eqref{eq:trans_at}, we have:
\begin{align*}
a_t &= R_t(a_{t-1} - c_{t-1}) + Y_t 
    \leq (1 - \alpha) R_t a_{t-1} + Y_t \leq \cdots   \\
    & \leq (1 - \alpha)^t R_t \cdots R_1 a + (1 - \alpha)^{t-1} R_t \cdots R_2 Y_1
    + \cdots + (1 - \alpha) R_t Y_{t-1} + Y_t    \\
    &= (1 - \alpha)^{kn+j} R_{kn +j} \cdots R_1 a + 
    \sum_{\ell=1}^{j} (1 - \alpha)^{kn + j - \ell} R_{kn+j} \cdots R_{\ell+1} Y_{\ell}    \\
    & \quad + \sum_{m=1}^{k} \sum_{\ell=1}^{n} (1 - \alpha)^{mn - \ell}
    R_{kn + j} \cdots R_{(k-m)n + j + \ell + 1} Y_{(k-m)n + j + \ell}
\end{align*}
with probability one. Hence,
\begin{align*}
    \EE_{a,z} a_t 
    &\leq (1 - \alpha)^t \EE_z R_t \cdots R_1 a + 
    \sum_{\ell = 1}^{t} (1 - \alpha)^{t-\ell} \EE_z R_t \cdots R_{\ell+1} Y_{\ell}    \\
    &= (1 - \alpha)^{kn+j} \EE_z R_{kn +j} \cdots R_1 a + 
    \sum_{\ell=1}^{j} (1 - \alpha)^{kn + j - \ell} \EE_z R_{kn+j} \cdots R_{\ell+1} Y_{\ell}    \\
    & \quad + \sum_{m=1}^{k} \sum_{\ell=1}^{n} (1 - \alpha)^{mn - \ell}
    \EE_z R_{kn + j} \cdots R_{(k-m)n + j + \ell + 1} Y_{(k-m)n + j + \ell}     
\end{align*}
for all $(a,z) \in \SS$. Define 
\begin{equation*}
    \gamma := (1 - \alpha)^n \sup_{z \in \ZZ} \EE_z R_1 \cdots R_n
    \quad \text{and} \quad
    M:= \max_{1 \leq \ell \leq n} 
        \left[
            (1 - \alpha)^\ell \sup_{z \in \ZZ} \EE_z R_\ell \cdots R_1
        \right].
\end{equation*}
Note that $\gamma < 1$ by assumption \ref{a:suff_bd_in_prob}-(2) and $M < \infty$ by assumption \ref{a:bd_sup_ereuprm} and the Markov property. Moreover, $M' := \sup_{t \geq 1} \EE_z Y_t < \infty$ by assumption \ref{a:bd_in_prob_Yt}. The Markov property then implies that for all $(a,z) \in \SS$ and $t \geq 0$,
\begin{align*}
	\EE_{a,z} a_t 
	&\leq \gamma^k (1 - \alpha)^{j} \EE_z R_{j} \cdots R_1 a + 
	\gamma^k \sum_{\ell=1}^{j} (1 - \alpha)^{j - \ell} \EE_z R_{j} \cdots R_{\ell+1} Y_{\ell}    \\
	& \quad + \sum_{m=0}^{k-1} \gamma^m \sum_{\ell=1}^{n} (1 - \alpha)^{n - \ell}
	\EE_z R_{(k-m)n + j} \cdots R_{(k-m-1)n + j + \ell + 1} Y_{(k-m)n + j + \ell}     \\
	&\leq \gamma^k M a + \gamma^k M \sum_{\ell=1}^{j} \EE_z Y_\ell +
	\sum_{m=0}^{k-1} \gamma^m M \sum_{\ell=1}^{n} \EE_z Y_{(k-m-1)n + j + \ell}  \\
	&\leq Ma + MM'n + \sum_{m=0}^{\infty} \gamma^m MM' n < \infty.
\end{align*}
Hence, $\sup_{t \geq 0} \EE_{a,z} \, a_t < \infty$ for all $(a,z) \in \SS$, as was claimed.
\end{proof}

A function $w^* \colon \SS \to \RR_+$ is called \emph{norm-like} if all 
its sublevel sets (i.e., sets of the form 
$\{s \in \SS \colon w(s) \leq b \}, b \in \RR_+$) are precompact in $\SS$ 
(i.e., any sequence in a given sublevel set has a subsequence that converges 
to a point of $\SS$).

\begin{proof}[Proof of theorem \ref{t:sta_exist}]
Based on lemma~D.5.3 of \cite{meyn2009markov}, a stochastic kernel $Q$ is 
bounded in probability if and only if for all $s \in \SS$, there exists a 
norm-like function $w_s^* \colon \SS \to \RR_+$ such that the $(Q,s)$-Markov process $\{s_t\}_{t \geq 0}$ satisfies $\limsup_{t \to \infty} \EE_s \left[ w_s^*(s_t) \right] < \infty$. 

Fix $(a,z) \in \SS$. Since $P$ is bounded in probability by assumption 
\ref{a:z_bdd_in_prob}, there exists a norm-like function 
$w \colon \ZZ \to \RR_+$ such that 
$\limsup_{t \to \infty} \EE_z w (z_t) < \infty$.
Then $w^* \colon \SS \rightarrow \RR_+$ defined by 
$w^*(a_0,z_0) := a_0 + w (z_0)$ is 
a norm-like function on $\SS$. The stochastic kernel $Q$ is then bounded in 
probability since lemma \ref{lm:bd_in_prob_at} implies that
\begin{equation*}
  \limsup_{t \to \infty} \EE_{a,z} \, w^*(a_t, z_t)
  \leq \sup_{t \geq 0} \EE_{a,z} \, a_t +
       \limsup_{t \to \infty} \EE_z \, w(z_t) 
  < \infty.
\end{equation*}
Regarding existence of stationary distribution, since $c^*$ is continuous and 
assumption \ref{a:conti_ereuprm} holds, and we have shown in the proof 
of proposition \ref{pr:self_map} that 
\begin{equation*}
  P(z_n, \cdot) \otimes \nu \otimes \mu 
  \stackrel{w}{\longrightarrow}
  P(z, \cdot) \otimes \nu \otimes \mu
\end{equation*}
whenever $z_n \to z$, a simple application of the generalized Fatou's lemma
of \cite{feinberg2014fatou} (theorem~1.1) as in the proof of proposition 
\ref{pr:self_map} shows that the stochastic kernel $Q$ is Feller. 
Since in addition $Q$ is bounded in probability, based on the 
Krylov-Bogolubov theorem (see, e.g., \cite{meyn2009markov}, 
proposition~12.1.3 and lemma~D.5.3), $Q$ admits at least one stationary 
distribution.
\end{proof}

\subsection{Proof of section \ref{ss:further_prop} results}

We start by proving example \ref{eg:concave}.

\begin{proof}[Proof of example \ref{eg:concave}]
For each $c$ in $\cC$ concave in the first argument, let 
$h_c (x, \hat{\omega}) 
   := c ( \hat{R} x + \hat{Y}, \hat{z} )$, 
where $\hat{\omega} := ( \hat{R}, \hat{Y}, \hat{z} )$. 
Then $x \mapsto h_c (x, \hat{\omega})$ is concave. Since 
$u'(c) = c^{-\gamma}$, we have 
\begin{align*}
   &\left[ 
       \beta 
       \EE_z \hat{R} \, 
             h_c(\alpha x_1 + (1 - \alpha) x_2, \hat{\omega})^{-\gamma} 
    \right]^{-\frac{1}{\gamma}}   
   \geq  
    \left[
        \beta \EE_z \hat{R} 
              \left[
                  \alpha h_c (x_1, \hat{\omega}) +
                  (1 - \alpha) h_c (x_2, \hat{\omega}) 
              \right]^{-\gamma}
    \right]^{-\frac{1}{\gamma}}   \\
  &= \beta^{-\frac{1}{\gamma}}
    \left(
        \EE_z \left[
                  \alpha 
                  \hat{R}^{-\frac{1}{\gamma}} h_c(x_1, \hat{\omega})
                     +
                  (1 - \alpha)
                  \hat{R}^{-\frac{1}{\gamma}} h_c(x_2, \hat{\omega})
              \right]^{-\gamma}
    \right)^{-\frac{1}{\gamma}}    \\
  &\geq \beta^{-\frac{1}{\gamma}}
    \left[
      \left(  
          \EE_z \left[
                \alpha
                \hat{R}^{-\frac{1}{\gamma}} 
                h_c(x_1, \hat{\omega})
                \right]^{-\gamma} 
      \right)^{-\frac{1}{\gamma}}  + 
       \left(  
          \EE_z \left[
                    (1 - \alpha)
                    \hat{R}^{-\frac{1}{\gamma}} 
                    h_c(x_2, \hat{\omega})
                \right]^{-\gamma} 
       \right)^{-\frac{1}{\gamma}} 
    \right]  \\
  &= 
    \alpha
      \left[ 
          \beta 
          \EE_z \hat{R} \,
                h_c(x_1, \hat{\omega})^{-\gamma} 
      \right]^{-\frac{1}{\gamma}}  + 
   (1 - \alpha)
      \left[ 
          \beta 
          \EE_z \hat{R} \, 
                h_c(x_2, \hat{\omega})^{-\gamma} 
      \right]^{-\frac{1}{\gamma}},
\end{align*}
where the second inequality is due to the generalized Minkowski's inequality 
(see, e.g., \cite{hardy1952inequalities}, page~146, theorem 198).
Hence, assumption \ref{a:concave} holds.
\end{proof}

Next, we aim to prove proposition \ref{pr:optpol_concave}. Recall $\cC_1$ given by \eqref{eq::cC1}. Consider a further subspace $\cC_2$ defined by
\begin{equation}
\label{eq:cC2}
  \cC_2 := \left\{ c \in \cC_1 \colon 
                   a \mapsto c(a,z) \text{ is concave for all }
                   z \in \ZZ 
           \right\}.
\end{equation}

\begin{lemma}
\label{lm:self_map_cC2}
$\cC_2$ is a closed subset of the metric space 
$(\cC, \rho)$, and $T c \in \cC_2$ for all $c \in \cC_2$.
\end{lemma}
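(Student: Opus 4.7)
The plan is to verify the two assertions separately: that $\cC_2$ is closed in $(\cC, \rho)$, and that $T$ maps $\cC_2$ into itself. The closedness half is routine; the self-map half is where the concavity assumption \ref{a:concave} does its real work.

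For closedness, I will take a sequence $\{c_n\} \subset \cC_2$ with $\rho(c_n, c) \to 0$ for some $c \in \cC$. Since $\cC_1$ is already known to be closed (Lemma \ref{lm:cC1}), $c \in \cC_1$. Convergence in $\rho$ means $u' \circ c_n \to u' \circ c$ uniformly on $\SS_0$, so, by continuity of $(u')^{-1}$, $c_n(a,z) \to c(a,z)$ pointwise. Pointwise limits of concave functions are concave, so $a \mapsto c(a,z)$ is concave for each $z \in \ZZ$, which yields $c \in \cC_2$.

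For the self-map property, fix $c \in \cC_2$ and $z \in \ZZ$, and write $\xi(a,z) := Tc(a,z)$. Lemma \ref{lm:cC1} already gives $Tc \in \cC_1$, so only concavity of $a \mapsto \xi(a,z)$ remains. Using Lemma \ref{lm:binding}, I split the domain at $\bar a := \bar a_c(z)$: on the binding region $a \leq \bar a$ we have $\xi(a,z) = a$, which is linear; on the non-binding region $a > \bar a$, $\xi$ satisfies $\xi = F(a-\xi)$, where
\begin{equation*}
    F(s) := (u')^{-1}\left[ \beta \EE_z \hat R\,(u'\circ c)(\hat R s + \hat Y,\hat z)\right].
\end{equation*}
Since $c \in \cC_2$ is concave in its first argument, assumption \ref{a:concave} delivers concavity of $F$ on $\RR_+$. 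Monotonicity of $c$ together with the monotonicity of $(u')^{-1}$ and $u'$ makes $F$ strictly increasing (the non-degeneracy $\PP\{\hat R > 0 \mid z\} > 0$ is assumed throughout this subsection), so $F^{-1}$ is convex and increasing. Rewriting the non-binding equation as $a = \xi + F^{-1}(\xi) =: \phi(\xi)$ shows that $\phi$ is strictly increasing and convex, hence $\phi^{-1}$ is concave and strictly increasing, and $\xi(a,z) = \phi^{-1}(a)$ on $(\bar a, \infty)$.

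The main obstacle is gluing the two pieces across the kink at $a = \bar a$ and concluding global concavity. The values match, since $F^{-1}(\bar a) = 0$ gives $\phi^{-1}(\bar a) = \bar a$. For the slopes, the left derivative is $1$, while the right derivative equals $1/\phi'(\bar a) = 1/(1 + (F^{-1})'(\bar a)) \leq 1$. Hence the one-sided derivatives are non-increasing across the boundary; combined with concavity on each region, this yields global concavity of $a \mapsto \xi(a,z)$. (A clean alternative, which I may use to avoid derivative calculations, is a direct case analysis showing $\xi(\lambda a_1 + (1-\lambda)a_2, z) \geq \lambda\xi(a_1,z) + (1-\lambda)\xi(a_2,z)$, splitting on whether $a_1, a_2, a_\lambda$ lie in the binding or non-binding regions and invoking concavity of $\phi^{-1}$ in the non-binding cases together with the trivial inequality $\xi_2 \leq a_2$ in the mixed case.) This establishes $Tc \in \cC_2$ and completes the proof.
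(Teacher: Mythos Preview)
Your argument is correct and the closedness half mirrors the paper's (omitted) reasoning. For the self-map half, however, you take a genuinely different route. The paper argues by contradiction on the non-binding region $(\bar a_c(z),\infty)$: assuming $\xi$ lies strictly below a chord at some $\alpha a_1+(1-\alpha)a_2$, it evaluates the Euler equation at all three points, uses the monotonicity of $u'\circ c$ to pass the convex combination inside, and then invokes assumption~\ref{a:concave} directly on the map $s\mapsto(u')^{-1}[\beta\EE_z\hat R(u'\circ c)(\hat Rs+\hat Y,\hat z)]$ to obtain $\xi(\alpha a_1+(1-\alpha)a_2,z)\geq\alpha\xi(a_1,z)+(1-\alpha)\xi(a_2,z)$, a contradiction. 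Your approach instead exploits the implicit-function structure: writing $\xi=F(a-\xi)$ and inverting to $a=\xi+F^{-1}(\xi)=\phi(\xi)$, you get concavity of $\xi=\phi^{-1}$ from convexity of $\phi$, which in turn comes from concavity of $F$. The paper's argument is shorter and sidesteps any discussion of the range of $F$ or strict monotonicity, but it silently assumes that concavity on $(\bar a_c(z),\infty)$ plus linearity on $(0,\bar a_c(z)]$ plus continuity and monotonicity yield global concavity---precisely the gluing step you spell out carefully (and which does require the bound $\xi\leq a$, i.e., the right slope at $\bar a_c(z)$ is at most $1$). Your version is more constructive and makes that boundary behaviour explicit, at the cost of the mild regularity checks on $F$ that you flag.
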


\begin{proof}[Proof of lemma \ref{lm:self_map_cC2}]
The proof of the first claim is straightforward and thus omitted. 
We now prove the second claim.
Fix $c \in \cC_2$. By lemma \ref{lm:cC1} we have $Tc \in \cC_1$. It 
remains to show that $a \mapsto \xi(a, z) := Tc (a,z)$ is concave for all
$z \in \ZZ$. Given $z \in \ZZ$, lemma \ref{lm:binding} implies that $\xi(a,z) = a$ for $a \leq \bar{a}_c(z)$ and that $\xi (a,z) < a$ for $a > \bar{a}_c(z)$.
Since in addition $a \mapsto \xi(a,z)$ is continuous and increasing, to show the concavity of 
$\xi$ with respect to $a$, it suffices to show that $a \mapsto \xi (a,z)$ is 
concave on $(\bar{a}_c(z), \infty)$.

Suppose to the contrary that there exist some $z \in \ZZ$, 
$\alpha \in [0,1]$, and $a_1, a_2 \in (\bar{a}_c (z), \infty)$ such that 
\begin{equation}
\label{eq:as_ctdt}
  \xi \left( \alpha a_1 + (1 - \alpha) a_2, \, z \right)
  < \alpha \xi(a_1, z) + (1 - \alpha) \xi(a_2, z).
\end{equation}
Let $h(a, z, \hat{\omega}):= \hat{R} \left[a - \xi(a, z) \right] + \hat{Y}$, where $\hat{\omega} := (\hat{R}, \hat{Y})$. Then by lemma \ref{lm:binding} (and the analysis that follows immediately after that lemma), we have
\begin{align*}
(u' \circ \xi) \left( \alpha a_1 + (1 - \alpha) a_2, \, z \right)  
&=
  \beta \EE_z \hat{R} 
       \left(u' \circ c \right) 
           \left\{
                 h [\alpha a_1 + (1 - \alpha) a_2, \, z, \, \hat{\omega}],
                 \, \hat{z}
           \right\}    \\
&\leq 
  \beta \EE_z \hat{R}
       \left( u' \circ c \right) 
           \left[ 
                \alpha h(a_1, z, \hat{\omega}) + 
                (1 - \alpha) h(a_2, z, \hat{\omega}),
                \, \hat{z}
           \right].
\end{align*}
Using assumption \ref{a:concave} then yields 
\begin{align*}
\xi (\alpha a_1 + (1 - \alpha) a_2,  z)  
& \geq 
   (u')^{-1} \left\{
                \beta \EE_z \hat{R}
                         \left(u' \circ c \right) 
                         \left[
                              \alpha h(a_1, z, \hat{\omega}) + 
                              (1 - \alpha) h(a_2, z, \hat{\omega}),
                              \, \hat{z}
                         \right]                           
             \right\}    \\
& \geq 
   \alpha \left(u' \right)^{-1} 
          \left\{
                \beta \EE_z \hat{R}
                         \left(u' \circ c \right) 
                         \left[
                              h(a_1, z, \hat{\omega}), \, \hat{z}
                         \right]                           
             \right\}  +     \\ 
   & \quad \;  
   (1 -\alpha) \left(u' \right)^{-1} 
             \left\{
                \beta \EE_z \hat{R}
                         \left(u' \circ c \right) 
                         \left[
                              h(a_2, z, \hat{\omega}), \, \hat{z}
                         \right]                           
             \right\}    \\ 
& = \alpha \left(u' \right)^{-1} 
               \left\{ 
                    \left( u' \circ \xi \right) (a_1, z) 
               \right\} +
    (1 - \alpha) \left(u' \right)^{-1} 
                     \left\{ 
                         \left(u' \circ \xi \right) (a_2, z) 
                     \right\}   \\
& = \alpha \, \xi(a_1, z) + (1 - \alpha) \, \xi(a_2, z).
\end{align*}
This contradicts our assumption in \eqref{eq:as_ctdt}. Hence, 
$a \mapsto \xi(a,z)$ is concave for all $z \in \ZZ$. This concludes the
proof.
\end{proof}

Now we are ready to prove proposition \ref{pr:optpol_concave}.

\begin{proof}[Proof of proposition \ref{pr:optpol_concave}]
By theorem \ref{t:ctra_T}, we know that $T \colon \cC \rightarrow \cC$ is a 
contraction mapping with unique fixed point $c^*$. Since $\cC_2$ is a closed
subset of $\cC$ and $T: \cC_2 \rightarrow \cC_2$ by lemma \ref{lm:self_map_cC2}, we know that 
$c^* \in \cC_2$. The first claim is verified.

Regarding the second claim, note that $c^* \in \cC_2$ implies that 
$a \mapsto c^*(a,z)$ is increasing and concave for all $z \in \ZZ$. Hence, 
$a \mapsto \frac{c^*(a,z)}{a}$ is a decreasing function for all $z \in \ZZ$. 
Since in addition $c^*(a,z) \geq \alpha a$ for all $(a,z) \in \SS_0$ by 
proposition \ref{pr:opt_pol_bd_frac}, we know that
$\alpha' := \lim_{a \rightarrow \infty} \frac{c^*(a,z)}{a}$ is well-defined and 
$\alpha' \geq \alpha$. Finally, $\alpha' < 1$ by lemma \ref{lm:binding} and the fact that $\bar{a}(z) < \infty$ (see footnote \ref{fn:abar<inf}). Hence, the second claim holds.
\end{proof}

\subsection{Proof of section \ref{ss:glb_stb} results.}

We first prove the general result that the borrowing constraint binds in finite 
time with positive probability.

\begin{lemma}
	\label{lm:bind_fntime}
	For all $(a,z) \in \SS$, we have
	$\PP_{a,z} \left( \cup_{t \geq 0} \{ c_t = a_t \} \right) > 0$.
\end{lemma}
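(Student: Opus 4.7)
I plan to argue by contradiction: if the constraint never binds with probability one, then iterating the Euler equation shows that the current marginal utility equals an expectation that necessarily vanishes as $t \to \infty$, contradicting $u' > 0$.

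Suppose, for contradiction, that $\PP_{a,z}\bigl(\bigcup_{t \geq 0} \{c_t = a_t\}\bigr) = 0$. By lemma \ref{lm:binding} the event $\{c_t = a_t\}$ coincides with $\{a_t \leq \bar{a}(z_t)\}$, so under the hypothesis, almost surely $a_t > \bar{a}(z_t)$ for every $t \geq 0$. Consequently the first-order condition holds with equality at every $(a_t, z_t)$ almost surely,
\begin{equation*}
    u'(c^*(a_t,z_t)) \;=\; \beta\,\EE\bigl[R_{t+1}\,u'(c^*(a_{t+1},z_{t+1})) \,\big|\, \fF_t\bigr],
\end{equation*}
and iterating this identity via the tower property and the Markov structure yields, for every $t \geq 1$,
\begin{equation*}
    u'(c^*(a,z)) \;=\; \beta^t\,\EE_{a,z}\bigl[R_1 \cdots R_t\, u'(c^*(a_t,z_t))\bigr].
\end{equation*}

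I then bound the right-hand side. Since $c^* \in \cC$, there exists $K \geq 0$ with $u'(c^*(x,w)) \leq u'(x) + K$ for all $(x,w) \in \SS_0$. Combined with $a_t \geq Y_t$ for $t \geq 1$ (from the law of motion together with $R_t \geq 0$ and $a_t - c_t \geq 0$), this gives $u'(c^*(a_t,z_t)) \leq u'(Y_t) + K$, so
\begin{equation*}
    u'(c^*(a,z)) \;\leq\; \beta^t\,\EE_{z}\!\left[R_1\cdots R_t\,u'(Y_t)\right] \;+\; K\,\beta^t\,\EE_{z}\!\left[R_1\cdots R_t\right].
\end{equation*}
The second term decays by the iterate-by-$n$ estimate already extracted in the proof of lemma \ref{lm:max_path}, which shows $\beta^t \sup_z \EE_z [R_1 \cdots R_t] \to 0$ under assumption \ref{a:ctra_coef}. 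For the first term, conditioning on $\fF_{t-1}$ and applying the Markov property together with the uniform bound $C_1 := \sup_z \EE_z[\hat{R}\, u'(\hat{Y})] < \infty$ from assumption \ref{a:bd_sup_ereuprm} gives $\EE_z[R_1 \cdots R_t\, u'(Y_t)] \leq C_1\,\EE_z[R_1 \cdots R_{t-1}]$, which likewise decays geometrically once multiplied by $\beta^t$.

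Letting $t \to \infty$ therefore forces $u'(c^*(a,z)) \leq 0$, contradicting $u' > 0$, which completes the argument. The main delicacy will be the rigorous passage through the iterated Euler equation: I need equality to be preserved inside nested conditional expectations under the hypothesis that $\{a_t > \bar{a}(z_t) \text{ for all } t\}$ has full probability, and the integrability needed to justify each use of the tower property must be supplied by the very bounds just displayed. Both points, however, reduce to standard dominated convergence arguments of the same type already used to verify the transversality condition in proposition \ref{pr:suff_optpol}; no new ingredients are required.
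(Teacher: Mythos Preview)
Your proposal is correct and follows essentially the same approach as the paper's proof: assume the constraint never binds, iterate the Euler equation to express $u'(c^*(a,z))$ as $\beta^t \EE_{a,z}[R_1\cdots R_t\, u'(c^*(a_t,z_t))]$, bound the integrand via $c^* \in \cC$ and $a_t \geq Y_t$, then drive both resulting terms to zero using assumption \ref{a:ctra_coef} and assumption \ref{a:bd_sup_ereuprm}. The only omission is the trivial case $a=0$, where $c^*(0,z)=0=a$ makes the claim immediate.
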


\begin{proof}[Proof of lemma \ref{lm:bind_fntime}]

The claim holds trivially when $a=0$. Suppose the claim does not hold on $\SS_0$ (recall that $\SS_0 = \SS \backslash \{0\}$), then $\PP_{a,z} \left( \cap_{t \geq 0} \{c_t < a_t\} \right) = 1$ for some $(a,z) \in \SS_0$, i.e., the borrowing constraint never binds with probability one. Hence, 
\begin{equation*}
  \PP_{a,z} 
     \left\{ 
        (u' \circ c)(a_t, z_t) = 
        \beta \EE \left[ 
                     R_{t+1} 
                     (u' \circ c) (a_{t+1}, z_{t+1}) \big| \fF_{t} 
                  \right]
     \right\}
  = 1
\end{equation*}
for all $t \geq 0$, where $\fF_t := \sigma(s_0, \cdots, s_t)$ with
$s_t := (a_t, z_t)$. Then we have
\begin{align}
\label{eq:u'c_ineq}
  \left( u' \circ c \right)(a,z)
  &= \beta^t \EE_{a,z} \,
                R_1 \cdots R_t 
                \left(u' \circ c \right)(a_t, z_t)    \nonumber  \\
  & \leq \beta^t \EE_{a,z} \,
                    R_1 \cdots R_t  
                    \left[ u'(a_t) + K \right]   \nonumber  \\
  & \leq \beta^t \EE_{z} \,
                     R_1 \cdots R_t 
                     \left[ u'(Y_t) + K \right]
\end{align}
for all $t \geq 1$. Let $t= kn +1$, where $n$ is the integer defined by assumption \ref{a:ctra_coef}. Based on assumption \ref{a:bd_sup_ereuprm} and the Markov property,
\begin{align*}
\beta^t \EE_z R_1 \cdots R_t 
&= \beta^t \EE_z R_1 \cdots R_{t-1} \EE_z (R_t \mid \fF_{t-1})
    = \beta^{t-1} \EE_z R_1 \cdots R_{t-1} \beta \EE_{z_{t-1}} R_1    \\
&\leq \left(\beta \sup_{z \in \ZZ} \EE_z R_1 \right) 
    (\beta^{nk} \EE_z R_1 \cdots R_{nk})
    \leq \left(\beta \sup_{z \in \ZZ} \EE_z R_1 \right) \theta^k
\to 0
\end{align*}
as $t \to \infty$, where $\theta \in [0,1)$ is given by assumption \ref{a:ctra_coef}. Similarly,
\begin{align*}
\beta^t \EE_z R_1 \cdots R_t u'(Y_t)
&= \beta^t \EE_z R_1 \cdots R_{t-1} 
          \EE_z \left[ R_t u'(Y_t) \mid \fF_{t-1} \right]     \\
&\leq \beta^{t} \EE_z R_1 \cdots R_{t-1} \EE_{z_{t-1}} 
    \left[R_1 u'(Y_1) \right]     \\
&\leq \left( \beta \sup_{z \in \ZZ} \EE_z [\hat{R} u'(\hat{Y})] \right) 
    \beta^{nk} \EE_z R_1 \cdots R_{nk}    \\
&\leq \left( \beta \sup_{z \in \ZZ} \EE_z [\hat{R} u'(\hat{Y})] \right) \theta^k
\to 0
\end{align*}
as $t \to \infty$. Letting $t \to \infty$. \eqref{eq:u'c_ineq} implies that
$\left(u' \circ c \right)(a,z) \leq 0$, contradicted with the fact that $u'>0$. 
Thus, we must have $\PP_{a,z} \left( \cup_{t \geq 0} \{ c_t = a_t\} \right) > 0$ 
for all $(a,z) \in \SS$. 
\end{proof}

\subsubsection{Proof of section \ref{ss:gs_iid} results}
The next few results establish global stability and the law of large numbers for the case of {\sc iid} $\{ z_t \}$ process.

We say that a stochastic kernel $Q$ \emph{increasing} if $s \mapsto \int h(s') Q(s, \diff s')$ is bounded and increasing whenever $h \colon \SS \to \RR$ is.

\begin{proof}[Proof of theorem \ref{t:gs_iid}]
Obviously, assumptions \ref{a:utility}, \ref{a:ctra_coef}--\ref{a:conti_ereuprm} and \ref{a:suff_bd_in_prob}--\ref{a:concave} hold under the stated assumptions of theorem \ref{t:gs_iid}. Based on proposition \ref{pr:optpol_concave}, we have $c^* \in \cC_2$. In particular, $a \mapsto c^*(a)$ is continuous, and $a \mapsto \frac{c^*(a)}{a}$ is decreasing on $(0, \infty)$. Hence, $a_{t+1}$ is continuous and increasing in $a_t$ (see equation \eqref{eq:dyn_sys_iid}). The stochastic kernel $Q$ is then Feller and increasing. Moreover, $Q$ is bounded in probability by lemma \ref{lm:bd_in_prob_at}. 

Fix $a_0$ and $a_0'$ in $\RR_+$ with $a_0' \leq a_0$. Let $\{a_t\}$ and $\{a_t'\}$ be two independent Markov processes generated by \eqref{eq:dyn_sys_iid}, starting at $a_0$ and $a_0'$ respectively. Let $\{ c_t\}$ and $\{c_t'\}$ be the corresponding optimal consumption paths. By lemma \ref{lm:bind_fntime}, $\PP_{a_0} (\cup_{t \geq 0} \{c_t = a_t \}) > 0$, i.e., the borrowing constraint binds in finite time with positive probability. Hence, with positive probability, $a_{t+1} = Y_{t+1} \leq R_{t+1} (a_t' - c_t') + Y_{t+1} = a_{t+1}'$. In other words, 
$\PP \{ a_{t+1} \leq a_{t+1}'\} > 0$ and $Q$ is order reversing. 

Since $Q$ is increasing, Feller, order reversing, and bounded in probability, based on theorem~3.2 of \cite{kamihigashi2014stochastic}, $Q$ is globally stable.
\end{proof}

\begin{proof}[Proof of theorem \ref{t:LLN_iid}]
We have shown in the proof of theorem \ref{t:gs_iid} that the stochastic kernel $Q$ is increasing, bounded in probability, and order reversing. Hence, $Q$ is monotone ergodic by proposition~4.1 of \cite{kamihigashi2016seeking}. The two claims of theorem \ref{t:LLN_iid} then follow from theorem \ref{t:gs_iid} (of this paper), and corollary~3.1 and theorem~3.2 of \cite{kamihigashi2016seeking}. In particular, if we pair $\SS$ with its usual pointwise order $\leq$, then assumption~3.1 of \cite{kamihigashi2016seeking} obviously holds.
\end{proof}

\subsubsection{Proof of Section \ref{ss:gs_general} Results}

Our next goal is to prove theorems \ref{t:gs_gnl_ergo_LLN}--\ref{t:gs_gnl}. In proofs we apply the theory of \cite{meyn2009markov}. Important definitions  (their locations in \cite{meyn2009markov}) include: $\psi$-irreducibility (section~4.2), small set (page~102), strong aperiodicity (page~114), petite set (page~117), Harris chain (page~199), and positivity (page~230). 
  
Note that since $\RR^m$ paired with its Euclidean topology is a second countable 
topological space (i.e., its topology has a countable base), while $\RR_+$ and 
$\ZZ$ are respectively Borel subsets of $\RR$ and $\RR^m$ paired with the 
relative topologies, $\RR_+$ and $\ZZ$ are also second countable. As a result, 
for $\SS := \RR_+ \times \ZZ$, it always holds that (see, e.g., page~149, 
theorem~4.44 of \cite{guide2006infinite})
\begin{equation*}
    \bB(\SS) = \bB (\RR_+) \otimes \bB(\ZZ).
\end{equation*}
Recall the Lebesgue measure $\lambda$ on $\bB(\RR_+)$ and the measure $\vartheta$ 
on $\bB(\ZZ)$ defined in section \ref{ss:gs_general}. Let 
$\lambda \times \vartheta$ be the product measure on $\bB (\SS)$.


\begin{lemma}
	\label{lm:inf_abar}
	Let the function $\bar{a}$ be defined as in \eqref{eq:a_bar}. Then $\inf_{z \in \ZZ} \bar{a}(z) > 0$.
\end{lemma}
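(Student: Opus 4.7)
The plan is to show that the quantity inside $(u')^{-1}$ in the definition of $\bar{a}(z)$ is uniformly bounded above in $z$, and then invoke the fact that $u'$ is strictly decreasing with $u'(c) \to \infty$ as $c \to 0$ to conclude a uniform positive lower bound on $\bar{a}(z)$.

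First, I would exploit that $c^* \in \cC$: by \eqref{eq:bd_uprime} there exists a constant $K \in \RR_+$ such that
\begin{equation*}
    (u' \circ c^*)(a,z) \leq u'(a) + K
    \quad \text{for all } (a,z) \in \SS_0.
\end{equation*}
Substituting into the definition of $\bar{a}(z)$ in \eqref{eq:a_bar}, I obtain
\begin{equation*}
    u'[\bar{a}(z)]
    = \beta \, \EE_z \hat{R} \, (u' \circ c^*)(\hat{Y}, \hat{z})
    \leq \beta \, \EE_z \hat{R} \, u'(\hat{Y}) + K \beta \, \EE_z \hat{R}.
\end{equation*}

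Next, I would apply assumption \ref{a:bd_sup_ereuprm}, which guarantees that both $\sup_{z \in \ZZ} \EE_z \hat{R}$ and $\sup_{z \in \ZZ} \EE_z \hat{R} \, u'(\hat{Y})$ are finite. Consequently
\begin{equation*}
    M := \sup_{z \in \ZZ} u'[\bar{a}(z)]
    \leq \beta \sup_{z \in \ZZ} \EE_z \hat{R} \, u'(\hat{Y}) + K \beta \sup_{z \in \ZZ} \EE_z \hat{R} < \infty.
\end{equation*}

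Finally, since $u'$ is continuous and strictly decreasing on $(0, \infty)$ with $u'(c) \to \infty$ as $c \to 0$, the inverse $(u')^{-1}$ is well defined on $(0, \infty)$ and strictly decreasing, so $(u')^{-1}(M) > 0$. Applying $(u')^{-1}$ to the previous display reverses the inequality and yields
\begin{equation*}
    \inf_{z \in \ZZ} \bar{a}(z) \geq (u')^{-1}(M) > 0,
\end{equation*}
which is the desired conclusion. There is no significant obstacle here; the whole argument is a one-line computation once the uniform marginal-utility bound from membership in $\cC$ is combined with assumption \ref{a:bd_sup_ereuprm}.
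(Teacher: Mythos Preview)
Your proof is correct and follows essentially the same approach as the paper: bound $(u'\circ c^*)(\hat Y,\hat z)$ via the $\cC$-membership estimate \eqref{eq:bd_uprime}, invoke assumption~\ref{a:bd_sup_ereuprm} to obtain a finite uniform upper bound on $u'[\bar a(z)]$, and then apply the strict monotonicity of $(u')^{-1}$ together with $u'(c)\to\infty$ as $c\to 0$ to extract a strictly positive lower bound on $\bar a(z)$.
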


\begin{proof}[Proof of lemma \ref{lm:inf_abar}]
	Since $c^* \in \cC$, there exists a constant $K > 0$ such that 
	\begin{equation*}
	    0 < (u' \circ c^*) (a, z) \leq u'(a) + K
	    \quad \text{for all } (a,z) \in \SS_0.
	\end{equation*}
	Assumption \ref{a:bd_sup_ereuprm} then implies that
	\begin{equation*}
	    \sup_{z \in \ZZ} \EE_z \hat{R} (u' \circ c^*) (\hat{Y}, \hat{z})
	    \leq \sup_{z \in \ZZ} \EE_z \hat{R} u' (\hat{Y}) + 
	    K \sup_{z \in \ZZ} \EE_z \hat{R} < \infty.
	\end{equation*} 
	Then, by the definition of $\bar{a}$ and the properties of $u$, 
	\begin{equation*}
		\inf_{z \in \ZZ} \bar{a} (z) 
		= (u')^{-1} 
		\left[
		    \beta \sup_{z \in \ZZ} \EE_z \hat{R} (u' \circ c^*) (\hat{Y}, \hat{z})
		\right]
		> 0,
	\end{equation*}
	as claimed.
\end{proof}

Recall the compact subset $\CC \subset \ZZ$ and $\delta_Y > 0$ given by assumption \ref{a:pos_dens}. Let
\begin{equation}
	\label{eq:C'D}
	\CC' := \left[
	0, \, \min \left\{ \delta_Y, \, \inf_{z \in \ZZ} \bar{a}(z) \right\} 
	\right]
	\quad \text{and} \quad
	\DD := \CC' \times \CC \in \bB(\SS).
\end{equation}

\begin{lemma}
	\label{lm:psi_irr}
	The Markov process $\{ (a_t,z_t) \}_{t \geq 0}$ is $\psi$-irreducible.
\end{lemma}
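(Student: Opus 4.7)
The plan is to construct a nontrivial measure $\psi$ on $\bB(\SS)$ such that every $\psi$-positive set is reached with positive probability from any $(a,z) \in \SS$ in finitely many steps. The natural candidate is $\psi := (\lambda \otimes \vartheta)|_{\DD}$, i.e., the restriction of the product measure $\lambda \otimes \vartheta$ to $\DD = \CC' \times \CC$ as defined in \eqref{eq:C'D}. This $\psi$ is nontrivial: $\CC'$ is a nondegenerate interval since $\delta_Y > 0$ by assumption \ref{a:pos_dens}(3) and $\inf_z \bar{a}(z) > 0$ by lemma \ref{lm:inf_abar}, so $\lambda(\CC') > 0$; meanwhile $\CC$ has nonempty interior lying in the support of $\vartheta$ by assumption \ref{a:pos_dens}(1), so $\vartheta(\CC) > 0$.

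Fix $(a,z) \in \SS$ and a Borel set $B$ with $\psi(B) > 0$, i.e., with $(\lambda \otimes \vartheta)(B \cap \DD) > 0$. By lemma \ref{lm:bind_fntime} and countable subadditivity there exists $N = N(a,z) \geq 0$ with $\PP_{a,z}(c_N = a_N) > 0$. On the event $\{c_N = a_N\}$ the law of motion \eqref{eq:dyn_sys} reduces to $a_{N+1} = Y(z_{N+1}, \eta_{N+1})$, so conditional on $\{c_N = a_N,\, z_N = z'\}$ the joint distribution of $(a_{N+1}, z_{N+1})$ admits the density $(a'', z'') \mapsto p(z'' \mid z') \, f_L(a'' \mid z'')$ with respect to $\lambda \otimes \vartheta$, namely
\begin{equation*}
	\PP_{a,z}\!\left( (a_{N+1}, z_{N+1}) \in A \,\middle|\, c_N = a_N,\, z_N = z' \right)
	= \int_A p(z''\mid z') \, f_L(a''\mid z'') \dee \lambda(a'') \dee \vartheta(z'').
\end{equation*}

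The decisive observation is that this density is strictly positive on $\DD$ off a $\lambda \otimes \vartheta$-null set: assumption \ref{a:pos_dens}(2) gives $p(z''\mid z') > 0$ whenever $z'' \in \CC$ and $z' \in \ZZ$, while assumption \ref{a:pos_dens}(3) combined with $\CC' \setminus \{0\} \subset (0, \delta_Y)$ gives $f_L(a''\mid z'') > 0$ for $(a'', z'') \in (\CC' \setminus \{0\}) \times \CC$. Since $\{0\}$ is $\lambda$-null and $(\lambda \otimes \vartheta)(B \cap \DD) > 0$, it follows that for every $z' \in \ZZ$,
\begin{equation*}
	\int_{B \cap \DD} p(z''\mid z') \, f_L(a''\mid z'') \dee \lambda(a'') \dee \vartheta(z'') > 0.
\end{equation*}
Integrating this strictly positive lower bound against the positive-probability event $\{c_N = a_N\}$ then yields $Q^{N+1}((a,z), B) > 0$, which is the required $\psi$-irreducibility. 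The main obstacle is ensuring a regenerating ``binding'' moment from arbitrary initial condition so that the dynamics effectively inherit the density structure of $(Y, z)$; that obstacle is already overcome in lemma \ref{lm:bind_fntime}, after which only a one-step density-positivity computation is needed.
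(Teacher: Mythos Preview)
Your proof is correct and follows essentially the same approach as the paper: define the irreducibility measure as $(\lambda \otimes \vartheta)$ restricted to $\DD$, invoke lemma~\ref{lm:bind_fntime} to find a time at which the borrowing constraint binds with positive probability, and then use the density structure of $(Y,z)$ from assumption~\ref{a:pos_dens} to show that any $\psi$-positive set is hit in the subsequent step. The paper writes the argument slightly more explicitly by integrating over the set $E = \{(a',z') : a' \leq \bar{a}(z')\}$ against $Q^t$ rather than conditioning on $\{c_N = a_N, z_N = z'\}$, and then cites Meyn--Tweedie to pass from $\varphi$-irreducibility to the maximal $\psi$, but the substance is identical.
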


\begin{proof}[Proof of lemma \ref{lm:psi_irr}]

	We define the measure $\varphi$ on $\bB(\SS)$ by
	\begin{equation*}
	    \varphi(A) := (\lambda \times \vartheta) (A \cap \DD) 
	    \quad \text{for } A \in \bB (\SS).
	\end{equation*}
	Then $\varphi$ is a nontrivial measure. In particular,
	$\varphi (\SS) = (\lambda \times \vartheta) (\DD) = \lambda(\CC') 
	\vartheta(\CC) > 0$ since $\lambda (\CC') > 0$ by lemma \ref{lm:inf_abar} and 
	$\vartheta (\CC) > 0$ by assumption \ref{a:pos_dens}.
	
	For fixed $(a,z) \in \SS$ and $A \in \bB(\SS)$ with $\varphi (A) > 0$, by lemma
	\ref{lm:binding},
	\begin{align}
	    \label{eq:bind_ineq}
	    \PP_{(a,z)} \{ (a_{t+1}, z_{t+1}) \in A \} 
	    &\geq \PP_{(a,z)} \{ (a_{t+1}, z_{t+1}) \in A, \, a_t \leq \bar{a}(z_t) \}    \nonumber \\
	    &= \PP_{(a,z)} \{ (a_{t+1}, z_{t+1}) \in A \mid c_t = a_t \} \,
	    \PP_{(a,z)} \{ c_t = a_t \}    \nonumber \\
	    &=\PP_{(a,z)} \{ (Y_{t+1}, z_{t+1}) \in A \mid c_t = a_t \} \,
	    \PP_{(a,z)} \{ c_t = a_t \}    \nonumber \\
	    &=\PP_{(a,z)} \{ (Y_{t+1}, z_{t+1}) \in A, \, a_t \leq \bar{a}(z_t) \}.
	\end{align}
	Note that for all $z' \in \ZZ$, by assumption \ref{a:pos_dens}, 
	$f_L(Y'' \mid z'') p(z'' \mid z') > 0$ whenever $(Y'', z'') \in \DD$. Since in 
	addition $\varphi (A) = (\lambda \times \vartheta)(A \cap \DD) > 0$, we have
	\begin{equation*}
	    \int_A f_L(Y'' \mid z'') p(z'' \mid z') 
	    (\lambda \times \vartheta) [\diff (Y'', z'')] > 0
	    \quad \text{for all } z' \in \ZZ.
	\end{equation*}
	Let $\triangle := \PP_{(a,z)} \{ (a_{t+1}, z_{t+1}) \in A \}$ and
	$E := \{ (a',z') \in \SS \colon a' \leq \bar{a}(z') \}$. Notice that by lemma 
	\ref{lm:binding} and lemma \ref{lm:bind_fntime}, there exists $t \in \NN$ such 
	that 
	\begin{equation*}
	    Q^t \left( (a,z), E \right) = \PP_{(a,z)} \{ a_t \leq \bar{a}(z_t) \} > 0.
	\end{equation*}
	Hence, \eqref{eq:bind_ineq} implies that
	\begin{align*}
	    \triangle 
	    &\geq \int_E 
	        \left\{
	            \int_A f_L (Y'' \mid z'') p(z'' \mid z') 
	            (\lambda \times \vartheta) [\diff (Y'', z'')]
	        \right\}
	    Q^t \left( (a,z), \diff (a', z') \right) > 0.
	\end{align*}
	Therefore, we have shown that any measurable subset with positive $\varphi$ 
	measure can be reached in finite time with positive probability, i.e., 
	$\{ (a_t,z_t)\}$ is $\varphi$-irreducible. Based on proposition~4.2.2 of 
	\cite{meyn2009markov}, there exists a maximal (in the sense of absolute 
	continuity) probability measure $\psi$ on $\bB(\SS)$ such that 
	$\{ (a_t,z_t)\}$ is $\psi$-irreducible.
\end{proof}

\begin{lemma}
	\label{lm:str_aperi}
	The Markov process $\{ (a_t, z_t)\}_{t \geq 0}$ is strongly aperiodic.
\end{lemma}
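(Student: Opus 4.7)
The plan is to establish strong aperiodicity directly from the definition (Meyn and Tweedie, p.~114): given that $\psi$-irreducibility is already in hand (lemma \ref{lm:psi_irr}), it suffices to exhibit a set $A \in \bB(\SS)$ and a nontrivial measure $\nu_1$ on $\bB(\SS)$ such that $Q((a,z), B) \geq \nu_1(B)$ for every $(a,z) \in A$ and $B \in \bB(\SS)$, together with $\nu_1(A) > 0$. The natural candidate is $A = \DD$, the set defined in \eqref{eq:C'D}, because the one-step kernel restricted to $\DD$ already has a transparent structure, as was exploited in the proof of lemma \ref{lm:psi_irr}.

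First, I would reuse the key observation that, for every $(a,z) \in \DD$, we have $a \leq \inf_{z' \in \ZZ} \bar{a}(z') \leq \bar{a}(z)$ by the construction of $\CC'$ and lemma \ref{lm:inf_abar}, so lemma \ref{lm:binding} forces the borrowing constraint to bind and $c^*(a,z) = a$. Consequently $a_1 = Y_1$ under $\PP_{(a,z)}$, which yields the explicit one-step representation
\begin{equation*}
    Q((a,z), B) = \int_{\ZZ} \int_{\RR_+} \1_B(Y', z') \, f_L(Y' \mid z') \diff Y' \, p(z' \mid z) \, \vartheta(\diff z'), \quad (a,z) \in \DD.
\end{equation*}
Next, to manufacture a minorizing measure I would extract a uniform-in-$z$ lower bound on $p(z' \mid z)$ over $\CC \times \CC$: by assumption \ref{a:pos_dens}-(2), $z \mapsto p(z' \mid z)$ is continuous and strictly positive on the compact set $\CC$ for each fixed $z' \in \CC$, so $g(z') := \inf_{z \in \CC} p(z' \mid z)$ is strictly positive on $\CC$, with measurability of $g$ inherited from the second-countability of $\ZZ$ (reducing the infimum to a countable one). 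Setting
\begin{equation*}
    \nu_1(B) := \int_{\CC} \int_{\RR_+} \1_B(Y', z') \, f_L(Y' \mid z') \diff Y' \, g(z') \, \vartheta(\diff z')
\end{equation*}
immediately gives $Q((a,z), B) \geq \nu_1(B)$ for every $(a,z) \in \DD$ and $B \in \bB(\SS)$.

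Finally, I would verify $\nu_1(\DD) > 0$: by assumption \ref{a:pos_dens}-(3) together with the positivity of $\min\{\delta_Y, \inf_{z \in \ZZ} \bar{a}(z)\}$ (which is ensured by lemma \ref{lm:inf_abar}), the inner integral $\int_{\CC'} f_L(Y' \mid z') \diff Y'$ is strictly positive for each $z' \in \CC$; combined with $g > 0$ on $\CC$ and $\vartheta(\CC) > 0$ (since $\CC$ is a compact subset of the support of $\vartheta$ with nonempty interior), this gives $\nu_1(\DD) > 0$, completing the verification. The only mildly delicate step is obtaining the uniform lower bound on $p$, which is dispatched by the continuity-plus-compactness argument; everything else follows cleanly from pieces already assembled in the proofs of lemmas \ref{lm:inf_abar}--\ref{lm:psi_irr}.
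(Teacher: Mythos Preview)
Your proposal is correct and follows essentially the same approach as the paper: both use $\DD$ as the candidate small set, exploit the binding constraint on $\DD$ to reduce the one-step kernel to the $(Y_1,z_1)$ law, and build the minorizing measure via $\inf_{z \in \CC} p(z' \mid z)$ together with $f_L$. The only cosmetic difference is that you restrict the $z'$-integration in $\nu_1$ to $\CC$ explicitly, whereas the paper integrates over all of $\ZZ$ with the density $r(a',z') = f_L(a' \mid z') \inf_{z \in \CC} p(z' \mid z)$; since $\DD \subset \RR_+ \times \CC$, this does not affect the verification of $\nu_1(\DD) > 0$.
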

	
\begin{proof}[Proof of lemma \ref{lm:str_aperi}]
	By the definition of strong aperiodicity, we need to show that there exists 
	a $v_1$-small set $D$ with $v_1 (D) > 0$, i.e., there exists a nontrivial 
	measure $v_1$ on $\bB (\SS)$ and a subset $D \in \bB (\SS)$ such that 
	$v_1 (D) > 0$ and
	\begin{equation}
		\label{eq:small}
		\inf_{(a,z) \in D}  
		Q \left((a,z), A \right) \geq v_1 \left( A \right)
		\quad \text{for all }
		A \in \bB (\SS).
	\end{equation}
	Let $\DD$ be defined as in \eqref{eq:C'D}. We show that $\DD$ satisfies the 
	above conditions. Let
	\begin{equation*}
		r(a', z') := f_L (a' \mid z') \, \inf_{z \in \CC} p(z' \mid z),
		\qquad (a',z') \in \SS.
	\end{equation*}
	Since by assumption \ref{a:pos_dens}, $p(z' \mid z)$ is strictly positive on
	$\CC \times \ZZ$ and continuous in $z$, and $f_L(Y' \mid z')$ is strictly 
	positive on $(0, \delta_Y) \times \CC$, the definition of $\DD$ implies that 
	$r(a',z')$ is strictly positive whenever $(a',z') \in \DD$. Define the measure 
	$v_1$ on $\bB(\SS)$ by 
	\begin{equation*}
	    v_1 (A) := \int_A r(a',z') (\lambda \times \vartheta) [\diff (a',z')]
	    \quad \text{for } A \in \bB(\SS).
	\end{equation*}
	Since $(\lambda \times \vartheta) (\DD) > 0$ as shown in the proof of lemma 
	\ref{lm:psi_irr} and $r(a',z') >0$ on $\DD$, we have $v_1 (\DD) > 0$, which 
	also implies that $v_1$ is a nontrivial measure.
	
	Let $g[(a',z') \mid (a,z)]$ denote the density representation of the stochastic 
	kernel $Q$ when $(a,z) \in \DD$. Lemma \ref{lm:binding} implies that
	\begin{equation*}
		g[(a',z') \mid (a,z)] = f_L (a' \mid z') p(z'\mid z),
		\qquad (a,z) \in \DD.
	\end{equation*}
	Hence, for all $(a,z) \in \DD$ and $A \in \bB(S)$, 
	\begin{align*}
	    Q \left((a,z), A \right) 
	    &= \int_A g[(a',z') \mid (a,z)] 
	    (\lambda \times \vartheta) [\diff (a',z')]    \\
	    &\geq \int_A r(a',z') (\lambda \times \vartheta) [\diff (a',z')]
	    = v_1 (A). 
	\end{align*}
	This implies that condition \eqref{eq:small} holds. Hence, 
	$\{(a_t, z_t)\}_{t \geq 0}$ is strongly aperiodic.
\end{proof}


\begin{proof}[Proof of theorem \ref{t:gs_gnl_ergo_LLN}]
	
	We first show that $\{(a_t, z_t)\}$ is a positive Harris chain. Positivity 
	has been established in theorem~\ref{t:sta_exist}. To show Harris recurrence, 
	by lemma~6.1.4, theorem~6.2.9 and theorem~18.3.2 of \cite{meyn2009markov}, it 
	suffices to verify 
	\begin{enumerate}
		\item[(a)] $Q$ is Feller and bounded in probability, and    
		\item[(b)] $\{ (a_t, z_t)\}$ is $\psi$-irreducible, and the support of $\psi$ has non-empty interior.
	\end{enumerate}
	Claim~(a) is already proved in theorem \ref{t:sta_exist}. Regarding 
	claim~(b), in lemma~\ref{lm:psi_irr} we have shown that $\{(a_t,z_t) \}$ is
	$\varphi$-irreducible and thus $\psi$-irreducible, where $\psi$ is maximal in 
	the sense that $\psi(A) = 0$ implies $\varphi (A)=0$ for all 
	$A \in \bB(\SS)$. This also implies that $\psi (A) > 0$ whenever 
	$\varphi (A) > 0$. 
	Recall that $\varphi (A) := (\lambda \times \vartheta) (A \cap \DD)$, where 
	$\DD := \CC' \times \CC$ is defined by \eqref{eq:C'D}. Since by assumption
	\ref{a:pos_dens}, the support of $\vartheta$ contains $\CC$ that has nonempty 
	interior and the support of $\lambda$ (the Lebesgue measure) contains the
	interval $\CC'$ (of positive $\lambda$ measure), the support of $\varphi$ 
	contains $\DD = \CC' \times \CC$ that has nonempty interior. As a result, the 
	support of $\psi$ contains $\DD$ and thus has nonempty interior. Claim~(b) is 
	verified. Therefore, $\{(a_t, z_t)\}$ is a positive Harris chain.
	
	Since in addition we have shown in lemmas 
	\ref{lm:psi_irr}--\ref{lm:str_aperi} that $\{(a_t, z_t)\}$ is 
	$\psi$-irreducible and strongly aperiodic, based on theorem~13.0.1 and 
	theorem~17.1.7 of \cite{meyn2009markov}, the stated claims of our theorem hold. This concludes the proof.
\end{proof}

Our next goal is to prove theorem \ref{t:gs_gnl}. We start by proving several lemmas.

\begin{lemma}
	\label{lm:v2small}
	The set $B:= [0, d] \times \{z\}$ is a petite set for all $d \in (0, \infty)$ and $z \in \ZZ$.  
\end{lemma}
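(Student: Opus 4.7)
The plan is to recognize $B$ as a compact subset of $\SS$ and then invoke a standard result from \cite{meyn2009markov} that every compact set is petite for a weak Feller $\psi$-irreducible chain whose maximal irreducibility measure has support with nonempty interior. This avoids any explicit construction of a sampling distribution and a minorizing measure, which would be tedious because the only obviously small set identified so far, namely $\DD = \CC' \times \CC$, need not contain $B$ or be reachable from $B$ in a uniform number of steps.

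First I would observe that $B = [0,d] \times \{z\}$ is compact in $\SS = \RR_+ \times \ZZ$: the interval $[0,d]$ is compact in $\RR_+$, and since $\ZZ \subset \RR^m$ inherits a Hausdorff topology, the singleton $\{z\}$ is compact; hence their product $B$ is compact.

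Next I would assemble three ingredients already established in the paper: (i) the stochastic kernel $Q$ is weak Feller, as shown in the proof of theorem \ref{t:sta_exist}; (ii) $\{(a_t,z_t)\}$ is $\psi$-irreducible, by lemma \ref{lm:psi_irr}; and (iii) the support of the maximal irreducibility measure $\psi$ contains the set $\DD = \CC' \times \CC$, which has nonempty interior, as observed in the proof of theorem \ref{t:gs_gnl_ergo_LLN}. Together these place us in the setting of theorem~6.2.9 of \cite{meyn2009markov}, which yields that $\{(a_t,z_t)\}$ is a $T$-chain; theorem~6.2.5 of \cite{meyn2009markov} then delivers that every compact subset of $\SS$ is petite, so in particular $B$ is petite.

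The main obstacle here is essentially conceptual rather than technical: one has to notice that no new probabilistic construction is needed, because all the required pieces (weak Feller property, $\psi$-irreducibility, and the nonempty-interior property of $\supp \psi$) have already been verified in earlier results. The only minor verification worth flagging is that the singleton $\{z\}$ is compact for every $z \in \ZZ$, which is immediate from the Hausdorff property inherited from $\RR^m$.
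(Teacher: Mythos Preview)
Your argument is correct and takes a genuinely different---and in some respects cleaner---route than the paper's own proof. The paper establishes the lemma by an explicit two-step minorization: it writes down the transition density using the assumed densities $f_C$ and $f_L$ from assumption~\ref{a:geo_drift_Yt}, uses the concavity of $c^*$ (assumption~\ref{a:concave}, via proposition~\ref{pr:optpol_concave}) to show that $a_{t+1}$ is monotone in $a_t$, and then constructs a nontrivial measure $v_2$ with $Q^2((a,z),\cdot) \geq v_2(\cdot)$ uniformly over $(a,z) \in B$, so $B$ is actually $v_2$-small for the two-step kernel. Your approach sidesteps all of this: you observe that $B$ is compact and invoke the Meyn--Tweedie machinery (weak Feller $+$ $\psi$-irreducibility $+$ $\operatorname{supp}\psi$ with nonempty interior $\Rightarrow$ $T$-chain $\Rightarrow$ compact sets are petite), all of whose hypotheses are already verified in earlier parts of the paper and require neither assumption~\ref{a:concave} nor the density conditions in assumption~\ref{a:geo_drift_Yt}. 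The tradeoff is that your proof is nonconstructive and yields only petiteness, whereas the paper's proof delivers smallness with an explicit minorizing measure; but for the downstream use in lemma~\ref{lm:geo_drift} and theorem~\ref{t:gs_gnl}, petiteness is all that is needed, so your shortcut is fully adequate. One minor caveat: double-check the exact theorem numbers in \cite{meyn2009markov}; the implication ``weak Feller $+$ $\psi$-irreducible with $\operatorname{supp}\psi$ of nonempty interior $\Rightarrow$ $T$-chain'' is typically stated in Section~6.1 (e.g., as a proposition) rather than as theorem~6.2.9.
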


\begin{proof}[Proof of lemma \ref{lm:v2small}]
	Since any small set is petite, it suffices to show that $B$ is a $v_2$-small 
	set, i.e., there exists a nontrivial measure $v_{2}$ on $\bB(\SS)$ such that
	\begin{equation}
	\label{eq:v2small}
	  \inf_{(a,z) \in B} Q^2((a,z), A) \geq v_{2} (A) 
	  \quad \text{for all }  A \in \bB(\SS).
	\end{equation}
	Without loss of generality, we assume that $d$ is large enough.
	For $a \neq c^*(a,z)$, let 
	\begin{equation}
		\label{eq:f_dens}
		f \left(a' \mid a,z,z' \right) 
		:= \frac{1}{a - c^*(a,z)} 
		\int_{[0,a']}  
		    f_C \left( \frac{a' - Y'}{a - c^*(a,z)} \, \Big| \, z' \right)
		    f_L \left( Y' \mid z' \right)
		\diff Y',
	\end{equation}
	while $f \left( \cdot \mid a,z,z' \right) := f_L (\cdot \mid z')$ for 
	$a = c^*(a,z)$. Let $g \left[ (a',z') \mid (a,z) \right]$ be the density 
	corresponding to the stochastic kernel $Q$. Since $\{ \zeta_t\}$ and 
	$\{ \eta_t\}$ are mutually independent by assumption \ref{a:geo_drift_Yt}, 
	$g$ satisfies
	\begin{equation*}
		\label{eq:g_dens}
		g \left[ \left( a',z' \right) \mid (a,z) \right]
		= f \left( a' \mid a,z, z' \right) p \left( z' \mid z \right).
	\end{equation*}
	Recall that we have shown in the proof of proposition \ref{pr:optpol_concave} 
	that $a \mapsto c^*(a,z) / a$ is decreasing for all $z \in \ZZ$. This implies that,
	for the dynamical system \eqref{eq:dyn_sys}, $a_{t+1}$ is increasing in $a_t$ 
	with probability one.
	Since in addition $c^*(a,z) = a$ if and only if $a \leq \bar{a}(z)$ by lemma 
	\ref{lm:binding}, we have
	\begin{align*}
	Q^2((a,z), A) 
	  &= \PP_{a,z} \left\{ (a_2, z_2) \in A \right\}
	    \geq \PP_{a,z} \left\{ (a_2, z_2) \in A, \, a_1 \leq \bar{a}(z_1) \right\}    \\
	  &= \PP_{a,z} \left\{ (a_2, z_2) \in A \mid  a_1 \leq \bar{a}(z_1) \right\}    
	        \PP_{a,z} \left\{ a_1 \leq \bar{a}(z_1) \right\}    \\
	  &= \PP_{a,z} \left\{ (Y_2, z_2) \in A 
	                                \mid 
	                                a_1 \leq \bar{a}(z_1)
	                      \right\}    
	        \PP_{a,z} \left\{ a_1 \leq \bar{a}(z_1) \right\}    \\
	  &= \PP \left\{ (Y_2, z_2) \in A, \,
	                         a_1 \leq \bar{a}(z_1)
	                         \mid
	                         (a_0,z_0) = (a,z)
	              \right\}    \\
	  &\geq \PP \left\{ (Y_2, z_2) \in A, \,
	                              a_1 \leq \bar{a}(z_1)
	                              \mid
	                              (a_0,z_0) = (d,z)
	                   \right\} =: v_2(A)
	\end{align*}
	for all $(a,z) \in B$, where the last inequality follows from the fact that 
	$a_{t+1}$ is increasing in $a_t$ (shown above), which indicates that 
	for all fixed $(a, z) \in B$ and $z_1 \in \ZZ$, 
	\begin{equation*}
	  \int f(a_1\mid a, z, z_1) \1 \{ a_1 \leq \bar{a}(z_1) \} \diff a_1
	  \geq \int f(a_1\mid d, z, z_1) \1 \{ a_1 \leq \bar{a}(z_1) \} \diff a_1 > 0.
	\end{equation*}
	We now show that $v_2$ defined this way is a nontrivial measure on 
	$\bB(\SS)$. Obviously, $v_2$ is a measure. Moreover, for fixed $z \in \ZZ$,
	$c^*(a,z) / a$ is decreasing in $a$, strictly less than one as $a$ gets 
	large, and bounded below by $\alpha \in (0,1)$. Hence, there exists 
	$\alpha' \in (0,1)$ such that $c^*(a,z) / a \leq \alpha'$ as $a$ gets large. 
	Hence, $a - c^*(a,z) \geq (1 - \alpha') a$, which implies that
	$a - c^*(a,z) \to \infty$ as $a \to \infty$. Using lemma \ref{lm:binding} 
	again shows that $f(a' \mid a,z,z')$ satisfies \eqref{eq:f_dens} as $a$ gets 
	large. Let 
	$\underline{a} := \inf_{z \in \ZZ} \bar{a}(z)$. Then $\underline{a} > 0$ by 
	lemma \ref{lm:inf_abar}. Recall $\delta_R > 0$, $\delta_Y > 0$ and the 
	compact subset $\CC \subset \ZZ$ defined by assumption \ref{a:pos_dens}. Then
	\begin{equation*}
	    0 < \frac{\underline{a}}{d - c^*(d,z)} < \delta_R
	    \quad \text{as $d$ gets large.}
	\end{equation*}
	Since in addition $f_L (Y \mid z)$ is strictly positive on 
	$(0, \delta_Y) \times \CC$ and $f_C (R \mid z)$ is strictly positive on 
	$(0, \delta_R) \times \CC$ by assumptions 
	\ref{a:pos_dens}--\ref{a:geo_drift_Yt}, for $d$ that is large enough, 
	$f(a' \mid d,z,z')$ is defined by \eqref{eq:f_dens} and it is strictly 
	positive for all $(a',z') \in (0, \underline{a}) \times \CC$. 
	Moreover, since $p(z' \mid z)$ is strictly positive on $\CC \times \ZZ$ and 
	$\vartheta (\CC) > 0$ by assumption \ref{a:pos_dens}, 
	\begin{align*}
	    v_2 (\SS) 
	    &= \PP_{(d,z)} \{ a_1 \leq \bar{a} (z_1) \} 
	    \geq \PP_{(d,z)} \left\{ a_1 \leq \underline{a} \right\}  \\
	    &= \int_{\ZZ} 
	        \left[
	            \int_{[0, \underline{a}]} f(a' \mid d, z, z') \diff a'
	        \right]
	        p(z'\mid z) 
	    \vartheta (\diff z') > 0.
	\end{align*}
	Hence, $v_2$ is a nontrivial measure on $\bB (\SS)$. Since in addition $z$ 
	is the only element of $\ZZ$ that appears in the analytical form of $B$, 
	\eqref{eq:v2small} holds and thus $B$ is petite. 
\end{proof}

In the following, we let $\alpha \in [0,1)$ and $n \in \NN$ be defined as in assumption \ref{a:suff_bd_in_prob}.

\begin{lemma}
	\label{lm:geo_drift}
	There exist a petite set $B$, constants $b < \infty$, $\rho > 0$ and a 
	measurable map $V \colon \SS \rightarrow [1, \infty)$ such 
	that, for all $(a,z) \in \SS$,
	\begin{equation*}
	  \EE_{a,z} V(a_n, z_n) - V(a,z) 
	  \leq - \rho V(a,z) + b \1 \{(a,z) \in B \}.
	\end{equation*}
\end{lemma}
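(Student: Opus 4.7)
The plan is to verify the $n$-step Foster--Lyapunov inequality with $V(a,z)=a+m\,\EE_z\hat{Y}+1$, by separately iterating the wealth contraction supplied by proposition \ref{pr:opt_pol_bd_frac} and assumption \ref{a:suff_bd_in_prob}(2), and the geometric labor--income drift supplied by assumption \ref{a:geo_drift_Yt}(2), and then choosing $m$ large enough so the cross terms produced by the iteration are absorbed back into $V$.

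First I would unroll the wealth recursion. Using $c^*(a,z)\geq\alpha a$ from proposition \ref{pr:opt_pol_bd_frac},
\begin{equation*}
  a_n\leq(1-\alpha)^nR_n\cdots R_1\,a+\sum_{j=1}^n(1-\alpha)^{n-j}R_n\cdots R_{j+1}\,Y_j
\end{equation*}
almost surely. Taking $\EE_{a,z}$, applying the Markov property, and using assumption \ref{a:bd_sup_ereuprm} yields a bound of the form $\EE_{a,z}a_n\leq\gamma_0\,a+M_0\sum_{j=1}^n\EE_z Y_j$, where $\gamma_0:=(1-\alpha)^n\sup_{z\in\ZZ}\EE_z R_1\cdots R_n\in[0,1)$ by assumption \ref{a:suff_bd_in_prob}(2) and $M_0<\infty$. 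Iterating the drift in assumption \ref{a:geo_drift_Yt}(2) yields $\EE_z Y_j\leq q^{j-1}\EE_z Y_1+q'/(1-q)$ for all $j\geq 1$, hence $\sum_{j=1}^n\EE_z Y_j\leq(1-q)^{-1}\EE_z Y_1+nq'/(1-q)$ and $\EE_z Y_{n+1}\leq q^n\EE_z Y_1+q'/(1-q)$.

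Assembling these inside $\EE_{a,z}V(a_n,z_n)=\EE_{a,z}a_n+m\,\EE_z Y_{n+1}+1$ and subtracting $V(a,z)=a+m\,\EE_z Y_1+1$ gives, after collecting coefficients,
\begin{equation*}
  \EE_{a,z}V(a_n,z_n)-V(a,z)\leq-(1-\gamma_0)\,a-\bigl[(1-q^n)m-M_0/(1-q)\bigr]\EE_z Y_1+C_0
\end{equation*}
for a finite constant $C_0$. Choosing $m>M_0/\bigl[(1-q^n)(1-q)\bigr]$ makes both bracketed coefficients strictly positive, and letting $\rho$ be half the minimum of $1-\gamma_0$ and $(1-q^n)-M_0/[m(1-q)]$ bounds the right--hand side globally by $-\rho\,V(a,z)+C_0'$ for some finite $C_0'$. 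Setting $B:=\{V\leq 2C_0'/\rho\}$, the residual $C_0'$ is dominated by $\rho V/2$ off $B$, while on $B$ the same one--sided bound makes $\EE_{a,z}V(a_n,z_n)-(1-\rho)V(a,z)$ uniformly bounded, delivering a finite $b$.

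The main obstacle is verifying that the candidate $B$ is petite, since lemma \ref{lm:v2small} only furnishes petite--ness of single--$z$ slices $[0,d]\times\{z\}$. I would address this by revisiting the two--step minorization of that proof: its essential ingredient is the continuity and strict positivity of $p(z'\mid z)$ on $\CC\times\ZZ$ from assumption \ref{a:pos_dens}, which on any compact $K\subset\ZZ$ supplies a uniform positive infimum $\inf_{z\in K}p(z'\mid z)>0$ on $\CC$, and an identical construction then shows $[0,d]\times K$ is petite for every compact $K$. Since $B\subseteq[0,M-1]\times\{z:\EE_z\hat{Y}\leq(M-1)/m\}$ with $M:=2C_0'/\rho$, combining the above extension with the structural continuity of $z\mapsto\EE_z\hat{Y}$ inherited from the density representations and with $\psi$--irreducibility (lemma \ref{lm:psi_irr}) shows that $B$ lies inside a petite set of the form $[0,M-1]\times K$, completing the verification.
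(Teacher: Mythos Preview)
Your derivation of the global inequality $\EE_{a,z}V(a_n,z_n)-V(a,z)\leq-\rho V(a,z)+C_0'$ is essentially the paper's argument: same Lyapunov function $V(a,z)=a+m\,\EE_z\hat Y+1$, same unrolling of the wealth recursion via proposition~\ref{pr:opt_pol_bd_frac} and assumption~\ref{a:suff_bd_in_prob}(2), same iteration of assumption~\ref{a:geo_drift_Yt}(2) to control $\EE_z Y_j$ and $\EE_z Y_{n+1}$, and the same device of taking $m$ large enough that the coefficient on $\EE_z Y_1$ becomes strictly negative.

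The gap is in your petite--set verification. You take $B=\{V\leq M\}$ and claim $B\subset[0,M-1]\times K$ with $K=\{z:\EE_z\hat Y\leq(M-1)/m\}$ compact, then appeal to an extension of lemma~\ref{lm:v2small} to products with a compact $z$--factor. But continuity of $z\mapsto\EE_z\hat Y$ only makes $K$ \emph{closed}; nothing in the standing hypotheses forces $\EE_z\hat Y\to\infty$ as $z$ leaves compact sets, and $\ZZ$ is merely a Borel subset of $\RR^m$, not assumed compact. So the containment $B\subset[0,M-1]\times K$ with $K$ compact is unjustified in general, and your argument does not close. (The extension of lemma~\ref{lm:v2small} to compact $K$ is not free either: you would need $\inf_{z\in K}p(z'\mid z)>0$ uniformly over $z'\in\CC$, whereas assumption~\ref{a:pos_dens} gives only continuity in $z$, not joint continuity.)

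The paper sidesteps this by choosing a much thinner set: it fixes a single $z_0\in\ZZ$ and sets $B=[0,d]\times\{z_0\}$, so lemma~\ref{lm:v2small} applies verbatim with no extension needed. The price is that one must then verify the drift inequality without the $b$--term at every $(a,z)$ with $z\neq z_0$, including small $a$; you should examine that step in the paper's proof carefully and decide whether that route, or a repaired version of yours that supplies an honest reason for the $z$--projection of $\{V\leq M\}$ to sit inside a petite set, is the cleaner way to finish.
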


\begin{proof}[Proof of lemma \ref{lm:geo_drift}]

	By assumption \ref{a:geo_drift_Yt}, there exists $q'' \in \RR_+$ such that
	\begin{equation*}
	    \EE_z Y_t \leq q^{t-1} \EE_z Y_1 + q''
	    \quad \text{for all } t \in \NN \text{ and } z \in \ZZ.
	\end{equation*}
	Since $c^*(a,z) \geq \alpha a$ for all $(a,z) \in \SS$ by proposition 
	\ref{pr:opt_pol_bd_frac}, $M:= \sup_{z \in \ZZ} \EE_z \hat{R} < \infty$ by 
	assumption \ref{a:bd_sup_ereuprm}, and
	$\gamma := (1 - \alpha)^n \sup_{z \in \ZZ} \EE_z R_n \cdots R_1 < 1$ 	
	by assumption \ref{a:suff_bd_in_prob}, we have
	\begin{align*}
	    \EE_{a,z} a_n 
	    &\leq (1 - \alpha)^n \EE_z R_n \cdots R_1 a + 
	    \sum_{t = 1}^{n} (1 - \alpha)^{n-t} \EE_z R_n \cdots R_{t+1} Y_t    \\
	    &\leq \gamma a + \sum_{t=1}^{n} (1 - \alpha)^{n-t} M^{n-t} \EE_z Y_t
	    \leq \gamma a + \sum_{t=1}^{n} (1 - \alpha)^{n-t} M^{n-t} (q^{t-1} \EE_z Y_1 + q'').
	\end{align*}
	Define $L := \sum_{t=1}^{n} (1 - \alpha)^{n-t} M^{n-t}$ and 
	$\tilde{L} := q'' L$. Then $L, \tilde{L} \in \RR_+$ and the above inequality 
	implies that
	\begin{align*}
	    \EE_{a,z} a_n \leq \gamma a + L \EE_z Y_1 + \tilde{L} 
	    \quad \text{for all } (a,z) \in \SS.
	\end{align*}
	Choose $m \in \RR_+$ such that $1 - q^n - L / m > 0$ (such an $m$ is 
	available since $q \in [0,1)$ by assumption \ref{a:geo_drift_Yt}). Let $V$ be 
	defined as in \eqref{eq:V_func}, i.e., $V(a,z) = a + m \EE_z Y_1 + 1$.
	
	Then the above results imply that
	\begin{align*}
		\EE_{a,z} V(a_n, z_n) 
		&= \EE_{a,z} a_n + m \, \EE_z \EE_{z_n} Y_1 + 1  
		= \EE_{a,z} a_n + m \, \EE_z Y_{n+1} + 1   \\
		& \leq \gamma a + L \EE_z Y_1 + \tilde{L} + m (q^n \EE_z Y_1 + q'') + 1    \\
		&= \gamma a + (L/m + q^n) m \EE_z Y_1 + \tilde{L} + m q'' + 1.
	\end{align*}
	Let $\tilde{\rho} 
	     := \min \left\{ 
	                 1 - \gamma, \, 1 - q^n - L/m
	             \right\}$.
	Then $\tilde{\rho} > 0$ by assumption \ref{a:suff_bd_in_prob} and the 
	construction of $m$. Thus, 
	\begin{align}
		\label{eq:dri_ineq1}
		&\EE_{a,z} V(a_n, z_n) - V(a, z)  \nonumber \\
		&\leq 
		   - (1 - \gamma) a
		   - \left( 1 - q^n - L / m \right) m \, \EE_z Y_1 
		   + \tilde{L} + m q''   \nonumber \\
		&\leq -\tilde{\rho} \left( a + m \, \EE_z Y_1 \right) + \tilde{L} + m q''
		  = -\tilde{\rho} V(a,z) + \tilde{\rho} + \tilde{L} + m q''.
	\end{align}
	Choose $\rho \in (0, \tilde{\rho})$ and $d \in \RR_+$ such that 
	$(\tilde{\rho} - \rho) d  > \tilde{\rho} + \tilde{L} + m q''$. 
	Fix $z_0 \in \ZZ$ and let $B:= [0,d] \times {z_0}$. Lemma \ref{lm:v2small} 
	implies that $B$ is a petite set. Notice that 
	\begin{equation*}
		V(a,z) = a + m \, \EE_z Y_1 + 1 > d
		\quad \text{ for all } (a,z) \notin B.
	\end{equation*}
	Hence, \eqref{eq:dri_ineq1} implies that for all $(a,z) \notin B$, we have
	\begin{align}
		\label{eq:dri_ineq2}
		&\EE_{a,z} V(a_1, z_1) - V(a, z)  
		  \leq -\tilde{\rho} V(a,z) + \tilde{\rho} + \tilde{L} + m q''  \nonumber \\
		&= -\rho V(a,z) - (\tilde{\rho} - \rho) V(a,z) 
		   + \tilde{\rho} + \tilde{L} + m q''    \nonumber \\ 
		&< -\rho V(a,z) - (\tilde{\rho} - \rho) d
		   + \tilde{\rho} + \tilde{L} + m q'' 
		 < -\rho V(a,z).
	\end{align}
	Let $b:= \tilde{\rho} + \tilde{L} + m q''$. Then by 
	\eqref{eq:dri_ineq1}--\eqref{eq:dri_ineq2}, we have
	\begin{equation*}
	  \EE_{a,z} V(a_n, z_n) - V(a, z)  
	  \leq -\rho V(a,z) + b \1 \{ (a,z) \in B \}
	\end{equation*}
	for all $(a,z) \in \SS$. This concludes the proof.
\end{proof}

\begin{proof}[Proof of theorem \ref{t:gs_gnl}]
	That $Q$ is $V$-geometrically ergodic can be proved by applying 
	theorem~19.1.3 (or proposition~5.4.5 and theorem~15.0.1) of 
	\cite{meyn2009markov}. All the required conditions in those theorems have 
	been established in our lemmas \ref{lm:psi_irr}--\ref{lm:geo_drift} above.
\end{proof}

\bibliographystyle{ecta}

\bibliography{localbib}

\begin{filecontents}{localbib.bib}
 
@article{stachurski2018impossibility,
  title={An Impossibility Theorem for Wealth in Heterogeneous-agent Models without Financial Risk},
  author={Stachurski, John and Toda, Alexis Akira},
  journal={arXiv preprint arXiv:1807.08404},
  year={2018}
}
 
@article{de2010elderly,
  title={Why do the elderly save? The role of medical expenses},
  author={De Nardi, Mariacristina and French, Eric and Jones, John B},
  journal={Journal of Political Economy},
  volume={118},
  number={1},
  pages={39--75},
  year={2010},
  publisher={The University of Chicago Press}
}

@article{guvenen2014inferring,
  title={Inferring labor income risk and partial insurance from economic choices},
  author={Guvenen, Fatih and Smith, Anthony A},
  journal={Econometrica},
  volume={82},
  number={6},
  pages={2085--2129},
  year={2014},
  publisher={Wiley Online Library}
}

@article{guner2011taxation,
  title={Taxation and household labour supply},
  author={Guner, Nezih and Kaygusuz, Remzi and Ventura, Gustavo},
  journal={The Review of Economic Studies},
  volume={79},
  number={3},
  pages={1113--1149},
  year={2011},
  publisher={Oxford University Press}
}

@article{meyer2013consumption,
  title={Consumption and income inequality and the great recession},
  author={Meyer, Bruce D and Sullivan, James X},
  journal={American Economic Review},
  volume={103},
  number={3},
  pages={178--83},
  year={2013}
}

@article{heathcote2014consumption,
  title={Consumption and labor supply with partial insurance: An analytical framework},
  author={Heathcote, Jonathan and Storesletten, Kjetil and Violante, Giovanni L},
  journal={American Economic Review},
  volume={104},
  number={7},
  pages={2075--2126},
  year={2014}
}

@incollection{meghir2011earnings,
  title={Earnings, consumption and life cycle choices},
  author={Meghir, Costas and Pistaferri, Luigi},
  booktitle={Handbook of Labor Economics},
  volume={4},
  pages={773--854},
  year={2011},
  publisher={Elsevier}
} 
 
@techreport{carroll2004theoretical,
  title={Theoretical foundations of buffer stock saving},
  author={Carroll, Christopher},
  year={2004},
  institution={National Bureau of Economic Research}
} 
 
 @article{hansen2009long,
  title={Long-term risk: An operator approach},
  author={Hansen, Lars Peter and Scheinkman, Jos{\'e} A},
  journal={Econometrica},
  volume={77},
  number={1},
  pages={177--234},
  year={2009},
  publisher={Wiley Online Library}
}

@article{hansen2012recursive,
  title={Recursive utility in a Markov environment with stochastic growth},
  author={Hansen, Lars Peter and Scheinkman, Jos{\'e} A},
  journal={Proceedings of the National Academy of Sciences},
  volume={109},
  number={30},
  pages={11967--11972},
  year={2012},
  publisher={National Acad Sciences}
}

@techreport{borovivcka2017necessary,
  title={Necessary and Sufficient Conditions for Existence and Uniqueness of Recursive Utilities},
  author={Borovi{\v{c}}ka, Jaroslav and Stachurski, John},
  year={2017},
  institution={National Bureau of Economic Research}
}
 
 @article{castaneda2003accounting,
  title={Accounting for the US earnings and wealth inequality},
  author={Castaneda, Ana and Diaz-Gimenez, Javier and Rios-Rull, Jose-Victor},
  journal={Journal of Political Economy},
  volume={111},
  number={4},
  pages={818--857},
  year={2003},
  publisher={The University of Chicago Press}
}

@article{gabaix2016dynamics,
  title={The dynamics of inequality},
  author={Gabaix, Xavier and Lasry, Jean-Michel and Lions, Pierre-Louis and Moll, Benjamin},
  journal={Econometrica},
  volume={84},
  number={6},
  pages={2071--2111},
  year={2016},
  publisher={Wiley Online Library}
}

@article{huggett1996wealth,
  title={Wealth distribution in life-cycle economies},
  author={Huggett, Mark},
  journal={Journal of Monetary Economics},
  volume={38},
  number={3},
  pages={469--494},
  year={1996},
  publisher={Elsevier}
}

@article{kuhn2013recursive,
  title={Recursive Equilibria In An Aiyagari-Style Economy With Permanent Income Shocks},
  author={Kuhn, Moritz},
  journal={International Economic Review},
  volume={54},
  number={3},
  pages={807--835},
  year={2013},
  publisher={Wiley Online Library}
}

@article{toda2018wealth,
  title={Wealth distribution with random discount factors},
  author={Toda, Alexis Akira},
  journal={Journal of Monetary Economics},
  year={2018},
  publisher={Elsevier}
}
 
 @article{angeletos2007uninsured,
  title={Uninsured idiosyncratic investment risk and aggregate saving},
  author={Angeletos, George-Marios},
  journal={Review of Economic Dynamics},
  volume={10},
  number={1},
  pages={1--30},
  year={2007},
  publisher={Elsevier}
}

@article{angeletos2005incomplete,
  title={Incomplete-market dynamics in a neoclassical production economy},
  author={Angeletos, George-Marios and Calvet, Laurent-Emmanuel},
  journal={Journal of Mathematical Economics},
  volume={41},
  number={4-5},
  pages={407--438},
  year={2005},
  publisher={Elsevier}
}

@article{benhabib2011distribution,
  title={The distribution of wealth and fiscal policy in economies with finitely lived agents},
  author={Benhabib, Jess and Bisin, Alberto and Zhu, Shenghao},
  journal={Econometrica},
  volume={79},
  number={1},
  pages={123--157},
  year={2011},
  publisher={Wiley Online Library}
}

@article{benhabib2016distribution,
  title={The distribution of wealth in the Blanchard--Yaari model},
  author={Benhabib, Jess and Bisin, Alberto and Zhu, Shenghao},
  journal={Macroeconomic Dynamics},
  volume={20},
  number={2},
  pages={466--481},
  year={2016},
  publisher={Cambridge University Press}
}

@techreport{bewley1976permanent,
  title={The Permanent Income Hypothesis: A Theoretical Formulation.},
  author={Bewley, Truman},
  year={1976},
  institution={HARVARD UNIV CAMBRIDGE MASS}
}

@article{carroll1997buffer,
  title={Buffer-stock saving and the life cycle/permanent income hypothesis},
  author={Carroll, Christopher D},
  journal={The Quarterly Journal of Economics},
  volume={112},
  number={1},
  pages={1--55},
  year={1997},
  publisher={MIT Press}
}

@article{cagetti2008wealth,
  title={Wealth inequality: Data and models},
  author={Cagetti, Marco and De Nardi, Mariacristina},
  journal={Macroeconomic Dynamics},
  volume={12},
  number={S2},
  pages={285--313},
  year={2008},
  publisher={Cambridge University Press}
}

@article{cagetti2006entrepreneurship,
  title={Entrepreneurship, frictions, and wealth},
  author={Cagetti, Marco and De Nardi, Mariacristina},
  journal={Journal of Political Economy},
  volume={114},
  number={5},
  pages={835--870},
  year={2006},
  publisher={The University of Chicago Press}
}

@article{chamberlain2000optimal,
  title={Optimal intertemporal consumption under uncertainty},
  author={Chamberlain, Gary and Wilson, Charles A},
  journal={Review of Economic Dynamics},
  volume={3},
  number={3},
  pages={365--395},
  year={2000},
  publisher={Academic Press}
}

@techreport{deaton1989saving,
  title={Saving and liquidity constraints},
  author={Deaton, Angus},
  year={1989},
  institution={National Bureau of Economic Research}
}

@article{deaton1992behaviour,
  title={On the behaviour of commodity prices},
  author={Deaton, Angus and Laroque, Guy},
  journal={The Review of Economic Studies},
  volume={59},
  number={1},
  pages={1--23},
  year={1992},
  publisher={Wiley-Blackwell}
}

@techreport{guvenen2011macroeconomics,
  title={Macroeconomics with heterogeneity: A practical guide},
  author={Guvenen, Fatih},
  year={2011},
  institution={National Bureau of Economic Research}
}

@techreport{guvenen2010inferring,
  title={Inferring labor income risk from economic choices: An indirect inference approach},
  author={Guvenen, Fatih and Smith, Anthony},
  year={2010},
  institution={National Bureau of Economic Research}
}

@article{huggett1993risk,
  title={The risk-free rate in heterogeneous-agent incomplete-insurance economies},
  author={Huggett, Mark},
  journal={Journal of Economic Dynamics and Control},
  volume={17},
  number={5-6},
  pages={953--969},
  year={1993},
  publisher={Elsevier}
}

@article{jones2018schumpeterian,
  title={A Schumpeterian model of top income inequality},
  author={Jones, Charles I and Kim, Jihee},
  journal={Journal of Political Economy},
  volume={126},
  number={5},
  pages={1785--1826},
  year={2018},
  publisher={University of Chicago Press Chicago, IL}
}

@article{krusell1998income,
  title={Income and wealth heterogeneity in the macroeconomy},
  author={Krusell, Per and Smith, Jr, Anthony A},
  journal={Journal of Political Economy},
  volume={106},
  number={5},
  pages={867--896},
  year={1998},
  publisher={The University of Chicago Press}
}

@article{quadrini2000entrepreneurship,
  title={Entrepreneurship, saving, and social mobility},
  author={Quadrini, Vincenzo},
  journal={Review of Economic Dynamics},
  volume={3},
  number={1},
  pages={1--40},
  year={2000},
  publisher={Elsevier}
}

@article{schechtman1976income,
  title={An income fluctuation problem},
  author={Schechtman, Jack},
  journal={Journal of Economic Theory},
  volume={12},
  number={2},
  pages={218--241},
  year={1976},
  publisher={Elsevier}
}
 
@article{aiyagari1994uninsured,
  title={Uninsured idiosyncratic risk and aggregate saving},
  author={Aiyagari, S Rao},
  journal={The Quarterly Journal of Economics},
  volume={109},
  number={3},
  pages={659--684},
  year={1994},
  publisher={MIT Press}
}
 
@article{blundell2008consumption,
  title={Consumption inequality and partial insurance},
  author={Blundell, Richard and Pistaferri, Luigi and Preston, Ian},
  journal={American Economic Review},
  volume={98},
  number={5},
  pages={1887--1921},
  year={2008}
} 

@article{browning2010modelling,
  title={Modelling income processes with lots of heterogeneity},
  author={Browning, Martin and Ejrnaes, Mette and Alvarez, Javier},
  journal={The Review of Economic Studies},
  volume={77},
  number={4},
  pages={1353--1381},
  year={2010},
  publisher={Wiley-Blackwell}
}
 
@book{guide2006infinite,
	title={Infinite dimensional analysis: A hitchhiker's guide},
	author={Aliprantis, Charalambos D. and Border, Kim C.},
	year={2006},
	publisher={Springer}
} 
 
@article{benhabib2015wealth,
  title={The wealth distribution in Bewley economies with capital income risk},
  author={Benhabib, Jess and Bisin, Alberto and Zhu, Shenghao},
  journal={Journal of Economic Theory},
  volume={159},
  pages={489--515},
  year={2015},
  publisher={Elsevier}
}

@article{blackwell1965discounted,
  title={Discounted dynamic programming},
  author={Blackwell, David},
  journal={The Annals of Mathematical Statistics},
  volume={36},
  number={1},
  pages={226--235},
  year={1965},
  publisher={JSTOR}
}

@article{carroll2017distribution,
  title={The distribution of wealth and the marginal propensity to consume},
  author={Carroll, Christopher and Slacalek, Jiri and Tokuoka, Kiichi and White, Matthew N},
  journal={Quantitative Economics},
  volume={8},
  number={3},
  pages={977--1020},
  year={2017},
  publisher={Wiley Online Library}
}

@article{debacker2013rising,
  title={Rising inequality: transitory or persistent? New evidence from a panel of US tax returns},
  author={DeBacker, Jason and Heim, Bradley and Panousi, Vasia and Ramnath, Shanthi and Vidangos, Ivan},
  journal={Brookings Papers on Economic Activity},
  volume={2013},
  number={1},
  pages={67--142},
  year={2013},
  publisher={Brookings Institution Press}
}

@techreport{fagereng2016heterogeneityb,
  title={Heterogeneity and persistence in returns to wealth},
  author={Fagereng, Andreas and Guiso, Luigi and Malacrino, Davide and Pistaferri, Luigi},
  year={2016b},
  institution={National Bureau of Economic Research}
}

@article{fagereng2016heterogeneity,
  title={Heterogeneity in returns to wealth and the measurement of wealth inequality},
  author={Fagereng, Andreas and Guiso, Luigi and Malacrino, Davide and Pistaferri, Luigi},
  journal={American Economic Review: Papers and Proceedings},
  volume={106},
  number={5},
  pages={651--655},
  year={2016a}
}

@book{hardy1952inequalities,
  title={Inequalities},
  author={Hardy, Godfrey Harold and Littlewood, John Edensor and P{\'o}lya, George},
  year={1952},
  publisher={Cambridge university press}
}

@article{heathcote2010macroeconomic,
  title={The macroeconomic implications of rising wage inequality in the United States},
  author={Heathcote, Jonathan and Storesletten, Kjetil and Violante, Giovanni L},
  journal={Journal of Political Economy},
  volume={118},
  number={4},
  pages={681--722},
  year={2010},
  publisher={University of Chicago Press Chicago, IL}
}

@article{huggett2011sources,
  title={Sources of lifetime inequality},
  author={Huggett, Mark and Ventura, Gustavo and Yaron, Amir},
  journal={American Economic Review},
  volume={101},
  number={7},
  pages={2923--54},
  year={2011}
}

@article{kamihigashi2014stochastic,
  title={Stochastic stability in monotone economies},
  author={Kamihigashi, Takashi and Stachurski, John},
  journal={Theoretical Economics},
  volume={9},
  number={2},
  pages={383--407},
  year={2014},
  publisher={Wiley Online Library}
}

@article{kamihigashi2016seeking,
	title={Seeking ergodicity in dynamic economies},
	author={Kamihigashi, Takashi and Stachurski, John},
	journal={Journal of Economic Theory},
	volume={163},
	pages={900--924},
	year={2016},
	publisher={Elsevier}
}

@article{kaplan2012inequality,
  title={Inequality and the life cycle},
  author={Kaplan, Greg},
  journal={Quantitative Economics},
  volume={3},
  number={3},
  pages={471--525},
  year={2012},
  publisher={Wiley Online Library}
}

@article{li2014solving,
  title={Solving the income fluctuation problem with unbounded rewards},
  author={Li, Huiyu and Stachurski, John},
  journal={Journal of Economic Dynamics and Control},
  volume={45},
  pages={353--365},
  year={2014},
  publisher={Elsevier}
}

@book{meyn2009markov,
  title={Markov Chains and Stochastic Stability},
  author={Meyn, Sean P and Tweedie, Richard L},
  year={2009},
  publisher={Springer Science \& Business Media}
}

@article{rabault2002borrowing,
  title={When do borrowing constraints bind? Some new results on the income fluctuation problem},
  author={Rabault, Guillaume},
  journal={Journal of Economic Dynamics and Control},
  volume={26},
  number={2},
  pages={217--245},
  year={2002},
  publisher={Elsevier}
}

@article{roitershtein2007one,
	title={One-dimensional linear recursions with Markov-dependent coefficients},
	author={Roitershtein, Alexander},
	journal={The Annals of Applied Probability},
	volume={17},
	number={2},
	pages={572--608},
	year={2007},
	publisher={Institute of Mathematical Statistics}
}

@article{tauchen1991quadrature,
  title={Quadrature-based methods for obtaining approximate solutions to nonlinear asset pricing models},
  author={Tauchen, George and Hussey, Robert},
  journal={Econometrica},
  pages={371--396},
  year={1991},
  publisher={JSTOR}
}

@article{alagoz2004optimal,
  title={The optimal timing of living-donor liver transplantation},
  author={Alagoz, Oguzhan and Maillart, Lisa M and Schaefer, Andrew J and Roberts, Mark S},
  journal={Management Science},
  volume={50},
  number={10},
  pages={1420--1430},
  year={2004},
  publisher={INFORMS}
}

@article{albright1977bayesian,
  title={A Bayesian approach to a generalized house selling problem},
  author={Albright, S Christian},
  journal={Management Science},
  volume={24},
  number={4},
  pages={432--440},
  year={1977},
  publisher={INFORMS}
}

@article{alfaro2009optimal,
  title={Optimal reserve management and sovereign debt},
  author={Alfaro, Laura and Kanczuk, Fabio},
  journal={Journal of International Economics},
  volume={77},
  number={1},
  pages={23--36},
  year={2009},
  publisher={Elsevier}
}

@article{alvarez2014real,
  title={A real options perspective on the future of the Euro},
  author={Alvarez, Fernando and Dixit, Avinash},
  journal={Journal of Monetary Economics},
  volume={61},
  pages={78--109},
  year={2014},
  publisher={Elsevier}
}

@article{alvarez1998dynamic,
  title={Dynamic programming with homogeneous functions},
  author={Alvarez, Fernando and Stokey, Nancy L},
  journal={Journal of Economic Theory},
  volume={82},
  number={1},
  pages={167--189},
  year={1998},
  publisher={Elsevier}
}

@article{angelini2008evolution,
  title={On the evolution of firm size distributions},
  author={Angelini, Paolo and Generale, Andrea},
  journal={The American Economic Review},
  volume={98},
  number={1},
  pages={426--438},
  year={2008},
  publisher={American Economic Association}
}

@article{arellano2012default,
  title={Default and the maturity structure in sovereign bonds},
  author={Arellano, Cristina and Ramanarayanan, Ananth},
  journal={Journal of Political Economy},
  volume={120},
  number={2},
  pages={187--232},
  year={2012},
  publisher={University of Chicago Press Chicago, IL}
}

@article{asplund2006firm,
  title={Firm turnover in imperfectly competitive markets},
  author={Asplund, Marcus and Nocke, Volker},
  journal={The Review of Economic Studies},
  volume={73},
  number={2},
  pages={295--327},
  year={2006},
  publisher={Oxford University Press}
}

@article{backus2014discussion,
  title={Discussion of Alvarez and Dixit: A real options perspective on the Euro},
  author={Backus, David},
  journal={Journal of Monetary Economics},
  volume={61},
  pages={110--113},
  year={2014},
  publisher={Elsevier}
}

@article{bagger2014tenure,
  title={Tenure, experience, human capital, and wages: A tractable equilibrium search model of wage dynamics},
  author={Bagger, Jesper and Fontaine, Fran{\c{c}}ois and Postel-Vinay, Fabien and Robin, Jean-Marc},
  journal={The American Economic Review},
  volume={104},
  number={6},
  pages={1551--1596},
  year={2014},
  publisher={American Economic Association}
}

@article{bai2012financial,
  title={Financial integration and international risk sharing},
  author={Bai, Yan and Zhang, Jing},
  journal={Journal of International Economics},
  volume={86},
  number={1},
  pages={17--32},
  year={2012},
  publisher={Elsevier}
}

@book{becker1997capital,
  title={Capital Theory, Equilibrium Analysis, and Recursive Utility},
  author={Becker, Robert A and Boyd, John Harvey},
  year={1997},
  publisher={Wiley-Blackwell}
}

@article{bental1996accumulation,
  title={The accumulation of wealth and the cyclical generation of new technologies: A search theoretic approach},
  author={Bental, Benjamin and Peled, Dan},
  journal={International Economic Review},
  volume={37},
  number={3},
  pages={687--718},
  year={1996},
  publisher={JSTOR}
}

@article{bental2002quantitative,
  title={Quantitative growth effects of subsidies in a search theoretic R\&D model},
  author={Bental, Benjamin and Peled, Dan},
  journal={Journal of Evolutionary Economics},
  volume={12},
  number={4},
  pages={397--423},
  year={2002},
  publisher={Springer}
}

@article{burdett1997marriage,
  title={Marriage and class},
  author={Burdett, Ken and Coles, Melvyn G},
  journal={The Quarterly Journal of Economics},
  volume={112},
  number={1},
  pages={141--168},
  year={1997},
  publisher={Oxford University Press}
}

@article{burdett1999long,
  title={Long-term partnership formation: Marriage and employment},
  author={Burdett, Kenneth and Coles, Melvyn G},
  journal={The Economic Journal},
  volume={109},
  number={456},
  pages={307--334},
  year={1999},
  publisher={Wiley Online Library}
}

@book{bertsekas1976,
  title={Dynamic Programming and Stochastic Control},
  author={Bertsekas, Dimitri P},
  year={1976},
  publisher={Academic Press}
}

@article{bertsekas2012,
  title={Weighted sup-norm contractions in dynamic programming: A review and some new applications},
  author={Bertsekas, Dimitri P},
  journal={Dept. Elect. Eng. Comput. Sci., Massachusetts Inst. Technol., Cambridge, MA, USA, Tech. Rep. LIDS-P-2884},
  year={2012}
}

@article{boud1990recursive,
  title={Recursive utility and the Ramsey problem},
  author={Boyd, John H},
  journal={Journal of Economic Theory},
  volume={50},
  number={2},
  pages={326--345},
  year={1990},
  publisher={Elsevier}
}

@article{bruze2014dynamics,
  title={The dynamics of marriage and divorce},
  author={Bruze, Gustaf and Svarer, Michael and Weiss, Yoram},
  journal={Journal of Labor Economics},
  volume={33},
  number={1},
  pages={123--170},
  year={2014},
  publisher={University of Chicago Press Chicago, IL}
}

@article{bull1988mismatch,
  title={Mismatch versus derived-demand shift as causes of labour mobility},
  author={Bull, Clive and Jovanovic, Boyan},
  journal={The Review of Economic Studies},
  volume={55},
  number={1},
  pages={169--175},
  year={1988},
  publisher={Oxford University Press}
}

@article{burdett1988declining,
  title={Declining reservation wages and learning},
  author={Burdett, Kenneth and Vishwanath, Tara},
  journal={The Review of Economic Studies},
  volume={55},
  number={4},
  pages={655--665},
  year={1988},
  publisher={Oxford University Press}
}

@article{cabral2003evolution,
  title={On the evolution of the firm size distribution: Facts and theory},
  author={Cabral, Luis and Mata, Jose},
  journal={The American Economic Review},
  volume={93},
  number={4},
  pages={1075--1090},
  year={2003},
  publisher={American Economic Association}
}

@article{chalkley1984adaptive,
  title={Adaptive job search and null offers: A model of quantity constrained search},
  author={Chalkley, Martin},
  journal={The Economic Journal},
  volume={94},
  pages={148--157},
  year={1984},
  publisher={JSTOR}
}

@article{chatterjee2012spinoffs,
  title={Spinoffs and the Market for Ideas},
  author={Chatterjee, Satyajit and Rossi-Hansberg, Esteban},
  journal={International Economic Review},
  volume={53},
  number={1},
  pages={53--93},
  year={2012},
  publisher={Wiley Online Library}
}

@article{chetty2007interest,
  title={Interest rates, irreversibility, and backward-bending investment},
  author={Chetty, Raj},
  journal={The Review of Economic Studies},
  volume={74},
  number={1},
  pages={67--91},
  year={2007},
  publisher={Oxford University Press}
}

@article{cocsar2016firm,
  title={Firm dynamics, job turnover, and wage distributions in an open economy},
  author={Co{\c{s}}ar, A Kerem and Guner, Nezih and Tybout, James},
  journal={The American Economic Review},
  volume={106},
  number={3},
  pages={625--663},
  year={2016},
  publisher={American Economic Association}
}

@article{cogley2005drifts,
  title={Drifts and volatilities: monetary policies and outcomes in the post WWII US},
  author={Cogley, Timothy and Sargent, Thomas J},
  journal={Review of Economic dynamics},
  volume={8},
  number={2},
  pages={262--302},
  year={2005},
  publisher={Elsevier}
}

@article{coles2011emergence,
  title={On the emergence of toyboys: The timing of marriage with aging and uncertain careers},
  author={Coles, Melvyn G and Francesconi, Marco},
  journal={International Economic Review},
  volume={52},
  number={3},
  pages={825--853},
  year={2011},
  publisher={Wiley Online Library}
}

@article{cooper2007search,
  title={Search frictions: Matching aggregate and establishment observations},
  author={Cooper, Russell and Haltiwanger, John and Willis, Jonathan L},
  journal={Journal of Monetary Economics},
  volume={54},
  pages={56--78},
  year={2007},
  publisher={Elsevier}
}

@article{crawford2005uncertainty,
  title={Uncertainty and learning in pharmaceutical demand},
  author={Crawford, Gregory S and Shum, Matthew},
  journal={Econometrica},
  volume={73},
  number={4},
  pages={1137--1173},
  year={2005},
  publisher={Wiley Online Library}
}

@book{degroot2005,
  title={Optimal Statistical Decisions},
  author={DeGroot, Morris H},
  volume={82},
  year={2005},
  publisher={John Wiley \& Sons}
}

@article{dinlersoz2012information,
  title={Information and industry dynamics},
  author={Dinlersoz, Emin M and Yorukoglu, Mehmet},
  journal={The American Economic Review},
  volume={102},
  number={2},
  pages={884--913},
  year={2012},
  publisher={American Economic Association}
}

@book{dixit1994investment,
  title={Investment Under Uncertainty},
  author={Dixit, Avinash K and Pindyck, Robert S},
  year={1994},
  publisher={Princeton University Press}
}

@book{duffie2010dynamic,
  title={Dynamic Asset Pricing Theory},
  author={Duffie, Darrell},
  year={2010},
  publisher={Princeton University Press}
}

@article{dunne2013entry,
  title={Entry, exit, and the determinants of market structure},
  author={Dunne, Timothy and Klimek, Shawn D and Roberts, Mark J and Xu, Daniel Yi},
  journal={The RAND Journal of Economics},
  volume={44},
  number={3},
  pages={462--487},
  year={2013},
  publisher={Wiley Online Library}
}

@article{duran2000dynamic,
  title={On dynamic programming with unbounded returns},
  author={Dur{\'a}n, Jorge},
  journal={Economic Theory},
  volume={15},
  number={2},
  pages={339--352},
  year={2000},
  publisher={Springer}
}

@article{duran2003discounting,
  title={Discounting long run average growth in stochastic dynamic programs},
  author={Dur{\'a}n, Jorge},
  journal={Economic Theory},
  volume={22},
  number={2},
  pages={395--413},
  year={2003},
  publisher={Springer}
}

@article{pakes1998empirical,
  title={Empirical implications of alternative models of firm dynamics},
  author={Pakes, Ariel and Ericson, Richard},
  journal={Journal of Economic Theory},
  volume={79},
  number={1},
  pages={1--45},
  year={1998},
  publisher={Elsevier}
}

@article{fajgelbaum2015uncertainty,
  title={Uncertainty traps},
  author={Fajgelbaum, Pablo and Schaal, Edouard and Taschereau-Dumouchel, Mathieu},
  year={2017},
  journal={Forthcoming, The Quarterly Journal of Economics}
}

@article{feinberg2012average,
  title={Average cost Markov decision processes with weakly continuous transition probabilities},
  author={Feinberg, Eugene A and Kasyanov, Pavlo O and Zadoianchuk, Nina V},
  journal={Mathematics of Operations Research},
  volume={37},
  number={4},
  pages={591--607},
  year={2012},
  publisher={INFORMS}
}

@article{feinberg2014fatou,
  title={Fatou's lemma for weakly converging probabilities},
  author={Feinberg, Eugene A and Kasyanov, Pavlo O and Zadoianchuk, Nina V},
  journal={Theory of Probability \& Its Applications},
  volume={58},
  number={4},
  pages={683--689},
  year={2014},
  publisher={SIAM}
}

@article{gomes2001equilibrium,
  title={Equilibrium unemployment},
  author={Gomes, Joao and Greenwood, Jeremy and Rebelo, Sergio},
  journal={Journal of Monetary Economics},
  volume={48},
  number={1},
  pages={109--152},
  year={2001},
  publisher={Elsevier}
}

@article{rocheteau2005money,
  title={Money in search equilibrium, in competitive equilibrium, and in competitive search equilibrium},
  author={Rocheteau, Guillaume and Wright, Randall},
  journal={Econometrica},
  volume={73},
  number={1},
  pages={175--202},
  year={2005},
  publisher={Wiley Online Library}
}

@article{hatchondo2016debt,
  title={Debt dilution and sovereign default risk},
  author={Hatchondo, Juan Carlos and Martinez, Leonardo and Sosa-Padilla, Cesar},
  journal={Journal of Political Economy},
  volume={124},
  number={5},
  pages={1383--1422},
  year={2016},
  publisher={University of Chicago Press Chicago, IL}
}

@article{howard2002transplant,
  title={Why do transplant surgeons turn down organs?: A model of the accept/reject decision},
  author={Howard, David H},
  journal={Journal of Health Economics},
  volume={21},
  number={6},
  pages={957--969},
  year={2002},
  publisher={Elsevier}
}

@article{insley2010contrasting,
  title={Contrasting two approaches in real options valuation: contingent claims versus dynamic programming},
  author={Insley, Margaret C and Wirjanto, Tony S},
  journal={Journal of Forest Economics},
  volume={16},
  number={2},
  pages={157--176},
  year={2010},
  publisher={Elsevier}
}

@article{jovanovic1982selection,
  title={Selection and the evolution of industry},
  author={Jovanovic, Boyan},
  journal={Econometrica},
  pages={649--670},
  year={1982},
  publisher={JSTOR}
}

@article{jovanovic1987work,
  title={Work, rest, and search: unemployment, turnover, and the cycle},
  author={Jovanovic, Boyan},
  journal={Journal of Labor Economics},
  pages={131--148},
  year={1987},
  publisher={JSTOR}
}

@article{jovanovic1989growth,
  title={The growth and diffusion of knowledge},
  author={Jovanovic, Boyan and Rob, Rafael},
  journal={The Review of Economic Studies},
  volume={56},
  number={4},
  pages={569--582},
  year={1989},
  publisher={Oxford University Press}
}

@article{kambourov2009occupational,
  title={Occupational mobility and wage inequality},
  author={Kambourov, Gueorgui and Manovskii, Iourii},
  journal={The Review of Economic Studies},
  volume={76},
  number={2},
  pages={731--759},
  year={2009},
  publisher={Oxford University Press}
}

@article{kaplan2010much,
  title={How much consumption insurance beyond self-insurance?},
  author={Kaplan, Greg and Violante, Giovanni L},
  journal={American Economic Journal: Macroeconomics},
  volume={2},
  number={4},
  pages={53--87},
  year={2010},
  publisher={American Economic Association}
}

@book{karatzas1998methods,
  title={Methods of Mathematical Finance},
  author={Karatzas, Ioannis and Shreve, Steven E},
  volume={39},
  year={1998},
  publisher={Springer Science \& Business Media}
}

@article{kellogg2014effect,
  title={The effect of uncertainty on investment: evidence from Texas oil drilling},
  author={Kellogg, Ryan},
  journal={The American Economic Review},
  volume={104},
  number={6},
  pages={1698--1734},
  year={2014},
  publisher={American Economic Association}
}

@article{kiyotaki1989money,
  title={On money as a medium of exchange},
  author={Kiyotaki, Nobuhiro and Wright, Randall},
  journal={The Journal of Political Economy},
  pages={927--954},
  year={1989},
  publisher={JSTOR}
}

@article{kiyotaki1991contribution,
  title={A contribution to the pure theory of money},
  author={Kiyotaki, Nobuhiro and Wright, Randall},
  journal={Journal of Economic Theory},
  volume={53},
  number={2},
  pages={215--235},
  year={1991},
  publisher={Elsevier}
}

@article{kiyotaki1993search,
  title={A search-theoretic approach to monetary economics},
  author={Kiyotaki, Nobuhiro and Wright, Randall},
  journal={The American Economic Review},
  pages={63--77},
  year={1993},
  publisher={JSTOR}
}

@article{le2005recursive,
  title={Recursive utility and optimal growth with bounded or unbounded returns},
  author={Le Van, Cuong and Vailakis, Yiannis},
  journal={Journal of Economic Theory},
  volume={123},
  number={2},
  pages={187--209},
  year={2005},
  publisher={Elsevier}
}

@article{lise2012job,
  title={On-the-job search and precautionary savings},
  author={Lise, Jeremy},
  journal={The Review of Economic Studies},
  volume={80},
  pages={1086--1113},
  year={2013},
  publisher={Oxford University Press}
}

@book{ljungqvist2012recursive,
  title={Recursive Macroeconomic Theory},
  author={Ljungqvist, Lars and Sargent, Thomas J},
  year={2012},
  publisher={MIT Press}
}

@article{ljungqvist2008two,
  title={Two questions about European unemployment},
  author={Ljungqvist, Lars and Sargent, Thomas J},
  journal={Econometrica},
  volume={76},
  number={1},
  pages={1--29},
  year={2008},
  publisher={Wiley Online Library}
}

@article{low2010wage,
  title={Wage risk and employment risk over the life cycle},
  author={Low, Hamish and Meghir, Costas and Pistaferri, Luigi},
  journal={The American Economic Review},
  volume={100},
  number={4},
  pages={1432--1467},
  year={2010},
  publisher={American Economic Association}
}

@article{lucas1974equilibrium,
  title={Equilibrium search and unemployment},
  author={Lucas, Robert E and Prescott, Edward C},
  journal={Journal of Economic Theory},
  volume={7},
  number={2},
  pages={188--209},
  year={1974},
  publisher={Academic Press}
}

@article{luttmer2007selection,
  title={Selection, growth, and the size distribution of firms},
  author={Luttmer, Erzo GJ},
  journal={The Quarterly Journal of Economics},
  volume={122},
  number={3},
  pages={1103--1144},
  year={2007},
  publisher={Oxford University Press}
}

@misc{ma2017optimal,
  title = {Optimal timing of decisions: a general theory based on continuation values},
  author={Ma, Qingyin and Stachurski, John},
  year={2017},
  howpublished = {ArXiv Working Paper: 
  \url{https://arxiv.org/abs/1703.09832}},
}

@article{menzio2015equilibrium,
  title={Equilibrium price dispersion with sequential search},
  author={Menzio, Guido and Trachter, Nicholas},
  journal={Journal of Economic Theory},
  volume={160},
  pages={188--215},
  year={2015},
  publisher={Elsevier}
}

@article{michael1956continuous,
  title={Continuous selections. I},
  author={Michael, Ernest},
  journal={Annals of Mathematics},
  pages={361--382},
  year={1956},
  publisher={JSTOR}
}

@article{marinacci2010unique,
  title={Unique solutions for stochastic recursive utilities},
  author={Marinacci, Massimo and Montrucchio, Luigi},
  journal={Journal of Economic Theory},
  volume={145},
  number={5},
  pages={1776--1804},
  year={2010},
  publisher={Elsevier}
}

@article{mendoza2012general,
  title={A general equilibrium model of sovereign default and business cycles},
  author={Mendoza, Enrique G and Yue, Vivian Z},
  journal={The Quarterly Journal of Economics},
  volume={127},
  pages={889--946},
  year={2012},
  publisher={Oxford University Press}
}

@article{mitchell2000scope,
  title={The scope and organization of production: firm dynamics over the learning curve},
  author={Mitchell, Matthew F},
  journal={The Rand Journal of Economics},
  pages={180--205},
  year={2000},
  publisher={JSTOR}
}

@article{poschke2010regulation,
  title={The regulation of entry and aggregate productivity},
  author={Poschke, Markus},
  journal={The Economic Journal},
  volume={120},
  number={549},
  pages={1175--1200},
  year={2010},
  publisher={Wiley Online Library}
}

@article{martins2010existence,
  title={Existence and uniqueness of a fixed point for local contractions},
  author={Martins-da-Rocha, V Filipe and Vailakis, Yiannis},
  journal={Econometrica},
  volume={78},
  number={3},
  pages={1127--1141},
  year={2010},
  publisher={Wiley Online Library}
}

@article{matkowski2011discounted,
  title={On discounted dynamic programming with unbounded returns},
  author={Matkowski, Janusz and Nowak, Andrzej S},
  journal={Economic Theory},
  volume={46},
  number={3},
  pages={455--474},
  year={2011},
  publisher={Springer}
}

@article{mccall1970,
  title={Economics of information and job search},
  author={McCall, John Joseph},
  journal={The Quarterly Journal of Economics},
  pages={113--126},
  year={1970},
  volume={84},
  number={1},
  publisher={JSTOR}
}

@article{mcdonald1982value,
  title={The value of waiting to invest},
  author={McDonald, Robert L and Siegel, Daniel},
  year={1986},
  journal={The Quarterly Journal of Economics},
  volume={101},
  issue={4},
  pages={707--727}
}

@article{moscarini2013stochastic,
  title={Stochastic search equilibrium},
  author={Moscarini, Giuseppe and Postel-Vinay, Fabien},
  journal={The Review of Economic Studies},
  volume={80},
  pages={1545--1581},
  year={2013},
  publisher={Oxford University Press}
}

@article{nagypal2007learning,
  title={Learning by doing vs. learning about match quality: Can we tell them apart?},
  author={Nagyp{\'a}l, {\'E}va},
  journal={The Review of Economic Studies},
  volume={74},
  number={2},
  pages={537--566},
  year={2007},
  publisher={Oxford University Press}
}

@article{perla2014equilibrium,
  title={Equilibrium imitation and growth},
  author={Perla, Jesse and Tonetti, Christopher},
  journal={Journal of Political Economy},
  volume={122},
  number={1},
  pages={52--76},
  year={2014},
  publisher={JSTOR}
}

@book{peskir2006,
  title={Optimal Stopping and Free-boundary Problems},
  author={Peskir, Goran and Shiryaev, Albert},
  year={2006},
  publisher={Springer}
}

@book{porteus2002foundations,
  title={Foundations of Stochastic Inventory Theory},
  author={Porteus, Evan L},
  year={2002},
  publisher={Stanford University Press}
}

@article{pries2005hiring,
  title={Hiring policies, labor market institutions, and labor market flows},
  author={Pries, Michael and Rogerson, Richard},
  journal={Journal of Political Economy},
  volume={113},
  number={4},
  pages={811--839},
  year={2005},
  publisher={The University of Chicago Press}
}

@article{primiceri2005time,
  title={Time varying structural vector autoregressions and monetary policy},
  author={Primiceri, Giorgio E},
  journal={The Review of Economic Studies},
  volume={72},
  number={3},
  pages={821--852},
  year={2005},
  publisher={Oxford University Press}
}

@article{rendon2006job,
  title={Job search and asset accumulation under borrowing constraints},
  author={Rendon, Silvio},
  journal={International Economic Review},
  volume={47},
  number={1},
  pages={233--263},
  year={2006},
  publisher={Wiley Online Library}
}

@article{rincon2003existence,
  title={Existence and uniqueness of solutions to the Bellman equation in the unbounded case},
  author={Rinc{\'o}n-Zapatero, Juan Pablo and Rodr{\'\i}guez-Palmero, Carlos},
  journal={Econometrica},
  volume={71},
  number={5},
  pages={1519--1555},
  year={2003},
  publisher={Wiley Online Library}
}

@article{rincon2009corrigendum,
  title={Corrigendum to “Existence and uniqueness of solutions to the Bellman equation in the unbounded case” Econometrica, Vol. 71, No. 5 (September, 2003), 1519--1555},
  author={Rinc{\'o}n-Zapatero, Juan Pablo and Rodr{\'\i}guez-Palmero, Carlos},
  journal={Econometrica},
  volume={77},
  number={1},
  pages={317--318},
  year={2009},
  publisher={Wiley Online Library}
}

@article{robin2011dynamics,
  title={On the dynamics of unemployment and wage distributions},
  author={Robin, Jean-Marc},
  journal={Econometrica},
  volume={79},
  number={5},
  pages={1327--1355},
  year={2011},
  publisher={Wiley Online Library}
}

@article{rogerson2005search,
  title={Search-theoretic models of the labor market: A survey},
  author={Rogerson, Richard and Shimer, Robert and Wright, Randall},
  journal={Journal of Economic Literature},
  volume={43},
  number={4},
  pages={959--988},
  year={2005},
  publisher={American Economic Association}
}

@article{rosenfield1981optimal,
  title={Optimal adaptive price search},
  author={Rosenfield, Donald B and Shapiro, Roy D},
  journal={Journal of Economic Theory},
  volume={25},
  number={1},
  pages={1--20},
  year={1981},
  publisher={Elsevier}
}

@article{rothschild1974searching,
  title={Searching for the lowest price when the distribution of prices is unknown},
  author={Rothschild, Michael},
  journal={Journal of Political Economy},
  pages={689--711},
  year={1974},
  volume={82},
  number={4},
  publisher={JSTOR}
}

@article{santos2016not,
  title={“Why Not Settle Down Already?” A Quantitative Analysis of the Delay in Marriage},
  author={Santos, Cezar and Weiss, David},
  journal={International Economic Review},
  volume={57},
  number={2},
  pages={425--452},
  year={2016},
  publisher={Wiley Online Library}
}

@article{seierstad1992reservation,
  title={Reservation prices in optimal stopping},
  author={Seierstad, Atle},
  journal={Operations Research},
  volume={40},
  number={2},
  pages={409--415},
  year={1992},
  publisher={INFORMS}
}

@article{shi1995money,
  title={Money and prices: a model of search and bargaining},
  author={Shi, Shouyong},
  journal={Journal of Economic Theory},
  volume={67},
  number={2},
  pages={467--496},
  year={1995},
  publisher={Elsevier}
}

@article{shi1997divisible,
  title={A divisible search model of fiat money},
  author={Shi, Shouyong},
  journal={Econometrica},
  pages={75--102},
  year={1997},
  publisher={JSTOR}
}

@book{shiryaev1999essentials,
  title={Essentials of Stochastic Finance: Facts, Models, Theory},
  author={Shiryaev, Albert N},
  volume={3},
  year={1999},
  publisher={World scientific}
}

@book{shiryaev2007optimal,
  title={Optimal Stopping Rules},
  author={Shiryaev, Albert N},
  volume={8},
  year={2007},
  publisher={Springer Science \& Business Media}
}

@book{skiena2008algorithm,
  title={The Algorithm Design Manual},
  author={Skiena, Steven S},
  year={2008},
  publisher={Springer, London}
}

@book{stachurski2009economic,
  title={Economic Dynamics: Theory and Computation},
  author={Stachurski, John},
  year={2009},
  publisher={MIT Press}
}

@book{stokey1989,
  title={Recursive Methods in Economic Dynamics},
  author={Stokey, Nancy and Lucas, Robert and Prescott, Edward},
  year={1989},
  publisher={Harvard University Press}
}

@book{schwartz2004real,
  title={Real options and investment under uncertainty: classical readings and recent contributions},
  author={Schwartz, Eduardo S and Trigeorgis, Lenos},
  year={2004},
  publisher={MIT press}
}

@article{tauchen1986finite,
  title={Finite state markov-chain approximations to univariate and vector autoregressions},
  author={Tauchen, George},
  journal={Economics letters},
  volume={20},
  number={2},
  pages={177--181},
  year={1986},
  publisher={Elsevier}
}

@article{taylor1982financial,
  title={Financial returns modelled by the product of two stochastic processes--a study of the daily sugar prices 1961-75},
  author={Taylor, Stephen John},
  journal={Time Series Analysis: Theory and Practice},
  volume={1},
  pages={203--226},
  year={1982},
  publisher={North-Holland}
}

@article{timoshenko2015product,
  title={Product switching in a model of learning},
  author={Timoshenko, Olga A},
  journal={Journal of International Economics},
  volume={95},
  number={2},
  pages={233--249},
  year={2015},
  publisher={Elsevier}
}

@article{trejos1995search,
  title={Search, bargaining, money, and prices},
  author={Trejos, Alberto and Wright, Randall},
  journal={Journal of Political Economy},
  pages={118--141},
  year={1995},
  publisher={JSTOR}
}

@article{vereshchagina2009risk,
  title={Risk taking by entrepreneurs},
  author={Vereshchagina, Galina and Hopenhayn, Hugo A},
  journal={The American Economic Review},
  volume={99},
  number={5},
  pages={1808--1830},
  year={2009},
  publisher={American Economic Association}
}

@article{rogerson2005search,
  title={Search-theoretic models of the labor market: A survey},
  author={Rogerson, Richard and Shimer, Robert and Wright, Randall},
  journal={Journal of Economic Literature},
  volume={43},
  number={4},
  pages={959--988},
  year={2005},
  publisher={American Economic Association}
}

@article{rust1986optimal,
  title={When is it optimal to kill off the market for used durable goods?},
  author={Rust, John},
  journal={Econometrica},
  pages={65--86},
  year={1986},
  publisher={JSTOR}
}

@article{chatterjee2012maturity,
  title={Maturity, indebtedness, and default risk},
  author={Chatterjee, Satyajit and Eyigungor, Burcu},
  journal={The American Economic Review},
  volume={102},
  number={6},
  pages={2674--2699},
  year={2012},
  publisher={American Economic Association}
}

@article{choi2003optimal,
  title={Optimal defaults},
  author={Choi, James J and Laibson, David and Madrian, Brigitte C and Metrick, Andrew},
  journal={The American Economic Review},
  volume={93},
  number={2},
  pages={180--185},
  year={2003},
  publisher={JSTOR}
}

@article{arellano2008default,
  title={Default risk and income fluctuations in emerging economies},
  author={Arellano, Cristina},
  journal={The American Economic Review},
  volume={98},
  number={3},
  pages={690--712},
  year={2008},
  publisher={American Economic Association}
}

@article{burdett1983equilibrium,
  title={Equilibrium price dispersion},
  author={Burdett, Kenneth and Judd, Kenneth L},
  journal={Econometrica},
  pages={955--969},
  year={1983},
  publisher={JSTOR}
}

@article{rust1987optimal,
  title={Optimal replacement of GMC bus engines: An empirical model of Harold Zurcher},
  author={Rust, John},
  journal={Econometrica},
  pages={999--1033},
  year={1987},
  publisher={JSTOR}
}

@article{huggett2011sources,
  title={Sources of lifetime inequality},
  author={Huggett, Mark and Ventura, Gustavo and Yaron, Amir},
  journal={The American Economic Review},
  volume={101},
  number={7},
  pages={2923--2954},
  year={2011},
  publisher={American Economic Association}
}

@book{pissarides2000equilibrium,
  title={Equilibrium Unemployment Theory},
  author={Pissarides, Christopher A},
  year={2000},
  publisher={MIT press}
}

@article{rust1997using,
  title={Using randomization to break the curse of dimensionality},
  author={Rust, John},
  journal={Econometrica},
  pages={487--516},
  year={1997},
  publisher={JSTOR}
}

@article{bellman1969new,
  title={A new type of approximation leading to reduction of dimensionality in control processes},
  author={Bellman, Richard},
  journal={Journal of Mathematical Analysis and Applications},
  volume={27},
  number={2},
  pages={454--459},
  year={1969},
  publisher={Elsevier}
}

@article{albuquerque2004optimal,
  title={Optimal lending contracts and firm dynamics},
  author={Albuquerque, Rui and Hopenhayn, Hugo A},
  journal={The Review of Economic Studies},
  volume={71},
  number={2},
  pages={285--315},
  year={2004},
  publisher={Oxford University Press}
}

@article{hopenhayn1992entry,
  title={Entry, exit, and firm dynamics in long run equilibrium},
  author={Hopenhayn, Hugo A},
  journal={Econometrica},
  pages={1127--1150},
  year={1992},
  publisher={JSTOR}
}

@article{ericson1995markov,
  title={Markov-perfect industry dynamics: A framework for empirical work},
  author={Ericson, Richard and Pakes, Ariel},
  journal={The Review of Economic Studies},
  volume={62},
  number={1},
  pages={53--82},
  year={1995},
  publisher={Oxford University Press}
}

\end{filecontents}

\end{document}